\newif\ifllncs\llncsfalse
\newif\ifanon\anonfalse
\definecolor{DarkBlue}{RGB}{0,0,150}
\definecolor{NotSoDarkBlue}{RGB}{15,15,210}
\definecolor{DarkRed}{RGB}{150,0,0}
\definecolor{DarkGreen}{RGB}{0,100,0}
\def\dist{\mathsf{dist}}
\newcommand{\RR}{\mathbb{R}}
\newcommand{\ZZ}{\mathbb{Z}}
\DeclareMathOperator{\poly}{poly}
\DeclarePairedDelimiter{\norm}\lVert\rVert
\newcommand{\rvline}{\hspace*{-\arraycolsep}\vline\hspace*{-\arraycolsep}}
\newcommand{\condqed}{\ifllncs \qed \else \fi}
\newcommand{\condparagraph}[1]{\ifllncs \subsubsection{#1} \else \paragraph{#1} \fi}
\newtheorem{theorem}{Theorem}
\newtheorem{conjecture}{Conjecture}
\newtheorem{lemma}[theorem]{Lemma}
\newtheorem{corollary}[theorem]{Corollary}
\newtheorem{definition}[theorem]{Definition}
\numberwithin{theorem}{section}
\numberwithin{conjecture}{section}
\numberwithin{problem}{section}
\newtheorem{maintheorem}[theorem]{Main Theorem}
\newif\ifnotes
\author{Seyoon Ragavan\inst{1}\orcidlink{0009-0007-9628-2258}\and Vinod Vaikuntanathan\inst{1}\orcidlink{0000-0002-2666-0045}}
\institute{MIT CSAIL}
\ifanon\author{Anonymous Submission to CRYPTO 2024}\institute{}\fi 
\titlerunning{Layout Graphs, Random Walks and the $t$-wise Independence of SPN}
\authorrunning{Liu, Pelecanos, Tessaro and Vaikuntanathan}
\date{}
\author{
Seyoon Ragavan
\\ MIT \\ \texttt{sragavan@mit.edu}
\and
Vinod Vaikuntanathan
\\ MIT \\ \texttt{vinodv@mit.edu}
}
\title{Space-Efficient and Noise-Robust\\ Quantum Factoring}
\date{\today} \fi
\begin{document}
\maketitle

\begin{abstract}
   We provide two improvements to Regev's quantum factoring algorithm (Journal of the ACM 2025), addressing its space efficiency and its noise-tolerance. 
    
   Our first contribution is to improve the quantum space efficiency of Regev's algorithm while keeping the circuit size the same. Our main result  constructs a quantum factoring circuit using $O(n \log n)$ qubits and $O(n^{3/2} \log n)$ gates. We achieve the best of Shor and Regev (upto a logarithmic factor in the space complexity): on the one hand, Regev's circuit requires $O(n^{3/2})$ qubits and $O(n^{3/2} \log n)$ gates, while Shor's circuit requires $O(n^2 \log n)$ gates but only $O(n \log n)$ qubits. As with Regev, to factor an $n$-bit integer $N$, we run our circuit independently $O(\sqrt{n})$ times and apply Regev's classical postprocessing procedure. 

    Our optimization is achieved by implementing efficient and reversible exponentiation with Fibonacci numbers in the exponent, rather than the usual powers of 2, adapting work by Kaliski (arXiv:1711.02491) from the classical reversible setting to the quantum setting. 
    This technique also allows us to perform quantum modular exponentiation that is efficient in both space and size without requiring significant precomputation, a result that may be useful for other quantum algorithms. A key ingredient of our exponentiation implementation is an efficient circuit for a function resembling \emph{in-place} quantum-quantum modular multiplication. This implementation works with only black-box access to any quantum circuit for \emph{out-of-place} modular multiplication, which we believe is yet another result of potentially broader interest.

    Our second contribution is to show that Regev's classical postprocessing procedure can be modified to tolerate a constant fraction of the quantum circuit runs being corrupted by errors. In contrast, Regev's analysis of his classical postprocessing procedure requires all $O(\sqrt{n})$ runs to be successful. In a nutshell, we achieve this using lattice reduction techniques to detect and filter out corrupt samples.
\end{abstract}

\thispagestyle{empty}
\newpage
\thispagestyle{empty}
\tableofcontents
\newpage 
\pagenumbering{arabic}


\section{Introduction} 
\label{sec:intro}


Shor's landmark result from 1994~\cite{shor97} showed us how to factor numbers in quantum polynomial time. In particular, he constructed an $O(n^{2+\epsilon})$-size quantum circuit that uses $O(n)$ qubits (including ancilla qubits\footnote{For ease of exposition, we assume in the first part of this introduction that quantum factoring circuits are using the multiplication circuit by~\cite{kahanamokumeyer2024fast} which uses $O_\epsilon(n^{1+\epsilon})$ gates and $o(n)$ ancilla qubits.}), such that $O(1)$ runs of his circuit followed by polynomial-time classical postprocessing suffices to factor an $n$-bit integer. A number of followup works~\cite{beckman,vedral,seifert2001using,Cop02,CleveW00,beauregard,takahashi,zalka2006shors,DBLP:conf/pqcrypto/EkeraH17,gidney2017factoring,hrs17,gidney2019,DBLP:journals/quantum/GidneyE21,kahanamokumeyer2024fast} improved the Shor circuit in several ways, for example by constructing a log-depth implementation~\cite{CleveW00} or reducing the number of qubits to a mere $\approx 1.5n$~\cite{zalka2006shors}.

Polynomial improvements to the size of Shor's circuit remained elusive for nearly three decades, until Regev~\cite{Regev23} recently demonstrated an $O(n^{1.5+\epsilon})$-size quantum circuit for factoring. Regev shows how to factor $n$-bit integers using $\approx \sqrt{n}$ (parallel) runs of his factoring circuit. While the total number of operations remains the same, the hope is that the smaller Regev circuit (or a future modification of it) is easier to build and that it resists decoherence noise better. 

However, Regev's factoring algorithm (that is, his quantum circuit together with the classical postprocessing algorithm) has two key limitations, the first being its space (in)efficiency and the second its noise (in)tolerance.

In a nutshell, our work provides algorithms addressing both of these problems. Both are simple and independent plug-and-play modifications to Regev's algorithm~\cite{Regev23}. We reduce the space complexity to $O(n)$ by modifying the modular arithmetic part of Regev's quantum circuit, and we improve the tolerance to quantum errors by only modifying Regev's classical postprocessing procedure. We next discuss each of these problems in some more detail, and then outline our techniques and results.


\condparagraph{Space Complexity.}
A key efficiency consideration in the construction of quantum circuits is their space complexity, namely, the number of qubits including the ancilla qubits used by the circuit. (For more discussion on the importance of reducing space complexity in certain quantum computing architectures, e.g. superconducting architectures, see \cite{DBLP:journals/quantum/GidneyE21,gidneycomment}.) The number of qubits used in the original Shor circuit was $O(n)$. Several works~\cite{beckman, vedral, beauregard,takahashi,zalka2006shors,gidney2017factoring,hrs17,gidney2019, kahanamokumeyer2024fast} optimized the space complexity down to $\approx 1.5n$ qubits. However, none of these optimizations seems to apply to Regev's circuit whose space complexity stands at $O(n^{1.5})$.

To explain why, we start by reminding the reader that a key step in Shor's algorithm is the computation of the map $\ket{x} \mapsto \ket{a^x \bmod{N}}$ in superposition, where $a$ is a fixed base. The fact that $a$ is fixed allows one to precompute its powers $a, a^2,\ldots,a^{2^j},\ldots$ classically.
Once this is done, exponentiation in superposition can be done with space $O(n)$ using the multiplier by~\cite{kahanamokumeyer2024fast} and multiplying together the appropriate subset of the precomputed values. However, using this technique appears to prevent us from improving the size of the circuit beyond the $O(n^2)$ barrier essentially realized by Shor's algorithm; in the case of Regev's algorithm, we would have to precompute $a_i^{2^j} \bmod N$ for $\Theta(\sqrt{n})$ values of $i$ and $\Theta(\sqrt{n})$ values of $j$. This is hence $\Theta(\sqrt{n} \cdot \sqrt{n} \cdot n) = \Theta(n^2)$ bits of precomputed information; a quantum circuit using all of these values would hence be expected to require $\Omega(n^2)$ gates. 

Instead, if one tries to implement the map $\ket{x}\ket{a} \mapsto \ket{x}\ket{a^x \bmod{N}}$ via the classic square-and-multiply algorithm, one runs into the conundrum of implementing the squaring circuit modulo $N$, as observed by Eker{\aa} and Gidney~\cite{gidneycomment}. On the one hand, doing this reversibly and in-place seems as hard as factoring $N$. On the other hand, writing the squared result to a new register consumes extra space; indeed, this is the source of the $O(n^{3/2})$ space requirement for Regev's circuit \cite{Regev23}.
This state of affairs raises a natural question:
    \begin{quote}
    \begin{center}
      {\em Can we achieve the best of Shor and Regev?}
      \end{center}
    \end{quote} 
\noindent
Concretely, can we construct a quantum factoring circuit with $O(n^{1.5+\epsilon})$ gates and $O(n)$ ancilla qubits?

\condparagraph{Error Tolerance.} Quantum decoherence noise is a central and ubiquitous obstacle to realizing quantum circuits in practice that achieve functionality that we cannot already perform classically~\cite{martinis2019, DBLP:journals/quantum/GidneyE21, sitannisq, cai23}. For the specific case of factoring, it appears that the associated error correction procedures are both necessary and costly. Cai~\cite{cai23} shows that Shor's algorithm breaks down in the presence of uncorrected noise, even if one restricts to very small and structured noise in only the final QFT part of the circuit. Additionally, Gidney and Eker{\aa}~\cite{DBLP:journals/quantum/GidneyE21} analyze the concrete cost of factoring 2048-bit integers using Shor's algorithm while correcting for quantum errors throughout, and show that even for carefully chosen parameters, this would take 8 hours with 20 million physical qubits, a significant overhead relative to the costs of running Shor's circuit in the absence of errors.

How much error correction does one need to factor? In the case of Shor's algorithm~\cite{Shor94}, a circuit with $O(n^{2+\epsilon})$ gates is run $O(1)$ times. For all of these runs to work, the error probability per logical gate would have to be brought down to $O(1/n^{2+\epsilon})$. While it initially appears that the situation with Regev~\cite{Regev23} would be better because of the smaller circuit, this is actually not the case upon a closer look. Indeed, Regev's circuit comprises of $O(n^{1.5+\epsilon})$ gates and is run $O(n^{0.5})$ times. Regev's analysis of his classical postprocessing only applies in the case where \emph{every} run of the circuit is successful. Hence the logical error probability per logical gate would still have to be the same $\tilde{O}(1/n^{2+\epsilon})$.
The natural question that arises is:
    \begin{quote}
    \begin{center}
      {\em Can we achieve better noise-tolerance than} both {\em Shor} and {\em Regev?}
      \end{center}
    \end{quote} 
\noindent
Indeed, there is a natural opportunity to improve the noise-tolerance in the case of Regev's algorithm~\cite{Regev23}: if only the classical postprocessing could be modified to tolerate a constant fraction of unsuccessful runs, then the per-gate error threshold would only have to be $O(1/n^{1.5+\epsilon})$, improving on both Shor's and Regev's algorithms.

We now proceed to describe our results for quantum factoring in more detail.


\subsection{Our Results for Optimizing Space}\label{sec:introspaceresults}

\begin{table}
\begin{center}
    \begin{tabular}{|c|c|c|c|}
    \hline
    Algorithm & Mult. algorithm & Qubit count & Gate count  \\
    \hline\hline
    Shor \cite{shor97} & Near-linear~\cite{Harvey21} & $O(n \log n)$ & $O(n^2\log n)$ \\
    Optimized Shor & Quantum Toom-Cook & \multirow{2}{*}{$\bm{(2+o(1))n}$} & \multirow{2}{*}{$O_\epsilon(n^{2+\epsilon})$} \\
    \cite{beauregard, takahashi, gidney2017factoring, hrs17}  & \cite{gidney2019, kahanamokumeyer2024fast} & & \\
    Optimized Shor~\cite{zalka2006shors} & Schoolbook & $\bm{(1.5 + o(1))n}$ & $O(n^3)$ \\
    \hline
    Regev's algorithm \cite{Regev23} & Near-linear~\cite{Harvey21} & $O(n^{1.5})$ & $\bm{O(n^{1.5}\log n)}$ \\
    \hline
    Our optimization of Regev & Near-linear~\cite{Harvey21} & $O(n \log n)$ & $\bm{O(n^{1.5}\log n)}$ \\
    \multirow{2}{*}{Our optimization of Regev} & Quantum Toom-Cook & \multirow{2}{*}{$\bm{(10.32+o(1))n}$} & \multirow{2}{*}{$O_\epsilon(n^{1.5 + \epsilon})$} \\
    & \cite{gidney2019, kahanamokumeyer2024fast} & & \\
    \hline
    \end{tabular}
    \caption[Comparison of our asymptotic space optimizations with previous work]{Comparison of our results with previous work. Asymptotically best-known results for either space or size are highlighted in bold. Note, importantly, that all values here are just for \textit{one run of the circuit}. They do not account for the fact that a circuit for Shor's algorithm only requires $O(1)$ independent runs, while a circuit for Regev's algorithm as well as  ours requires $\sqrt{n}+4$ independent runs. The constant $\epsilon$ can be made arbitrarily close to 0 and arises from the multiplication circuits by~\cite{kahanamokumeyer2024fast}. The number of qubits in the last line is derived from Corollary~\ref{cor:resultforgidneymult} assuming that the constant $C$ in Regev's number-theoretic assumption is $1+o(1)$.}
    \label{summarytable}
\end{center}
\end{table}

Regev's algorithm~\cite{Regev23} achieves a smaller circuit of $O(n^{1.5+\epsilon})$ gates in comparison to Shor's $O(n^{2+\epsilon})$~\cite{Shor94}; however, for reasons discussed earlier it requires at least $\approx 3n^{1.5}$ qubits\footnote{In more detail, using the notation in Section \ref{sec:setup}, Regev's circuit requires at least $\log_2 D \cdot n \approx An^{1.5}$ qubits. The calculation in Appendix \ref{sec:paramcheck} together with observations in Section \ref{sec:regevconjecture} implies that we should take $A = 3 + o(1)$.} as opposed to the $O(n)$ qubits in Shor and optimized implementations of it. In this work, we construct a quantum circuit that asymptotically achieves the best of both worlds: it has a size of $O(n^{1.5+\epsilon})$ matching Regev's circuit; and it requires $O(n)$ qubits, nearly matching what is known for optimized versions of Shor's circuit. We compute the concrete number of qubits required (see our Theorem~\ref{thm:mainresult} below) which, using schoolbook multiplication, leads us to $(10.32+o(1))n$ qubits and $O(n^{1.5+\epsilon})$ gates, vis-a-vis the results of \cite{beckman, vedral, beauregard,takahashi,zalka2006shors,gidney2017factoring,hrs17, kahanamokumeyer2024fast} which use $\approx 2n$ qubits and $O(n^{2+\epsilon})$ gates, and the result of \cite{zalka2006shors} which uses $\approx 1.5n$ qubits and $O(n^3)$ gates. Our space improvement on Regev's circuit~\cite{Regev23} is substantial not only asymptotically but also for cryptographically relevant problem sizes; when $n = 2048$, our circuit uses at least $\approx 13 \times$ fewer qubits than Regev's circuit. (See Table~\ref{summarytable} for a full list of our results and comparison to prior work.)

%

As discussed earlier, approaches to modular exponentiation via precomputation or repeated squaring do not appear to yield the best-of-both-worlds results we seek. We avoid these issues with the key technique of Fibonacci exponentiation, a method of exponentiation that avoids modular squaring and instead relies solely on modular multiplication. Previous works have considered using Fibonacci numbers rather than powers of two in the exponent for particular applications~\cite{Byrne2007, Klein2008, Meloni}. Additionally, Kaliski~\cite{kal17} explicitly shows how to use this Fibonacci technique to achieve efficient modular exponentiation in the setting of \emph{classical} reversible computation.

We show that these ideas can be adapted to the quantum setting (see Section \ref{sec:kaliski} for a more detailed comparison of our work with~\cite{kal17}). While Fibonacci exponentiation seems to not be of much use in Shor's circuit, it turns out to be quite powerful in optimizing the space usage of Regev's circuit. As noted by Kaliski~\cite{kal17}, it also offers a general way to efficiently compute modular exponents on quantum computers in cases where precomputation is impossible or inefficient; we discuss this in Appendix \ref{sec:modexp}.

We begin by stating our main theorem for qubit complexity, and compare the result to Regev~\cite{Regev23} and optimizations of Shor's algorithm~\cite{shor97,beckman, vedral, beauregard, hrs17, takahashi,gidney2017factoring,gidney2019,kahanamokumeyer2024fast}. In a nutshell, the algorithm by \cite{Regev23} requires $O(n^{1.5})$ qubits, which we bring down to $O(n)$ while retaining the $O(n^{1.5+\epsilon})$ quantum circuit size. Optimizations of Shor's algorithm that retain the $O(n^{2+\epsilon})$ circuit size require $\approx 2n$ qubits~\cite{kahanamokumeyer2024fast}; thus, compared to these results, our algorithm works with the same number of qubits but achieves a $O(\sqrt{n})$ factor saving in the circuit size (just as \cite{Regev23} does). We state our result in terms of an arbitrary multiplication circuit; previously in this introduction, we restricted attention to the multiplication circuit by~\cite{kahanamokumeyer2024fast} for convenience.
\begin{maintheorem}\label{thm:mainresult}
    Assume there is a quantum circuit that implements the operation $$\ket{a} \ket{b} \ket{t} \ket{0^S} \mapsto \ket{a} \ket{b} \ket{(t+ab) \bmod N} \ket{0^S}$$ with $G$ gates where $N, a, b, t$ are all $n$-bit integers with $0 \leq a, b, t < N$ and $2^{n-1} \leq N < 2^n$, and $S$ here is the number of ancilla qubits.\footnote{We do not make any assumptions about the circuit's behavior on inputs where any of $a, b, t$ are  $\geq N$.}
    
    Under conjecture~\ref{conjecture} (a number-theoretic assumption proposed by~\cite{Regev23}), there exists a classical polynomial-time algorithm that outputs a non-trivial factor of $N$ using $\sqrt{n} + 4$ calls to a quantum circuit on $$S + \bigg(\frac{C+2}{\log \phi} + 6 + o(1)\bigg)n \approx S + (1.44C + 8.88)\cdot n$$ qubits using $O(n^{1/2} \cdot G + n^{3/2})$ gates, where $\phi = (1+\sqrt{5})/2$ is the golden ratio. Here, $C$ is the absolute constant from conjecture~\ref{conjecture}.
\end{maintheorem}

We now state three corollaries of this general theorem by plugging in different known results for integer multiplication. Some of these multiplication results do not have the specific structure Theorem~\ref{thm:mainresult} requires e.g. they may be classical rather than quantum, or multiply integers over $\ZZ$ rather than modulo $N$. However, it turns out that all these can be readily adapted to our setting; we discuss this in detail in Appendix \ref{sec:multimplementation}.

Using the classical $O(n \log n)$ multiplication algorithm due to \cite{Harvey21}  allows us to set $G = S = O(n \log n)$ and obtain the following corollary:
\begin{corollary}\label{cor:resultforfastmult}
    Under the same assumption as in Theorem \ref{thm:mainresult}, there is a quantum circuit for factoring that uses $O(n \log n)$ qubits and $O(n^{3/2} \log n)$ gates.
\end{corollary}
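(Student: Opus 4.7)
The plan is to instantiate Theorem~\ref{thm:mainresult} with Harvey--van der Hoeven's classical $O(n\log n)$ integer multiplication algorithm~\cite{Harvey21} and verify that the resulting parameters $G$ and $S$ give the claimed bounds after plugging in.

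The first task is to produce a quantum circuit realizing the modular multiply-accumulate map $\ket{a}\ket{b}\ket{t}\ket{0^S}\mapsto \ket{a}\ket{b}\ket{(t+ab)\bmod N}\ket{0^S}$ on $n$-bit inputs with $G=O(n\log n)$ gates and $S=O(n\log n)$ ancilla qubits. Since~\cite{Harvey21} is a classical but elementary arithmetic algorithm, I would first make it reversible via the standard Bennett compilation to obtain an out-of-place quantum circuit $\ket{a}\ket{b}\ket{0}\ket{0^{S'}}\mapsto \ket{a}\ket{b}\ket{ab}\ket{0^{S'}}$ over $\ZZ$ with $O(n\log n)$ gates and $O(n\log n)$ ancillae (since the classical algorithm uses $O(n\log n)$ time and hence at most $O(n\log n)$ space). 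To reduce modulo $N$, I would compute the quotient $\lfloor ab/N\rfloor$ using a reversible divider with the same asymptotic cost and subtract $N\cdot\lfloor ab/N\rfloor$ from the result register, then uncompute the quotient; equivalently one can apply the standard trick of conditionally subtracting $N$ after an in-place addition to handle modular reduction. Finally, I would add the auxiliary input $t$ into the output register via a reversible $n$-bit adder (cost $O(n)$), yielding the multiply-accumulate form $\ket{(t+ab)\bmod N}$ on the output register. All of this is the content alluded to in Appendix~\ref{sec:multimplementation}; the main technical bookkeeping is checking that the reversible simulation preserves the $O(n\log n)$ gate and ancilla counts, and that the modular reduction integrates cleanly with the multiply-accumulate interface required by Theorem~\ref{thm:mainresult}.

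With $G=O(n\log n)$ and $S=O(n\log n)$ in hand, the corollary follows by direct substitution into Theorem~\ref{thm:mainresult}: the qubit count is
\[
S + \bigg(\frac{C+2}{\log\phi}+6+o(1)\bigg)n \;=\; O(n\log n) + O(n) \;=\; O(n\log n),
\]
and the gate count is
\[
O(n^{1/2}\cdot G + n^{3/2}) \;=\; O(n^{1/2}\cdot n\log n + n^{3/2}) \;=\; O(n^{3/2}\log n).
\]
The number of circuit runs $\sqrt{n}+4$ is absorbed into the classical postprocessing and does not affect the statement of the corollary, which concerns a single run.

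The main obstacle I anticipate is the first step: carefully wrapping the Harvey--van der Hoeven algorithm into the specific multiply-accumulate quantum interface required by Theorem~\ref{thm:mainresult} without blowing up either the ancilla count or the gate count beyond $O(n\log n)$. In particular, implementing the modular reduction reversibly and in-place on the target register (while leaving $a, b$ untouched and the ancillae clean) requires some care. This is precisely the kind of issue that Appendix~\ref{sec:multimplementation} is set up to handle, and no new ideas beyond standard reversible-computing techniques should be needed.
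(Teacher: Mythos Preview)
Your proposal is correct and follows the paper's approach exactly: the paper simply states that Harvey--van der Hoeven gives $G = S = O(n\log n)$ and plugs into Theorem~\ref{thm:mainresult}, deferring the wrapping details to Appendix~\ref{sec:multimplementation}. Your sketch of that wrapping (Bennett compilation, modular reduction, in-place modular addition of $t$) matches the content of that appendix, so there is nothing to add.
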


If we want to push the number of qubits needed all the way down to $O(n)$, we can employ the space-efficient quantum implementation of Karatsuba's algorithm \cite{karatsuba} due to Gidney~\cite{gidney2019} which uses $S = O(n)$ ancilla qubits and $G = O(n^{\log_2 3})$ gates. Alternatively, we can use the recent result by Kahanamoku-Meyer and Yao~\cite{kahanamokumeyer2024fast} which combines ideas by Draper~\cite{draper} with Toom-Cook multiplication~\cite{cookmult, toom1963complexity} to obtain a multiplication circuit using $S = O_\epsilon(\log n)$ ancilla qubits and $G = O_\epsilon(n^{1+\epsilon})$ gates\footnote{We note that the multiplication circuit by~\cite{kahanamokumeyer2024fast} already has exactly the form we need; we do not need to incur any overheads from compiling it into the form we need using our results in Appendix \ref{sec:multimplementation}.} for any $\epsilon > 0$, yielding the following corollary:
\begin{corollary}\label{cor:resultforgidneymult}
    Under the same assumption as in Theorem \ref{thm:mainresult}, for any $\epsilon > 0$, there is a quantum circuit for factoring that uses $$\bigg(\frac{C+2}{\log \phi} + 6 + o_\epsilon(1)\bigg)n \approx (1.44C + 8.88)\cdot n$$ qubits and $O_\epsilon(n^{3/2 + \epsilon})$ gates, where $C$ is the constant from Conjecture \ref{conjecture}.
\end{corollary}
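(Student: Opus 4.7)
The plan is to obtain Corollary~\ref{cor:resultforgidneymult} as a direct instantiation of Theorem~\ref{thm:mainresult}, feeding in a concrete out-of-place modular multiplication circuit whose parameters $S$ and $G$ are small enough to drive both bounds. The cleanest choice is the recent circuit of Kahanamoku-Meyer and Yao~\cite{kahanamokumeyer2024fast}, since the footnote in the excerpt already asserts that it comes in exactly the form required by the hypothesis of Theorem~\ref{thm:mainresult} (namely $\ket{a}\ket{b}\ket{t}\ket{0^S} \mapsto \ket{a}\ket{b}\ket{(t+ab)\bmod N}\ket{0^S}$), so no conversion overhead via Appendix~\ref{sec:multimplementation} is incurred. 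Their circuit uses $S = O_\epsilon(\log n)$ ancilla qubits and $G = O_\epsilon(n^{1+\epsilon})$ gates for any $\epsilon > 0$.

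The next step is to substitute these values into the two bounds provided by Theorem~\ref{thm:mainresult}. For the qubit count, the $S$ term contributes only $O_\epsilon(\log n)$, which is absorbed into the $o(1)$ factor in front of $n$, yielding
\[
O_\epsilon(\log n) + \left(\frac{C+2}{\log \phi} + 6 + o(1)\right) n \;=\; \left(\frac{C+2}{\log \phi} + 6 + o_\epsilon(1)\right) n,
\]
matching the stated bound of roughly $(1.44 C + 8.88)\, n$ qubits. For the gate count, plugging $G = O_\epsilon(n^{1+\epsilon})$ into $O(n^{1/2}\cdot G + n^{3/2})$ gives $O_\epsilon(n^{3/2+\epsilon}) + O(n^{3/2}) = O_\epsilon(n^{3/2+\epsilon})$, as desired.

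As an alternative instantiation, one could use Gidney's space-efficient Karatsuba multiplier~\cite{gidney2019} with $S=O(n)$ and $G=O(n^{\log_2 3})$; this would still give $O(n)$ qubits but a worse gate count of $O(n^{1/2+\log_2 3})$, and moreover would require first invoking the machinery of Appendix~\ref{sec:multimplementation} to put the circuit into the modular, in-place-on-$t$ form demanded by Theorem~\ref{thm:mainresult}. Since Kahanamoku-Meyer and Yao's circuit already satisfies the required interface, the only real content of the corollary is checking that it does so. That is essentially the main (and only) potential obstacle: verifying, from the statement of~\cite{kahanamokumeyer2024fast}, that their multiplier computes addition of the product into the target register modulo $N$ (rather than over $\ZZ$), is reversible with the claimed ancilla count, and preserves the $\ket{a},\ket{b}$ registers. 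Given the footnote's explicit assertion that the form matches, the remainder of the argument is a mechanical substitution, and no further analysis beyond Theorem~\ref{thm:mainresult} is required.
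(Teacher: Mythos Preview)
Your proposal is correct and mirrors the paper's own derivation: the paper obtains Corollary~\ref{cor:resultforgidneymult} by plugging the Kahanamoku-Meyer--Yao multiplier with $S=O_\epsilon(\log n)$ and $G=O_\epsilon(n^{1+\epsilon})$ directly into Theorem~\ref{thm:mainresult}, noting (as you do) via the footnote that their circuit already has the required $\ket{a}\ket{b}\ket{t}\ket{0^S}\mapsto\ket{a}\ket{b}\ket{(t+ab)\bmod N}\ket{0^S}$ form so no Appendix~\ref{sec:multimplementation} conversion is needed. Your remark on the Gidney Karatsuba alternative also matches the paper's discussion preceding the corollary.
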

This latter result achieves the same asymptotic space complexity as Shor's algorithm using the same multiplication circuit, except with $O(\sqrt{n})$ factor smaller number of gates. We compare these results with previous work on quantum circuits for factoring in Table \ref{summarytable}. We also detail the various sources of space cost in our algorithm and potential areas for further optimizations in Section \ref{sec:furtheropt}.

As we discuss in Section \ref{sec:regevconjecture}, following a heuristic argument by Regev~\cite{Regev23} suggests that taking $C = 1+\epsilon$ should be sufficient for conjecture~\ref{conjecture} to hold. Plugging this into the statement of Theorem \ref{thm:mainresult} tells us that the space complexity of our circuit should be $\approx S + (10.32+o(1))n$.

\subsection{Our Results on Error-Correction}\label{sec:introecresults} 

%

We show that Regev's classical postprocessing~\cite{Regev23} can be modified to tolerate a constant fraction of unsuccessful runs of the quantum circuit, thus only requiring a $O(1/n^{1.5+\epsilon})$ bound on the probability of logical error per logical gate.

\begin{maintheorem}\label{thm:eccmainthm}
    (Informal, see Section \ref{sec:formalecc} and Theorem \ref{thm:mainthm} for formal statement) Assume the probability of error in one run of Regev's circuit is a sufficiently small constant $p > 0$. Then there exists a classical polynomial-time algorithm that, given $m = \Omega(\sqrt{n})$ potentially corrupt samples from Regev's quantum circuit, outputs a non-trivial factor of $N$ with probability $\Omega(1)$. 
\end{maintheorem}

Our algorithm is conceptually simple. As we will explain in Section \ref{sec:regevreview}, uncorrupted samples from Regev's circuit will essentially be random samples from $\mathcal{L}^\ast/\ZZ^d$, where $\mathcal{L}$ is a lattice depending on the integer $N$ that needs to be factored. For simplicity, assume that corrupted samples are uniform from the torus $\RR^d/\ZZ^d$. Then the main idea is that a small linear combination of samples that includes even one corrupt sample will also be uniform and hence should not be close to the dual lattice. Based on this, we devise a filtering procedure using lattice reduction techniques to detect and filter out corrupt samples. This gives us a collection of uncorrupted samples, which can now be plugged into Regev's classical postprocessing procedure~\cite{Regev23} as-is.

However, we also extend our results beyond the case where the corrupt samples are uniform over $\RR^d/\ZZ^d$. We define a formal model in terms of a general error distribution $\mathcal{D}$ and clearly state a general condition on $\mathcal{D}$ that suffices for our filtering procedure to go through. We defer the formal models and statements to Section \ref{sec:formalecc}.

\subsection{Related Work}
\label{sec:kaliski}

\condparagraph{On Kaliski's Work and Quantum-Quantum Multiplication.} We compare our techniques to achieve our space-complexity result with those of Kaliski~\cite{kal17}.

At its core, the beautiful work of Kaliski~\cite{kal17} constructs classical reversible and space-efficient algorithms for modular exponentiation i.e. the map $(1, a, z) \mapsto (1, a, z, a^z\bmod{N})$. 
(Kaliski also then shows how to ``forget'' $a$, thereby obtaining $(1, 1, z, a^z\bmod{N})$, but this is neither useful nor applicable to our situation because these additional results assume the algorithm knows the order $\varphi(N)$ of the group, which we do not.)

While Kaliski's algorithm is reversible in the classical world, it is not directly applicable to quantum computation. The key arithmetic operation used by Kaliski is the mapping $(a, b) \mapsto (a, ab \bmod{N})$. This is classically reversible assuming $\gcd(a, N) = 1$, but it is not trivial to implement this on a quantum computer.\footnote{We note that Kaliski does consider applications of his Fibonacci exponentiation idea to quantum computation~\cite{kaliski2017quantum, kal17}, but does not appear to identify or address the problem of implementing $(a, b) \mapsto (a, ab\bmod{N})$ on a quantum computer.} Indeed, it is at least as difficult as inversion modulo $N$: the inverse of a circuit computing this mapping would map $(a, b) \mapsto (a, a^{-1}b \bmod{N})$. Taking $b = 1$ would immediately yield a circuit for modular inversion. The well-known circuits for modular inversion based on the extended Euclidean~\cite{ProosZalka03} or binary GCD~\cite{roetteler17} algorithms use $O(n^2)$ gates. There are lesser-known circuits~\cite{schonhageeuc, thull1990uni, Moller2008OnSA} for modular inversion that only use $\widetilde{O}(n)$ gates, but are still significantly more complicated and (concretely) expensive than carrying out an $n$-bit multiplication. Either way, the cost of modular inversion poses an undesirable bottleneck to quantumly implementing the map $(a, b) \mapsto (a, ab\bmod{N})$.

This problem is often referred to as \emph{in-place quantum-quantum modular multiplication}~\cite{rines2018high}. ``In-place'' specifies that the output $ab\bmod{N}$ should overwrite the input register $b$, and ``quantum-quantum'' specifies that $a$ and $b$ are both in superposition. In contrast, ``quantum-classical'' multiplication would be when $a$ is a classical constant. This is a more straightforward task, and was addressed in Shor's original paper~\cite{shor97}.

Previous work by Rines and Chuang~\cite{rines2018high} on quantum modular multiplication constructs a very careful white-box implementation of schoolbook multiplication using Montgomery multipliers with $O(n^2)$ gates and $O(n)$ qubits. Their circuits can be used for out-of-place quantum-quantum multiplication or in-place quantum-classical multiplication, but do not appear adaptable to in-place quantum-quantum multiplication without using the costly extended Euclidean algorithm to coherently compute inverses~\cite{ProosZalka03}.\footnote{We thank Martin Eker{\aa} and Joel G{\"a}rtner for pointing out to us that the results by~\cite{rines2018high} do not easily handle in-place quantum-quantum multiplication.} Moreover, their method does not appear to be adaptable to arbitrary multiplication algorithms, and in particular, does not match state-of-the-art classical algorithms using $\widetilde{O}(n)$ gates~\cite{Harvey21, schonhage1971fast}.

Our algorithm approaches the problem of in-place quantum-quantum multiplication by \emph{sidestepping} the cost of inversion modulo $N$. We start by replacing each register $a$ in Kaliski's procedure with the tuple $(a, a^{-1}\bmod{N})$, so that our goal is now to implement the map $(a, a^{-1}\bmod{N}, b, b^{-1}\bmod{N}) \mapsto (a, a^{-1}\bmod{N}, ab\bmod{N}, (ab)^{-1}\bmod{N})$. In Lemma \ref{lemma:beauregardmult}, we show that this map can be implemented using $O(1)$ black-box calls to \emph{any} quantum circuit for \emph{out-of-place} quantum-quantum modular multiplication. This allows us to utilize any classical multiplication algorithm in our quantum circuit, including those by~\cite{Harvey21} or~\cite{schonhage1971fast} (after applying the necessary procedures in Appendix \ref{sec:multimplementation}).

\condparagraph{On the Work of \cite{ekeragartner}.} 
Subsequent to a public posting of our work containing only our space complexity result, 
Eker{\aa} and G{\"a}rtner~\cite{ekeragartner} showed how to adapt both Regev's algorithm~\cite{Regev23} and our space optimization of it to obtain quantum circuits for the discrete logarithm problem over $\ZZ_p$ that achieve an analogous improvement over Shor~\cite{shor97}. Second, concurrent to our results on error-tolerance, they also address the question of improving the error tolerance of Regev, but in a different way; instead of first detecting and filtering out corrupt samples like we do, they show under a different assumption on the distribution of corrupt samples that Regev's classical postprocessing~\cite{Regev23} as-is can withstand a constant fraction of samples being corrupted.

While the algorithm arising from their result is hence conceptually simpler than ours, it relies on a stronger assumption. For example, in the case that the corrupted samples are uniform from the torus $\RR^d/\ZZ^d$, we are able to prove correctness of our algorithm assuming nothing more than the same Conjecture \ref{conjecture} needed by~\cite{Regev23}, namely that there exists one short vector in a lattice $\mathcal{L}$ that gives rise to a nontrivial square root mod $N$. On the other hand, the assumption by~\cite{ekeragartner} which depends on both $\mathcal{L}$ and the distribution of errors requires at the very least that $\mathcal{L}$ has a short basis, which is a stronger assumption. For this reason, we view both the analysis by~\cite{ekeragartner} and our result as distinct and important contributions towards achieving robust and efficient quantum factoring.

We note that our result described in Section \ref{sec:formalecc} on detecting errors also readily adapts to the discrete logarithm algorithm by~\cite{ekeragartner}. We provide an outline in Appendix \ref{sec:eccdl}.

\condparagraph{Other Follow-Up Work.} Subsequent to postings of both our work and that by~\cite{ekeragartner}, it was shown by Pilatte~\cite{pilatte2024unconditional} that a variant of our space-efficient implementation of Regev's factoring algorithm~\cite{Regev23} and its adaptation to discrete logarithms by~\cite{ekeragartner} can be proven to be unconditionally correct, at the expense of polylogarithmic factors in the qubit and gate complexity. Specifically, it is shown by~\cite{pilatte2024unconditional} that if we allow the bases in Regev's circuit~\cite{Regev23} to be as large as $\exp(\widetilde{O}(\sqrt{n}))$ rather than just $\poly(n)$, we can prove the analogue of Conjecture \ref{conjecture} and thus obtain an \emph{unconditionally correct} quantum factoring algorithm that makes $O(\sqrt{n})$ calls to a quantum circuit that uses $O(n^{3/2}\log^3n)$ gates and $O(n\log^2n)$ qubits. While this algorithm is likely undesirable in practice for cryptographically relevant problem sizes because of its use of larger bases, the result by Pilatte~\cite{pilatte2024unconditional} still serves to increase confidence in conjecture \ref{conjecture}, and hence the correctness of Regev's algorithms~\cite{Regev23} and its variants in~\cite{ekeragartner, cryptoeprint:2024/636} and this work.

In another follow-up work~\cite{cryptoeprint:2024/636}, we carefully analyze the prefactor in the gate count of our space-efficient circuit. We provide methods to substantially improve the prefactor in the number of gates, as well as to lower the prefactor in the space complexity from $\approx 10.32$ to $9+\epsilon$ for any $\epsilon > 0$, by working with more general sequences of integers than the Fibonacci numbers.

\condparagraph{Organization of the Paper.} We start with an overview of Regev's factoring algorithm in Section~\ref{sec:regevreview}. We follow this up with a formal statement of our results in Section~\ref{sec:mainresults}, the space-efficiency result in Section~\ref{sec:oracle}, and the error-tolerance result in Section~\ref{sec:error-resilience}.

\section{Setup and Notation}\label{sec:setup}

We retain all notation from \cite{Regev23} and restate it here for convenience. Let $N < 2^n$ be an $n$-bit number. Let $d = \lfloor\sqrt{n}\rfloor$ and $b_1, \ldots, b_d$ be some small $O(\log n)$-bit integers (e.g. $b_i$ is the $i$th prime number) and let $a_i = b_i^2 \bmod{N}$. For any integer $t$, let $[t]$ denote the set $\left\{1, 2, \ldots, t\right\}$. We use $\log$ to denote the base-2 logarithm throughout this paper. Also, let $\phi$ denote the golden ratio.
Regev's algorithm uses a number of parameters:
\begin{itemize}
\item Let $C > 0$ be an absolute constant given by conjecture \ref{conjecture}; 
\item Let $A > C$ be another constant we specify later. For Regev's factoring algorithm and our space optimization, we will take $A = C+2+o(1)$, but a larger constant $A$ may be needed in order to be compatible with our error detection results (see Section \ref{sec:formalecc} for details);
\item Let $R = 2^{(A+o(1))\sqrt{n}}$;   and
\item Let $D$ be a power of 2 in $[2\sqrt{d} \cdot R, 4\sqrt{d} \cdot R]$. Note that $D$ is also $2^{(A+o(1))\sqrt{n}}$.  These are the same parameters $R$ and $D$ defined by \cite{Regev23}. 
\end{itemize}
Regev defines the following lattices in $d$ dimensions:
\begin{align*}
    \mathcal{L} &= \left\{(z_1, \ldots, z_d) \in \ZZ^d: \prod_{i = 1}^d a_i^{z_i} \equiv 1 \bmod{N}\right\}\text{, and} \\
    \mathcal{L}_0 &= \left\{(z_1, \ldots, z_d) \in \ZZ^d: \prod_{i = 1}^d b_i^{z_i} \equiv \pm 1 \bmod{N}\right\} \subseteq \mathcal{L}.
\end{align*}
We will also work closely with the dual lattice $$\mathcal{L}^\ast = \left\{y \in \RR^d: \langle x, y \rangle \in \ZZ\text{ }\forall x \in \mathcal{L}\right\} \supseteq \ZZ^d.$$

\section{Overview of Regev's Factoring Algorithm}\label{sec:regevreview}

Regev's algorithm~\cite{Regev23} starts by ruling out simple possibilities where factoring is easy i.e. if $N$ is even, a prime power, or shares a common factor with any of the $b_i$. For the remainder of this paper, we assume none of these is the case.

The goal of Regev's algorithm~\cite{Regev23} is to find one vector $z = (z_1, \ldots, z_d)$ in $\mathcal{L}\backslash \mathcal{L}_0$. Given such a vector, one can construct $b = \prod_{i = 1}^d b_i^{z_i}\bmod{N}$. Since $z \in \mathcal{L}$, $b$ must be a square root of 1 modulo $N$, moreover it must be a nontrivial square root of 1 since $z \notin \mathcal{L}_0$. Hence $N$ divides $(b-1)(b+1)$ but not either term individually, so $\gcd(b-1, N)$ will be a nontrivial divisor of $N$.

Regev's algorithm can be broken down into two distinct pieces. First, there is a quantum circuit that essentially produces one noisy sample from $\mathcal{L}^\ast/\ZZ^d$. This circuit is run $d+4$ times. Secondly, there is a classical postprocessing procedure that takes these samples and essentially recovers a basis for all vectors in $\mathcal{L}$ of norm at most $T = 2^{C\sqrt{n}}$. Under a heuristic assumption made by~\cite{Regev23}, at least one of these vectors will belong to $\mathcal{L}\backslash\mathcal{L}_0$.

\subsection{Overview of Regev's Quantum Circuit}

We give an overview of Regev's quantum factoring circuit, just enough to understand our results; for more details, we refer the reader to the original work~\cite{Regev23}.
To help keep track of space usage, all lemmas will clearly specify the number of ancilla qubits needed (if any).

\subsubsection{Constructing a Superposition over Vectors $z$}\label{sec:discretegaussian}

For $s > 0$, define the Gaussian function $\rho_s: \RR^d \rightarrow \RR$ as
$$\rho_s(z) = \exp(-\pi||z||^2/s^2).$$
Then in this step, the algorithm constructs a discrete Gaussian state $\ket{\psi}$ within $1/\text{poly}(d)$ trace distance of the state proportional to:
$$\sum_{z \in \left\{-D/2, \ldots, D/2-1\right\}^d} \rho_R(z) \ket{z}.$$
The complexity of this step can be summarized in the following lemma from \cite{Regev23}.
\begin{lemma}
    $\ket{\psi}$ can be constructed in-place (i.e., without any ancilla qubits) with $$O(d (\log D + \poly (\log d))) = O(n)$$ gates. The number of qubits needed to store the vector $\ket{z}$ is $d \log D = (A+o(1))n$.
\end{lemma}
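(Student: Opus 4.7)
The plan is to exploit the product structure of the Gaussian weight. Since $\|z\|^2 = \sum_{i=1}^d z_i^2$, we have $\rho_R(z) = \prod_{i=1}^d \rho_R(z_i)$, so the target state factors as
\[
\ket{\psi} \;\propto\; \bigotimes_{i=1}^d \ket{\psi_{1\text{D}}}, \qquad \ket{\psi_{1\text{D}}} \;\propto\; \sum_{z \in \{-D/2,\ldots,D/2-1\}} \rho_R(z)\,\ket{z}.
\]
So it suffices to prepare a single $\log D$-qubit one-dimensional discrete Gaussian state $\ket{\psi_{1\text{D}}}$ in-place using $O(\log D + \poly(\log d))$ gates, and then tensor together $d$ independent copies on disjoint registers. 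Because the registers are disjoint, these $d$ preparations contribute no interaction and the total gate count is $d$ times the single-coordinate cost, giving $O(d(\log D + \poly(\log d))) = O(n)$ since $\log D = O(\sqrt{n})$ and $d = \lfloor \sqrt{n}\rfloor$.

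For the one-dimensional part I would invoke a standard discrete Gaussian state preparation routine, e.g. the Grover--Rudolph style construction or the Kitaev--Webb procedure, specialized to a one-dimensional Gaussian with parameter $R$ truncated to $\{-D/2,\ldots,D/2-1\}$. Because $D \geq 2\sqrt{d}\,R$, the tail mass outside the window is exponentially small in $d$, so truncation contributes at most $\exp(-\Omega(d))$ error in trace distance. The amplitudes $\sqrt{\rho_R(z)/Z}$ need only be computed to $\poly(\log d)$ bits of precision to guarantee $1/\poly(d)$ overall trace distance for the final $d$-fold tensor product (splitting the $1/\poly(d)$ error budget evenly among coordinates), and classical evaluation of the Gaussian CDF to that precision can be wired into a reversible circuit of size $\poly(\log d)$ per controlled rotation. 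Summing over the $\log D$ bits of $\ket{z}$, we get the claimed $O(\log D + \poly(\log d))$ one-dimensional cost.

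The delicate point is the \emph{in-place} requirement: the preparation must act as a unitary on $d\log D$ qubits initialized to $\ket{0}$ with no retained ancillas. Standard Grover--Rudolph-style constructions compute intermediate quantities (conditional cumulative probabilities, rotation angles) into temporary registers; to stay in-place, I would compute these quantities, apply the corresponding controlled rotation, and then uncompute them via the standard Bennett-style reversal, so that every scratch qubit returns to $\ket{0}$ and can be reused across the $\log D$ bit positions and across the $d$ coordinates. This reversibility doubles the gate count but preserves the asymptotics, yielding the final $O(d(\log D + \poly(\log d))) = O(n)$ bound and $d\log D = (A+o(1))n$ qubit count stated in the lemma. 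The main obstacle is verifying that the precision needed for a $1/\poly(d)$ trace-distance guarantee on the full $d$-fold state can indeed be achieved with only $\poly(\log d)$-qubit scratch space that is fully uncomputed; everything else is a routine assembly on top of known one-dimensional Gaussian state preparation primitives.
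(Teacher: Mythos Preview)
Your overall structure---reducing to a tensor product of one-dimensional Gaussian states and invoking a Grover--Rudolph-style preparation---matches the paper's approach. However, there are two concrete gaps that prevent you from reaching the exact bounds claimed.

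First, your gate-count arithmetic does not work. You say each controlled rotation costs $\poly(\log d)$ gates and there are $\log D$ of them; summing gives $O(\log D \cdot \poly(\log d))$, not the additive $O(\log D + \poly(\log d))$ you assert. The paper obtains the additive bound via a specific observation you are missing: only the $O(\log d)$ most significant bits require a carefully computed conditional rotation (costing $\poly(\log d)$ gates in total), while for the remaining $\approx \log D$ low-order bits the conditional distribution is so close to uniform that a bare Hadamard suffices, at one gate per bit. This is what turns the product into a sum and delivers the stated $O(n)$ rather than $O(n\cdot\poly(\log n))$.

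Second, your in-place justification does not actually eliminate ancillas. Compute--rotate--uncompute returns your scratch register to $\ket{0}$, but that register still sits outside the $d\log D$ output qubits, so it is by definition an ancilla. The paper's trick is more specific: after uncomputing the $\poly(\log d)$ scratch qubits used for the high-order rotations, those same physical qubits are \emph{repurposed as low-order data qubits} of the Gaussian state (which only need Hadamards anyway). Since $\log D \gg \poly(\log d)$, there are enough low-order positions to absorb all of the scratch, and the construction becomes genuinely in-place. You flag an obstacle in your last sentence but do not identify this mechanism; without it, your construction uses $\poly(\log d)$ ancillas rather than zero.
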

\begin{proof}
    We restate the outline by~\cite{Regev23} here and then explain why this can be done without ancilla qubits. We refer the reader to~\cite{grover2002creating, regev09} for details. Firstly, it suffices to compute a one-dimensional discrete Gaussian state over scalars $z \in \left\{-D/2, \ldots, D/2-1\right\}$ in-place using $O(\log D + \poly(\log d))$ gates (since our $d$-dimensional discrete Gaussian is a tensor product of $d$ one-dimensional discrete Gaussian states).

    As explained by~\cite{Regev23}, this can be achieved (up to an error of $1/\poly(d)$ trace distance) by adapting a standard procedure as in~\cite{grover2002creating, regev09}. For the $O(\log d)$ most significant qubits, one can apply the appropriate rotation to that qubit conditioned on the values of the previous qubits, using $\poly(\log d)$ gates. For the remaining qubits, the correct rotation will be very close to the $\ket{+}$ state, so we can simply apply Hadamard gates in-place to all of them. (This is where the $1/\poly(d)$ trace distance error is incurred.) This takes at most $\log D$ gates.

    To address the space consideration, observe that computing the $O(\log d)$ most significant qubits requires $\poly(\log d)$ ancilla qubits. These ancilla qubits simply store the angles for each rotation, which is a classical function of the bits of $z$ and can hence be uncomputed. After uncomputation, these ancilla qubits will be back in the $\ket{0}$ state, and can be \emph{reused} as low-order qubits for the discrete Gaussian state (since we need $\approx \log D \gg \poly(\log d)$ lower-order qubits).
\condqed \end{proof}

\subsubsection{A Quantum Oracle to Compute $\prod_{i = 1}^d a_i^{z_i+D/2}\bmod{N}$}

Next, the algorithm computes (in superposition) \begin{equation}\label{eqn:exp}\prod_{i = 1}^d a_i^{z_i+D/2}\pmod{N}
\end{equation}
(the offset by $D/2$ in the exponent is to make all exponents non-negative).

Recall that we use $G$ to denote the number of gates used by our $n$-bit multiplication circuit; see the statement of Theorem \ref{thm:mainresult} for a precise description. Then, Regev achieves this step by first constructing a classical circuit with $O(\log D \cdot G) = O(n^{1/2} \cdot G)$ gates for this computation, then implementing this quantumly. Classical circuits can be ``compiled'' into quantum circuits via standard techniques~\cite{bennett_logical_1973, bennett_timespace_1989, levine_note_1990}, which we detail for completeness in Appendix \ref{sec:classicaltoquantum}.

Regev's classical circuit uses a repeated squaring procedure, while also exploiting the fact that the $a_i$'s are small to reduce the number of large-integer multiplications required. After compiling this into a quantum oracle and making some minor optimizations,\footnote{In particular, Regev does not quite just take his classical circuit (for computing equation~\ref{eqn:exp}) and black-box convert it into quantum, which would result in $O(n^{3/2}\log n)$ space complexity (as explained in Appendix \ref{sec:classicaltoquantum}). However, he is able to reuse the ancilla qubits used for multiplications, resulting in just  $O(n^{3/2})$ qubits.} Regev obtains the following lemma: 

\begin{lemma}\label{lemma:regevoracle}
    As in Theorem \ref{thm:mainresult}, let $G$ and $S$ be the number of gates and ancilla qubits respectively for our multiplication circuit on $n$-bit integers. Then there exists a quantum circuit mapping $$\ket{z}\ket{0^M} \mapsto \ket{z} \Ket{\prod_{i = 1}^d a_i^{z_i+D/2} \bmod{N}} \ket{\psi}.$$ Here, $$M = S + O(n^{3/2})$$ is the initial number of ancilla qubits, and $\ket{\psi}$ is some possibly nonzero state on $M-n$ qubits. Moreover, this circuit uses $O(n^{1/2} \cdot G)$ gates.
\end{lemma}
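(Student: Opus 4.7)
The plan is to adapt the standard square--and--multiply procedure to the multi-base setting, reorganize it so that only $O(\sqrt n)$ ``large'' multiplications appear, and then compile the resulting classical circuit into a quantum oracle while being careful to reuse the ancillas of the multiplication circuit across iterations.

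First, I would rewrite the exponent. Writing $z_i + D/2 = \sum_{j=0}^{\log D - 1} e_{i,j} 2^j$, we get
$$ \prod_{i=1}^d a_i^{z_i+D/2} \;\equiv\; \prod_{j=0}^{\log D - 1} P_j^{2^j} \pmod N, \qquad P_j \;=\; \prod_{i=1}^d a_i^{e_{i,j}}. $$
The $b_i$ are the first $d$ primes and hence $O(\log n)$-bit, so $a_i = b_i^2$ is itself an $O(\log n)$-bit integer and $P_j$ has at most $d\cdot O(\log n) = O(\sqrt n\,\log n)$ bits, strictly less than $n$. I would then evaluate $\prod_j P_j^{2^j}$ by Horner's method from the highest $j$ down: starting with $y=1$, iteratively set $y \leftarrow y^2 \cdot P_j \bmod N$. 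This is $\log D = O(\sqrt n)$ iterations, each needing (i) one $n$-bit modular squaring performed by the assumed multiply-add circuit with $a=b=y$ and $t=0$, and (ii) one multiplication of the $n$-bit $y$ by the sub-$n$-bit $P_j$, also a single use of the multiply-add circuit. That is $O(G)$ gates per iteration, so $O(\sqrt n\cdot G)$ in total; computing the $P_j$ in superposition from $\ket z$ requires only $\mathrm{poly}(\log n)$-bit arithmetic on $O(d)$ operands and can be uncomputed after use, so it contributes lower-order cost.

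Second, I would compile this classical procedure into a quantum circuit. Each call to the assumed multiplication circuit uses $S$ ancillas, but these are restored to $\ket{0^S}$ at the end of each call and therefore can be \emph{reused} across iterations, contributing $S$ to $M$. The reversibility obstruction discussed in the introduction means I cannot overwrite $y$ in place by $y^2 \cdot P_j$ without doing something as hard as factoring $N$; so in each iteration I allocate a fresh $n$-qubit register to receive the new value of $y$, leaving the previous values in place as ``garbage.'' Over $\log D = O(\sqrt n)$ iterations this contributes $O(\sqrt n)\cdot n = O(n^{3/2})$ qubits to $M$. Taking a single $n$-qubit register to be the designated output register containing the final product, the remaining $M - n = S + O(n^{3/2})$ qubits carry the garbage and together form the state $\ket\psi$ promised by the lemma statement.

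The main obstacle, and the reason the space cost is $O(n^{3/2})$ rather than $O(n)$, is precisely the in-place quantum--quantum modular squaring problem: without a way to reversibly perform $\ket y \mapsto \ket{y^2 \bmod N}$, each Horner step seems forced to spend a fresh $n$-qubit register, and the argument above is essentially the naive compilation that pays this price. Everything else (the Horner rewrite, the gate accounting, and the ancilla reuse inside the multiply-add circuit) is routine once the small size of the $a_i$ is exploited to keep the $P_j$ below the $n$-bit threshold, and that is what pins the gate count at $O(\sqrt n\cdot G)$.
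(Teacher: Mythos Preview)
Your proposal is correct and follows essentially the same approach the paper attributes to Regev: rewrite the multi-exponent as $\prod_j P_j^{2^j}$ with each $P_j=\prod_i a_i^{e_{i,j}}$ small (so its computation is lower order), evaluate by Horner/repeated squaring with $O(\sqrt n)$ large modular multiplications, reuse the $S$ ancillas of the multiply-add circuit across calls, and pay one fresh $n$-qubit register per iteration for the $O(n^{3/2})$ garbage. One small technicality worth making explicit: the assumed circuit has distinct $\ket a$ and $\ket b$ registers, so ``$a=b=y$'' is realized by CNOT-copying $\ket y\ket{0^n}\to\ket y\ket y$ in the computational basis, multiplying, and CNOTing back---this costs only $O(n)$ reusable ancillas and does not change any of the stated bounds.
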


We note that this step is the performance bottleneck in Regev's algorithm both in terms of gates and qubits. One of our main contributions is finding a different way to implement this oracle that only requires $S + O(n)$ qubits while retaining the same asymptotic gate complexity. We discuss our improvement in more detail in Section \ref{sec:spaceoptoverview}.

\subsubsection{Measurement and QFT}

The final stage of the quantum part of Regev's algorithm~\cite{Regev23} first measures the register storing $\prod_{i = 1}^d a_i^{z_i+D/2}\pmod{N}$ to collapse the $\ket{z}$ register to a superposition over some coset from $\ZZ^d/\mathcal{L}$. Then the algorithm applies the inverse of the circuit in Lemma \ref{lemma:regevoracle} to uncompute the nonzero state $\ket{\psi}$.

Now the algorithm applies an approximate QFT \cite{Cop02} modulo $D$ i.e. over $\ZZ_D^d$ to the $\ket{z}$ register, measures, and divides by $D$ to obtain a vector close to a uniform sample from the dual lattice $\mathcal{L^\ast}/\ZZ^d$. More formally, with probability $1-1/\poly(d)$, we obtain a sample of the form $w_i = v_i + \delta_i$. Here, $v_i$ is a uniform sample from $\mathcal{L^\ast}/\ZZ^d$ and $\delta_i$ is some error of magnitude at most $2^{(-A+o(1))n/d}$. The sample $w_i$ is the final output of each run of the quantum circuit.

To understand the complexity of this part of the quantum circuit, what we need is the following lemma from \cite{Cop02,Regev23}.

\begin{lemma}
    The approximate QFT can be computed in-place with circuit size $$O(d \log D(\log \log D + \log d)) = O(n \log n)$$ gates. The space usage here is the number of qubits needed to store $\ket{z}$, which is $d \log D = (A+o(1))n$ qubits.
\end{lemma}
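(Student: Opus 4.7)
The plan is to invoke Coppersmith's approximate QFT~\cite{Cop02} and argue that nothing special is needed beyond applying it coordinatewise, while carefully tracking how the approximation parameter must scale with $d$.

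First, I would recall that Coppersmith's construction implements an approximate QFT over $\ZZ_D$ in-place on $\log D$ qubits using only Hadamard and controlled phase-rotation gates between pairs of qubits. If one truncates the phase rotations so that only those involving qubits within distance $k$ act (discarding rotations by very small angles), the resulting circuit uses $O(k \log D)$ gates and approximates the exact QFT to within inverse-polynomial trace distance in $2^k$. Crucially, this circuit acts entirely on the input register with no ancilla qubits, so it is in-place.

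Next, since the $d$-dimensional QFT over $\ZZ_D^d$ factors as a tensor product of $d$ one-dimensional QFTs (one per coordinate of $\ket{z}$), I would simply apply Coppersmith's construction to each of the $d$ coordinate registers. Because trace distance is subadditive under tensor products, to guarantee total error $1/\poly(d)$ it suffices to make each one-dimensional QFT accurate to within $1/(d \cdot \poly(d))$. This is achieved by choosing the truncation parameter $k = \Theta(\log\log D + \log d)$, which makes each one-dimensional circuit have size $O(\log D \cdot (\log\log D + \log d))$. Summing over the $d$ coordinates yields the claimed circuit size of
\[
O\bigl(d \log D \cdot (\log\log D + \log d)\bigr) = O(n \log n),
\]
using $d = \lfloor \sqrt{n} \rfloor$ and $\log D = (A + o(1))\sqrt{n}$ so that $\log\log D = O(\log n)$.

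Finally, the space claim is immediate: the circuit operates in-place on the $d$ coordinate registers of $\ket{z}$, each of which holds $\log D$ qubits, for a total of $d \log D = (A + o(1))n$ qubits and no ancillas. The only technical subtlety is verifying that the error bound of $1/\poly(d)$ propagates correctly through the tensor product, but this is a routine application of the standard fact that $\lVert U_1 \otimes \cdots \otimes U_d - V_1 \otimes \cdots \otimes V_d \rVert \leq \sum_i \lVert U_i - V_i \rVert$ in operator norm, which I do not see as a serious obstacle.
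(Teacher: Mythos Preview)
Your proposal is correct and is essentially the standard argument underlying the cited result. The paper itself does not give a proof of this lemma at all; it simply attributes the statement to \cite{Cop02,Regev23} and moves on, so your sketch in fact supplies more detail than the paper does.
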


\subsection{Overview of Regev's Classical Postprocessing}

Regev's classical postprocessing procedure~\cite{Regev23} works with the lattice $\mathcal{L}$, but it can really be viewed as an algorithm for any lattice $\Lambda \subseteq \ZZ^d$. We state Regev's result in these terms below:

\begin{lemma}\label{lemma:regevpostprocessing}
    Let $\Lambda \subseteq \ZZ^d$ be a lattice, and let $m = d+4$. Additionally, let $T > 0$ be some norm bound. Assume we are given as input $m$ independent samples of the form $$w_i = v_i + \delta_i,$$ where each $v_i$ is a uniform sample from $\Lambda^\ast/\ZZ^d$ and $\delta_i$ is some additive error of magnitude at most $\delta$. Additionally, assume the following inequality: $$(m+d)^{1/2} \cdot 2^{(m+d)/2} \cdot (m+1)^{1/2} \cdot T < \delta^{-1} \cdot (4 \det \Lambda)^{-1/m}/6.$$Then there exists a classical polynomial-time algorithm that, with probability at least $1/4$, outputs a finite sequence of vectors $z_1, z_2, \ldots, z_l \in \Lambda$ such that any $u \in \Lambda$ with $||u||_2 \leq T$ can be written as an integer linear combination of the $z_i$'s.

    We note that the algorithm is deterministic; the success probability is taken over the randomness of the $w_i$'s.
\end{lemma}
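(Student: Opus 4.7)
The plan is a standard ``samples to short vectors'' reduction: build an auxiliary integer lattice $\mathcal{L}'$ of dimension $d+m$ whose short vectors encode candidate elements of $\Lambda$, run LLL to find a reduced basis, and use the uniformity of the $v_i$'s to argue that every sufficiently short vector really comes from $\Lambda$. Concretely I would fix an integer scaling $S \approx \delta^{-1}$ and let $\mathcal{L}'$ be the lattice generated by the columns of
\[
B \;=\; \begin{pmatrix} I_d & 0 \\ \lfloor SW\rceil & S\,I_m \end{pmatrix},
\]
where the rows of $W$ are the samples $w_i$ reduced to $(-\tfrac12,\tfrac12]^d$. Every $u\in\ZZ^d$ lifts to $\mathrm{lift}(u)=(u;\,r_1,\ldots,r_m)^\top\in\mathcal{L}'$ with $r_i\in(-S/2,S/2]$ the centered residue of $\langle\lfloor Sw_i\rceil,u\rangle \bmod S$, and $\det\mathcal{L}'=S^m$ independently of the samples.

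For completeness, if $u\in\Lambda$ then $\langle u,v_i\rangle\in\ZZ$ by the dual-lattice definition, so $\langle u,w_i\rangle$ lies within $\|u\|\delta$ of $\ZZ$, giving $|r_i|\leq S\|u\|\delta + O(\sqrt d)$. Choosing $S\delta$ small enough, the lift of any $u\in\Lambda$ with $\|u\|\leq T$ has norm at most $T\sqrt{m+1}$. Taking a spanning set of $\Lambda\cap\{\|u\|\leq T\}$ yields $d$ linearly independent short vectors in $\mathcal{L}'$, so LLL produces a reduced basis whose first $d$ vectors have norm at most $(d+m)^{1/2}\cdot 2^{(d+m)/2}\cdot(m+1)^{1/2}\cdot T$, exactly the left-hand side of the hypothesized inequality.

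For soundness, every vector of $\mathcal{L}'$ is $\mathrm{lift}(u)$ for some $u\in\ZZ^d$, and a norm bound on the lift forces $\langle u,w_i\rangle$ to be within some small $\eta$ of $\ZZ$ for every $i$. For $u\in\ZZ^d\setminus\Lambda$, the map $v\mapsto\langle u,v\rangle\bmod 1$ sends $\Lambda^*/\ZZ^d$ onto a nontrivial cyclic subgroup of $\RR/\ZZ$ of some order $k\mid\det\Lambda$, so for uniform $v_i$ the event $\langle u,w_i\rangle\in\ZZ+[-\eta,\eta]$ has probability at most $2\eta + 1/k$. Taking the product over $i\in[m]$ and union-bounding over the at most $(2T'+1)^d$ candidate $u\notin\Lambda$ in the ball of radius $T':=\delta^{-1}(4\det\Lambda)^{-1/m}/6$ (partitioning by $k$ so that the $1/k$ term dominates only when $k$ is small), the failure probability stays below $3/4$; the numerical constants and the $(4\det\Lambda)^{-1/m}$ factor in the hypothesis are tuned precisely to make this bound close. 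Conditioned on this event, every short LLL vector is a lift of some $u\in\Lambda$, so the first $d$ coordinates of the reduced basis give the desired $z_1,\ldots,z_l\in\Lambda$; the completeness bound additionally shows that any $u\in\Lambda$ with $\|u\|\leq T$ has a short lift and therefore lies in their integer span.

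The delicate step is the union bound in the third paragraph. It must simultaneously absorb the LLL distortion $2^{(d+m)/2}$, the completeness blow-up $\sqrt{m+1}$, the $1/k$ tail of character concentration on $\Lambda^*/\ZZ^d$, and the $(2T')^d$ candidate count; splitting by the divisor $k\mid\det\Lambda$ rather than applying a single worst-case bound, and choosing $m=d+4$, are precisely what is needed to achieve the $(4\det\Lambda)^{-1/m}$ scaling in the hypothesis and the claimed $1/4$ success probability.
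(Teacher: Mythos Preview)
Your overall architecture matches the paper's (Algorithm~\ref{algo:regevpostprocessing}, following~\cite{Regev23}): build the auxiliary lattice $\Lambda'$ with basis essentially $\begin{pmatrix} I_d & 0 \\ \delta^{-1}W & I_m \end{pmatrix}$, run LLL, and argue completeness and soundness. Completeness is fine. The soundness argument, however, has a real gap. You union-bound over the $(2T'+1)^d$ integer points in a ball of radius $T'\approx\delta^{-1}$; with $d=\sqrt{n}$ and $\delta=2^{-\Theta(\sqrt{n})}$ this is $2^{\Theta(n)}$ candidates, while the per-candidate failure probability $(3\epsilon)^m=(4\det\Lambda)^{-1}$ is only as small as $2^{-n}$, so the product does not close. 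The observation you are missing is that the event ``$\langle u,v_i\rangle\in[-\epsilon,\epsilon]\bmod 1$ for all $i$'' depends only on the coset of $u$ in $\ZZ^d/\Lambda$, so the correct union is over the $\det\Lambda$ nonzero cosets (this is Lemma~\ref{lemma:regev4.3} in the paper), and then $\det\Lambda\cdot(3\epsilon)^m=1/4$ exactly.

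Your treatment of small $k$ is also not what the proof actually does. Regev does not ``partition by $k$''; instead he first conditions on the event that $v_1,\ldots,v_m$ generate $\Lambda^*/\ZZ^d$, which holds with probability $\geq 1/2$ by a result of Pomerance~\cite{pomerance2002expected} precisely because $m=d+4$. Conditioned on generation, every nonzero coset $u$ has some $i$ with $\langle u,v_i\rangle\not\equiv 0\bmod 1$, so when the order $k$ satisfies $k<1/\epsilon$ this value automatically lies outside $[-\epsilon,\epsilon]$; only cosets with $k\geq 1/\epsilon$ enter the union bound above. (A direct partition by $k$ can be salvaged---one gets $\sum_{k\geq 2}k^d\cdot k^{-m}=\zeta(4)-1$---but only after switching to the coset union bound.) Finally, a small correction on the algorithm itself: you cannot simply take ``the first $d$ LLL vectors,'' since $\Lambda$ need not contain $d$ independent vectors of norm $\leq T$; the algorithm instead cuts off at the index $l$ where the Gram--Schmidt norm first exceeds the threshold (step~3 of Algorithm~\ref{algo:regevpostprocessing}).
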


For the factoring algorithm,~\cite{Regev23} takes $\mathcal{L} = \Lambda$ and $T = 2^{C\sqrt{n}}$ and uses the bound that $\det \mathcal{L} \leq N < 2^n$. Plugging in and solving implies that we require $A \geq C+2+o(1)$ (see Appendix \ref{sec:paramcheck} for this simple calculation).

We remark here that our other main contribution in this paper is extending the algorithm in Lemma \ref{lemma:regevpostprocessing} to handle cases where a small constant fraction of the $m$ samples may be completely corrupted. Our algorithm uses Regev's classical postprocessing procedure~\cite{Regev23} as its final step, but we need to modify the analysis by~\cite{Regev23} to obtain a result slightly different from Lemma \ref{lemma:regevpostprocessing}. We defer a discussion of this algorithm and its analysis to Appendix \ref{sec:regevlatticeanalysis}. 

\subsection{Regev's Number-Theoretic Assumption}\label{sec:regevconjecture}

Regev's choice of the parameter $T$ arises from the following heuristic number-theoretic assumption~\cite{Regev23}:

\begin{conjecture}\label{conjecture} 
There exists a vector in $\mathcal{L} \backslash \mathcal{L}_0$ of $\ell_2$ norm at most $T = 2^{C\sqrt{n}}$, for some given constant $C > 0$.
\end{conjecture}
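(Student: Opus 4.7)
The statement in question is a heuristic number-theoretic \emph{conjecture} rather than a theorem with a known proof, so a rigorous argument is not to be expected here; the best one can do is sketch the plausibility reasoning that Regev~\cite{Regev23} uses to motivate the constant $C \approx 1$, and isolate where a proper proof would have to gain traction.

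First, I would bound the covolume of $\mathcal{L}$. The map $\ZZ^d \to (\ZZ/N\ZZ)^\ast$ sending $z \mapsto \prod_i a_i^{z_i}\bmod N$ is a group homomorphism with kernel $\mathcal{L}$, so $\det \mathcal{L} = [\ZZ^d : \mathcal{L}]$ equals the order of its image and is therefore at most $\varphi(N) < 2^n$. Plugging this bound into Minkowski's first theorem with $d = \lfloor\sqrt{n}\rfloor$ yields a nonzero vector of $\mathcal{L}$ of $\ell_2$ norm at most $\sqrt{d}\,(\det \mathcal{L})^{1/d} \leq \sqrt{d}\cdot 2^{n/d} = 2^{\sqrt{n}+o(\sqrt{n})}$. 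So short vectors in $\mathcal{L}$ of norm roughly $2^{(1+\epsilon)\sqrt{n}}$ should certainly exist, suggesting that $C$ just slightly above $1$ ought to suffice as far as $\mathcal{L}$ alone is concerned.

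The harder step is ruling out that every such short vector lies in $\mathcal{L}_0$. Since each $a_i = b_i^2$, the map $z \mapsto \prod_i b_i^{z_i}\bmod N$ restricted to $\mathcal{L}$ takes values in $\{\pm 1\}$, so $\mathcal{L}_0$ has index $1$ or $2$ in $\mathcal{L}$; for the $N$'s we care about (with at least two distinct odd prime factors), the index is exactly $2$. Heuristically, under the assumption that short vectors of $\mathcal{L}$ are "generic" with respect to the $\ZZ/2$ quotient $\mathcal{L}/\mathcal{L}_0$, a constant fraction of them should fall outside $\mathcal{L}_0$; one could attempt to make this quantitative by counting lattice points of norm $\leq T$ in $\mathcal{L}$ versus in $\mathcal{L}_0$ via a Gaussian-heuristic estimate of the form $T^d/\det\Lambda$, and checking that the first dominates the second by a constant factor as soon as $T \geq 2^{(1+\epsilon)\sqrt{n}}$.

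The main obstruction to promoting this to a theorem is the lack of any handle on the arithmetic correlation between the sublattice $\mathcal{L}_0$ and the specific short vectors of $\mathcal{L}$: one cannot rule out pathological $N$ for which every short relation among the $a_i$ happens to also be a relation among the $b_i$ up to sign. This is essentially the same obstruction that prevents a proof of the smooth-number heuristics underlying subexponential factoring algorithms. The follow-up work of Pilatte~\cite{pilatte2024unconditional}, mentioned earlier in the paper, sidesteps this exactly by enlarging the bases $b_i$ to $\exp(\widetilde{O}(\sqrt{n}))$, which expands the effective dimension enough for analytic counting arguments to close the gap; but for the $\poly(n)$-size bases in Conjecture~\ref{conjecture} as stated, I would not expect to do better than the Minkowski-plus-heuristic sketch above.
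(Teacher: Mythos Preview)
You correctly identify that this is a conjecture, not a theorem, and that the paper offers no proof---only the observation (via pigeonhole, where you use Minkowski; either gives the same $2^{(1+o(1))\sqrt{n}}$ bound) that $\mathcal{L}$ contains a short nonzero vector, together with a pointer to Regev's heuristic that $C = 1+\epsilon$ should suffice. Your sketch goes further than the paper in articulating why one expects such a vector to avoid $\mathcal{L}_0$, which is fine and helpful.

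There is, however, a factual slip in your second paragraph: the map $z \mapsto \prod_i b_i^{z_i} \bmod N$ restricted to $\mathcal{L}$ does \emph{not} take values only in $\{\pm 1\}$. It takes values in the group of square roots of $1$ modulo $N$, which has order $2^k$ when $N$ has $k$ distinct odd prime factors. Indeed, this is precisely the mechanism the algorithm exploits: if the map only ever hit $\pm 1$ then $\mathcal{L} = \mathcal{L}_0$ and no nontrivial factor could be extracted. Consequently the index $[\mathcal{L}:\mathcal{L}_0]$ can be as large as $2^{k-1}$ (or even $2^k$), not merely $1$ or $2$. This error does not damage your heuristic---a larger index only makes it \emph{more} plausible that a random short vector escapes $\mathcal{L}_0$---but the sentence as written is incorrect and should be fixed.
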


As observed by \cite{Regev23}, a simple pigeonhole principle argument shows that there exists a nonzero vector in $\mathcal{L}$ with norm at most $2^{(1+o(1))\sqrt{n}}$. Following Regev's heuristic argument suggests that taking $C = 1+\epsilon$ should be sufficient.

Under this assumption, there exists some nonzero vector $u \in \mathcal{L}\backslash \mathcal{L}_0$ of $\ell_2$ norm at most $T$. By Lemma \ref{lemma:regevpostprocessing}, $u$ is expressible as an integer linear combination of $z_1, z_2, \ldots, z_l$. Since $u \notin \mathcal{L}_0$, there exists some $i \in [l]$ such that $z_i \notin \mathcal{L}_0 \Rightarrow z_i \in \mathcal{L} \backslash \mathcal{L}_0$. Hence at least one of $z_1, \ldots, z_l$ is an element of $\mathcal{L} \backslash \mathcal{L}_0$, so we can simply try the aforementioned gcd calculation for each of them one at a time.

Finally, we note that follow-up work by Pilatte~\cite{pilatte2024unconditional} shows that if we allow the bases $b_1, \ldots, b_d$ to be as large as $\exp(O(d \log d))$, then the analogue of Conjecture \ref{conjecture} can be proven unconditionally, thus providing an unconditionally correct variant of Regev's algorithm that improves asymptotically on Shor's algorithm. This completes our overview of Regev's factoring algorithm.

\section{Our Improvements to Regev's Algorithm}
\label{sec:mainresults} 

\subsection{Reducing the Number of Qubits}\label{sec:spaceoptoverview}

Here, we formally state our space improvement to Lemma \ref{lemma:regevoracle}, thus yielding Theorem \ref{thm:mainresult}.

\begin{lemma}\label{lemma:oracle}
    (Compare with Lemma \ref{lemma:regevoracle}) As in Theorem \ref{thm:mainresult}, let $G$ and $S$ be the number of gates and ancilla qubits respectively for our multiplication circuit on $n$-bit integers. Then there exists a quantum circuit mapping $$\ket{z}\ket{0^M} \mapsto \Ket{\prod_{i = 1}^d a_i^{z_i+D/2} \bmod{N}} \ket{\psi_z}.$$ Here, $$M = S+\bigg(\frac{A}{\log \phi}-A+6+o(1)\bigg)n$$ is the initial number of ancilla qubits, and $\ket{\psi_z}$ is some possibly nonzero state on $M+An-n$ qubits that could depend on $z$. Moreover, this circuit uses $O(n^{1/2} \cdot G + n^{3/2})$ gates. Therefore, the total number of qubits used (i.e. the space usage) is $$S + \bigg(\frac{A}{\log \phi} + 6 + o(1)\bigg)n.$$
\end{lemma}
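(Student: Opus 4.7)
The plan is to replace Regev's binary-exponent repeated-squaring implementation of the exponentiation oracle with a Fibonacci (Zeckendorf) exponentiation strategy, in the spirit of Kaliski~\cite{kal17}, and to realize the required in-place modular multiplications using the $(x,x^{-1})$-pair trick that will be provided by Lemma~\ref{lemma:beauregardmult}. This avoids all modular squarings, and with them the extra $n$-qubit scratch register that is the source of Regev's $O(n^{3/2})$ space blowup.

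First I would reversibly convert the binary input register $\ket{z}$ into Zeckendorf representations $z_i + D/2 = \sum_k f_{i,k}F_k$ for $i\in[d]$, with $f_{i,k}\in\{0,1\}$ and no two consecutive $k$'s active. Since binary-to-Zeckendorf on $[0,D)$ is a bijection computable by a small reversible circuit, the standard compute-then-uncompute-input trick maps $\ket{z}_{\mathrm{bin}}\ket{0}_{\mathrm{Zeck}}$ to $\ket{0}_{\mathrm{bin}}\ket{\mathrm{Zeck}(z)}_{\mathrm{Zeck}}$, with the Zeckendorf register occupying $d\lceil\log_\phi D\rceil = (A/\log\phi)n + o(n)$ qubits; $An$ of these reuse the freed input register, leaving $(A/\log\phi-A)n + o(n)$ genuinely new ancilla.

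Second, to avoid the $\Omega(n)$ full-width multiplications that a naive per-base Fibonacci walk would incur, I would amortize across bases in the spirit of Regev's squaring-accumulator trick. Writing $\prod_i a_i^{z_i+D/2} = \prod_k y_k^{F_k}$ with $y_k := \prod_i a_i^{f_{i,k}}\bmod N$, and setting $A_k := \prod_{j\ge k} y_j^{F_{j-k+1}}$ and $B_k := \prod_{j\ge k} y_j^{F_{j-k+2}}$, a short calculation with the Fibonacci recurrence gives $A_k = y_k\cdot B_{k+1}$ and $B_k = A_{k+1}\cdot A_k$, so each iteration needs just a single multiplication by $y_k$ and a single Fibonacci-shift multiplication. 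I iterate $k$ from $L := \lceil\log_\phi D\rceil$ down to $1$, maintaining a pair $(P,Q)=(A_{k+1},B_{k+1})$ together with its modular-inverse pair $(P^{-1},Q^{-1})$, all initialized to $1$. Each iteration coherently computes $y_k$ and $y_k^{-1}$ into scratch by controlled multiplications with the classical constants $a_i$ and $a_i^{-1}$; applies $Q\leftarrow y_k Q$ and then $P\leftarrow PQ$ in place via two calls to Lemma~\ref{lemma:beauregardmult} (which also update the inverse side); relabels $P\leftrightarrow Q$; and uncomputes $y_k,y_k^{-1}$. On termination $P$ holds the desired product.

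The space breakdown is $(A/\log\phi-A)n + o(n)$ for the Zeckendorf extension, $4n$ for $(P,Q,P^{-1},Q^{-1})$, $n$ for $y_k^{-1}$ (which is generically full-width, whereas $y_k$ itself fits in $\widetilde{O}(\sqrt{n})$ qubits), a further $n$ for the workspace each call to Lemma~\ref{lemma:beauregardmult} needs to host the incoming $(ab,(ab)^{-1})$ before cleanly zeroing the previous $(b,b^{-1})$, and $S$ for the out-of-place multiplier's own ancilla, reusable across iterations---totalling $M = S + (A/\log\phi-A+6+o(1))n$ as claimed. For gates, the $O(\sqrt{n})$ full-width in-place multiplications account for $O(\sqrt{n}\cdot G)$, while the reversible Zeckendorf conversion, the $L$ coherent constructions of $y_k$ and $y_k^{-1}$, and the small-operand multiplications $y_k Q$ together contribute $\widetilde{O}(n^{3/2})$ gates, matching $O(\sqrt{n}\cdot G + n^{3/2})$. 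The main technical obstacle will be Lemma~\ref{lemma:beauregardmult} itself: realizing in-place quantum-quantum modular multiplication from black-box access to an out-of-place multiplier without ever coherently computing a modular inverse, using only a single $n$-qubit scratch register and leaving it cleanly zeroed between calls so that all $L$ iterations can share the same $6n$ working space.
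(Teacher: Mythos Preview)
Your architecture coincides with the paper's: reversible Zeckendorf conversion of the exponents, a two-register Fibonacci accumulator carried as an $(x,x^{-1})$ pair, and at each level $k$ a freshly computed ``column product'' $y_k=\prod_i a_i^{f_{i,k}}$ (the paper's $c_j$) folded in via Lemma~\ref{lemma:beauregardmult}. The recurrence you state is equivalent to the paper's update-and-swap loop, and your $4n+n+n$ breakdown lands on the same $6n$ as the paper.

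The genuine gap is in how you obtain $y_k^{-1}$. You propose ``controlled multiplications with the classical constants $a_i^{-1}$'' and place this in the $\widetilde{O}(n^{3/2})$ bucket. But each $a_i^{-1}\bmod N$ is a full $n$-bit integer, so each controlled in-place multiplication by it costs $\Theta(G)$ gates through the black-box multiplier (e.g.\ via Lemma~\ref{lemma:beauregardeasy}); with $d$ of these per level and $L=\Theta(\sqrt{n})$ levels you get $\Theta(dLG)=\Theta(nG)$ gates, a $\sqrt{n}$-factor overshoot. The paper sidesteps this with a trick you have not stated: it first computes the \emph{small} integer $\prod_i a_i^{1-f_{i,k}}$ out of place (a product of $O(\log n)$-bit numbers, via Lemma~\ref{lemma:regevmultiplication}, costing $\widetilde{O}(\sqrt{n})$ gates), and then applies a \emph{single} in-place multiplication by the precomputed classical constant $(\prod_i a_i)^{-1}\bmod N$ using Lemma~\ref{lemma:beauregardeasy}. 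That yields $y_k^{-1}$ at cost $O(G)$ per level, recovering $O(\sqrt{n}\,G+n^{3/2})$. Two smaller points in the same vein: your $Q\leftarrow y_kQ$ step also goes through Lemma~\ref{lemma:beauregardmult} and hence the black-box multiplier, so it contributes $\Theta(G)$ per level and belongs in the $O(\sqrt{n}\,G)$ term, not $\widetilde{O}(n^{3/2})$; and Lemma~\ref{lemma:beauregardmult} is truly in-place, needing $S$ clean plus $n$ \emph{dirty} ancillas (which the paper borrows from the idle Zeckendorf bits $z_{i,j'}$ with $j'\neq j$) rather than a clean workspace that ``hosts the incoming $(ab,(ab)^{-1})$''.
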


\subsection{Tolerating Quantum Errors}\label{sec:formalecc}

Regev's algorithm assumes that all quantum gates and qubits exist without error in every run of the circuit. This may not be the case in practice; indeed, the difficulty of detecting and correcting quantum errors is a significant barrier to constructing quantum computers at scale~\cite{fowler2012surface,campbell2019applying,DBLP:journals/quantum/GidneyE21}. We introduce a model for such errors and state our results formally.

\subsubsection{The Error Model}\label{sec:errormodel}
We model errors by assuming that an additional Hamming error is applied to each sample from the quantum circuit with constant probability. Concretely, let $\mathcal{D}$ be a noise distribution over $\RR^d/\ZZ^d$, $\eta \in \left\{0, 1\right\}$ an integer, and $p = \Theta(1)$ the probability of a quantum error in one run of the circuit. Instead of sampling $w_i = v_i + \delta_i$ directly, we will assume that the following corruption procedure is applied to it:
\begin{enumerate}
    \item With probability $1-p$, we observe the sample $w_i = v_i + \delta_i$. In this case, we say the sample is \emph{uncorrupted}.
    \item With probability $p$, we obtain the following \emph{corrupted} sample:
    \begin{enumerate}
        \item Sample $\epsilon_i \leftarrow \mathcal{D}$. This sampling is independent for each $i$ (this is a reasonable assumption since each run of the circuit is independent).
        \item The sample we observe is now $w_i = \eta(v_i + \delta_i) + \epsilon_i$.
    \end{enumerate}
\end{enumerate}
We comment on the role of the parameter $\eta$ in this model:
\begin{itemize}
    \item If $\eta = 1$, then we are dealing with additive errors: the sample we expect to see is perturbed by a sample from $\mathcal{D}$.
    \item If $\eta = 0$, then we are dealing with overwrite errors: we directly see a sample from $\mathcal{D}$.
\end{itemize}
Our results will hold for either choice of $\eta$, and hence either error model. We note that, in the important special case where $\mathcal{D}$ is uniform over $\RR^d/\ZZ^d$, the two possibilities for $\eta$ yield the same distribution. Moreover, we remark that the overwrite error model is extremely general, and subsumes the additive error model. For example, suppose we let $\mathcal{D}_1$ be the distribution of $v_i + \delta_i$ (i.e. an output from Regev's circuit~\cite{Regev23} without any quantum errors) and $\mathcal{D}_2$ some independent noise distribution over $\RR^d/\ZZ^d$, so that the output of a corrupt run of Regev's circuit would be the sum of a sample $v_i + \delta_i$ from $\mathcal{D}_1$ and a sample $\epsilon_i$ from $\mathcal{D}_2$. The natural way to view this distribution is using our additive error model, with $\mathcal{D} = \mathcal{D}_2$. However, we could also view this using the overwrite error model, by simply taking $\mathcal{D}$ to sample independently from $\mathcal{D}_1$ and $\mathcal{D}_2$ and sum the results (equivalently, we are taking the convolution of $\mathcal{D}_1$ and $\mathcal{D}_2$). (We remark that we do not require the existence of efficient or even inefficient algorithms, classical or quantum, for sampling from $\mathcal{D}$; we just need the distribution $\mathcal{D}$ to exist for this model to make sense.) 

This then begs the question of why we are considering the additive error model at all. The reason for this is that our results will require a special property of the distribution $\mathcal{D}$, that we refer to as being ``well-spread''. We define this formally in Definition \ref{defn:distassumption}. It may in some cases be easier to verify this ``well-spread'' condition when working with the additive error model than with the overwrite error model. For instance, in the example provided in the above paragraph, it will typically be easier to prove that $\mathcal{D}_2$ is well-spread than proving that the convolution of $\mathcal{D}_1$ and $\mathcal{D}_2$ is well-spread (due to the additional noise that comes from the $\delta_i$'s in $\mathcal{D}_1$).

\subsubsection{Our Results}\label{sec:errformalresults}

Let $\alpha > 0$ and $\gamma \in (0, \alpha-1)$ be constants that we will specify later. We will run Regev's quantum circuit $m = \alpha d$ times. 

Our results will need to depend in some way on the structure of the noise distribution $\mathcal{D}$, which we capture in the following definition which, in words, says that a distribution $\mathcal{D}$ is well-spread with respect to a lattice if there is no way to take small (non-zero) integer linear combinations (i.e. linear combinations with small integer coefficients) of samples from $\mathcal{D}$ that results in a point close to the lattice.

\begin{definition}\label{defn:distassumption}
    Let $\dist(\cdot, \cdot)$ denote the distance on the torus $\RR^d/\ZZ^d$ i.e. mod 1, and let $\Lambda \subseteq \ZZ^d$ be a lattice. Then we say $\mathcal{D}$ is $(\alpha,\gamma,A)$-\emph{well-spread with respect to} a lattice $\Lambda^\ast$ if the following holds for any positive integer $k \leq (\alpha-\gamma-1) d$: 
    For i.i.d. $\epsilon_1, \ldots, \epsilon_k \leftarrow \mathcal{D}$, with probability $1-o(1)$, there do not exist integers $a_1, a_2, \ldots, a_k$ such that:
    \begin{itemize}
        \item $|a_i| \leq 2^{\frac{(\alpha-\gamma+3+o(1))n}{2d}}$ for all $i$.
        \item $a_i \neq 0$ for at least one index $i$.
        \item $\dist\big(\sum_{i = 1}^k a_i\epsilon_i, \Lambda^\ast/\ZZ^d \big) \leq 2^{\frac{(-2A+\alpha-\gamma+3+o(1))n}{2d}}.$
    \end{itemize}

    
    \noindent
    Informally, there is a reasonable chance (over the choice of $k$ samples from $\mathcal{D}$) that no non-zero linear combination of the samples is very close to $\Lambda^\ast/\ZZ^d$.
\end{definition}

\noindent
To justify this definition, we observe that this holds for the uniform distribution on the torus $\RR^d/\ZZ^d$:
\begin{lemma}\label{lemma:uniformiswellspread}
    If $A > \frac{(\alpha-\gamma)(\alpha-\gamma+3) + 2}{2}$, the uniform distribution on $\RR^d/\ZZ^d$ is $(\alpha,\gamma,A)$-well-spread with respect to $\Lambda^\ast$.
\end{lemma}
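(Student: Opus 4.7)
My plan is a standard first-moment/union bound argument. Fix any nonzero integer tuple $(a_1,\dots,a_k)$ with $|a_i|\le B$ where $B = 2^{(\alpha-\gamma+3+o(1))n/(2d)}$, and let $r = 2^{(-2A+\alpha-\gamma+3+o(1))n/(2d)}$ be the target distance. The first key observation is that when the $\epsilon_i$ are i.i.d.\ uniform on $\RR^d/\ZZ^d$, the linear combination $\sum_i a_i\epsilon_i \bmod \ZZ^d$ is itself uniform on the torus. This is because if $a_j\neq 0$, then the map $x\mapsto a_j x \bmod 1$ on each coordinate of the torus is Lebesgue-measure-preserving (an $|a_j|$-to-one cover), so conditioned on all other $\epsilon_i$'s the random variable $a_j\epsilon_j$ plus a constant is uniform on $\RR^d/\ZZ^d$.

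Next I would bound the probability that a uniform point on the torus lands within distance $r$ of $\Lambda^\ast/\ZZ^d$. Since $\Lambda\subseteq\ZZ^d$, we have $\Lambda^\ast\supseteq\ZZ^d$ and $|\Lambda^\ast/\ZZ^d| = \det(\Lambda) \le 2^n$ in the Regev setting. Therefore the Lebesgue measure of the set within distance $r$ of $\Lambda^\ast/\ZZ^d$ in the torus is at most $2^n\cdot V_d\cdot r^d$, where $V_d$ is the volume of the unit ball in $\RR^d$. With $d=\lfloor\sqrt{n}\rfloor$, a Stirling estimate gives $V_d = 2^{o(n)}$, so the probability for a fixed tuple is at most $2^{n+o(n)}\cdot r^d = 2^{n + (-2A+\alpha-\gamma+3+o(1))n/2}$.

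Finally I would union bound over tuples. The number of candidate tuples is at most $(2B+1)^k \le 2^{k(\alpha-\gamma+3+o(1))n/(2d)}$, and using $k\le(\alpha-\gamma-1)d$ this is at most $2^{(\alpha-\gamma-1)(\alpha-\gamma+3+o(1))n/2}$. Multiplying the two bounds and setting $\beta=\alpha-\gamma$, the exponent of $2$ in the resulting probability is
\[
  n + \tfrac{n}{2}\bigl(-2A + (\beta+3) + (\beta-1)(\beta+3)\bigr) + o(n)
  \;=\; \tfrac{n}{2}\bigl(2 - 2A + \beta(\beta+3)\bigr) + o(n).
\]
Requiring this to be negative (so that the bound is $o(1)$) is exactly the condition $A > (\beta(\beta+3)+2)/2 = ((\alpha-\gamma)(\alpha-\gamma+3)+2)/2$ in the lemma.

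The main obstacles are essentially bookkeeping: carefully tracking the $o(1)$ slack so that subexponential factors like $V_d$, the ``$+1$'' in $(2B+1)^k$, and the $\poly(d)$ overheads from $d=\lfloor\sqrt{n}\rfloor$ are all absorbed into the $o(n)$ error in the exponent; and making the uniformity argument rigorous for the linear combination (the cleanest way is to fix an index $j$ with $a_j\neq 0$ and integrate out $\epsilon_j$ first). I expect no substantive difficulty beyond these routine checks.
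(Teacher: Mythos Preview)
Your proposal is correct and is essentially the paper's own argument: uniformity of the linear combination, a volume bound for landing near $\Lambda^\ast/\ZZ^d$, and a union bound over the at most $2^{((\alpha-\gamma-1)(\alpha-\gamma+3)+o(1))n/2}$ coefficient tuples, combined with $|\Lambda^\ast/\ZZ^d|=\det\Lambda\le 2^n$. The only cosmetic difference is that the paper union-bounds over pairs (tuple, target point $v\in\Lambda^\ast/\ZZ^d$) and uses the cruder per-coordinate bound $(2r)^d$ instead of your $V_d r^d$, but the exponent arithmetic and the resulting threshold on $A$ are identical.
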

\begin{proof}
    In a nutshell, this follows from a union bound and volume argument. We defer the details of the calculation to Appendix \ref{sec:uniformiswellspread}.
\condqed \end{proof}

\noindent
We can now formally state our result for detecting and handling errors in a constant fraction of runs in Regev's circuit. We first formulate this in terms of an arbitrary lattice $\Lambda \subseteq \ZZ^d$:

\begin{lemma}\label{lemma:ourpostprocessing}
    (Compare with Lemma \ref{lemma:regevpostprocessing}) Let $\Lambda \subseteq \ZZ^d$ and $n$ be a positive integer such that $\det \Lambda < 2^n$. Additionally, let $m = \alpha d$ and let $T > 0$ be some norm bound. Assume we are given potentially corrupt samples $w_1, \ldots, w_m$ constructed according to the error model in Section \ref{sec:errormodel} with the following specifications:
    \begin{itemize}
        \item The initial vectors $v_1, \ldots, v_m$ are independent uniform samples from $\Lambda^\ast/\ZZ^d$.
        \item The small perturbations $\delta_i$ have magnitude at most $\delta$.
        \item The errors $\epsilon_i$ are sampled from a noise distribution $\mathcal{D}$ on $\RR^d/\ZZ^d$. (Recall that we see the uncorrupted sample $w_i = v_i + \delta_i$ with probability $1-p$ and a corrupted sample $w_i = \eta(v_i + \delta_i) + \epsilon_i$ with probability $p$.)
    \end{itemize}
    Let $\gamma > 0$ be a constant such that all of the following inequalities hold:

    \begin{itemize}
        \item $p < 1 - (\gamma+1)/\alpha$. Note that this implicitly requires $\gamma < \alpha-1$.
        \item $A \geq \frac{\alpha-\gamma+3}{2}+o(1).$
        \item $\big(\frac{e\alpha}{\alpha-\gamma}\big)^{\alpha-\gamma} \cdot 2^{-\gamma+1} < 1.$ 
    \end{itemize}
    Additionally, assume the following ``special inequality'': 
    $$(\gamma d + d)^{1/2} \cdot 2^{(\gamma+1)d/2} \cdot (\gamma d + 1)^{1/2} \cdot T < \delta^{-1} \cdot (4 \det \Lambda)^{-1/(\gamma d)} \cdot 2^{-\alpha/\gamma}/6.$$

    \noindent
    Also, assume $\mathcal{D}$ is $(\alpha, \gamma,A)$-well-spread with respect to $\Lambda^\ast$. Then there exists a classical polynomial-time algorithm that, with probability at least $1/4-o(1)$, outputs a finite sequence of vectors $z_1, z_2, \ldots, z_l \in \Lambda$ such that any $u \in \Lambda$ with $||u||_2 \leq T$ can be written as an integer linear combination of the $z_i$'s.

    We note that the algorithm is deterministic; the success probability is taken over the randomness of the $w_i$. 
\end{lemma}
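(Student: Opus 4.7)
The plan is a filter-then-postprocess strategy: first, use lattice reduction to identify integer combinations of samples that avoid the corrupted ones, then feed the resulting relations into Regev's classical postprocessing procedure (Lemma \ref{lemma:regevpostprocessing}). The first step exploits the well-spread property of $\mathcal{D}$, and the second invokes Regev's analysis on an effectively clean subset of size $\gamma d$.

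To begin, a Chernoff bound applied to the $m = \alpha d$ independent corruption events (with per-sample corruption probability $p < 1 - (\gamma+1)/\alpha$) shows that with probability $1-o(1)$, the number of corrupt samples is at most $(\alpha-\gamma-1)d$, which matches the $k$-range appearing in Definition \ref{defn:distassumption}; equivalently, at least $(\gamma+1)d$ samples are uncorrupted. We condition on this event for the rest of the argument.

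Next, construct the same lattice Regev uses in the analysis of Lemma \ref{lemma:regevpostprocessing}, but taking as input all $m$ samples $w_1, \ldots, w_m$, and run LLL. Each short basis vector corresponds to a coefficient vector $a = (a_1, \ldots, a_m) \in \ZZ^m$ of small norm such that $\sum_i a_i w_i$ lies close to $\ZZ^d$ on the torus $\RR^d/\ZZ^d$. The crux is the claim: with probability $1-o(1)$, every short LLL output has $a_i = 0$ on every corrupt index $i$. Indeed, writing $B$ for the corrupt index set and using that $v_j \in \Lambda^\ast/\ZZ^d$ for clean $j \notin B$, the clean contribution $\sum_{j \notin B} a_j w_j$ is already close to $\Lambda^\ast/\ZZ^d$ up to the $\delta_j$ perturbations, so a short relation with non-zero support $B' \subseteq B$ on corrupt indices would force $\sum_{i \in B'} a_i \epsilon_i$ to lie within distance $2^{(-2A+\alpha-\gamma+3+o(1))n/(2d)}$ of $\Lambda^\ast/\ZZ^d$ with $|a_i| \leq 2^{(\alpha-\gamma+3+o(1))n/(2d)}$. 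The $(\alpha,\gamma,A)$-well-spread property rules this out with failure probability $o(1)$ per subset $B'$; the arithmetic condition $\bigl(\frac{e\alpha}{\alpha-\gamma}\bigr)^{\alpha-\gamma} \cdot 2^{-\gamma+1} < 1$ supplies the slack needed for the union bound over all $\binom{m}{|B|}$ possible supports to close.

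Once the LLL output is known to be supported on clean indices, the situation becomes equivalent to having run Regev's analysis on a subset of $\gamma d$ uncorrupted samples, to which Lemma \ref{lemma:regevpostprocessing} applies with $m$ replaced by $\gamma d$. The ``special inequality'' in the hypothesis is precisely Regev's original inequality under this substitution, augmented by the $2^{-\alpha/\gamma}$ factor which absorbs the growth of the LLL coefficient norms when optimizing over the larger set of $m = \alpha d$ coordinates rather than only $\gamma d$, together with the cumulative overhead of the filtering step. The main obstacle will be the careful bookkeeping: one must verify that the coefficient and distance bounds produced by LLL on the $m$-sample lattice match the ranges permitted by the well-spread hypothesis, and that the $o(1)$ well-spread failure probability is small enough (as a function of $d$) to survive the union bound over supports $B' \subseteq B$; the third constant-constraint on $\alpha, \gamma$ is precisely what makes this last step close.
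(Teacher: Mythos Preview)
Your high-level filter-then-postprocess picture is right, but two of the key mechanisms are misattributed, and your filtering step as written does not mesh with the stated well-spread hypothesis.

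First, running LLL on \emph{all} $m=\alpha d$ samples gives a lattice in $\RR^{d+\alpha d}$, so the LLL approximation factor is $2^{(1+\alpha)d/2}$ and the short vector (hence each $|a_i|$) is only bounded by $2^{(\alpha+3+o(1))n/(2d)}$. The well-spread hypothesis, however, only rules out combinations with $|a_i|\le 2^{(\alpha-\gamma+3+o(1))n/(2d)}$. The $-\gamma$ in that exponent is not an accident: the paper's algorithm repeatedly runs LLL on subsets $E$ of size exactly $(\alpha-\gamma)d$ disjoint from the already-accepted set $B$, precisely so that the LLL output lives in a $(1+\alpha-\gamma)d$-dimensional lattice and the coefficient bounds match Definition~\ref{defn:distassumption}. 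Your single-shot LLL on all $m$ samples does not give you the right to invoke well-spreadness as stated. The condition $A\ge \frac{\alpha-\gamma+3}{2}+o(1)$, which you do not use, is exactly what guarantees each such LLL call returns a vector with some $a_i\ne 0$ (so the accepted set $B$ grows and the loop terminates).

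Second, you have the roles of the third inequality and the $2^{-\alpha/\gamma}$ factor backwards. No union bound over corrupt supports $B'$ is needed in the filtering step: Definition~\ref{defn:distassumption} already quantifies over \emph{all} small integer coefficient vectors on the corrupt indices, so a single invocation suffices once you condition on at most $(\alpha-\gamma-1)d$ corruptions. Both the inequality $\bigl(\frac{e\alpha}{\alpha-\gamma}\bigr)^{\alpha-\gamma}\cdot 2^{-\gamma+1}<1$ and the $2^{-\alpha/\gamma}$ factor enter only in the \emph{postprocessing} analysis. Because the final $\gamma d$ samples are an adversarially-looking subset of $\alpha d$ independent draws rather than $\gamma d$ fresh independent draws, Lemma~\ref{lemma:regevpostprocessing} does not apply directly; the paper re-proves Regev's two probabilistic claims (generation of $\Lambda^\ast/\ZZ^d$, and no short near-orthogonal $u$) uniformly over all $\binom{\alpha d}{\gamma d}$ possible subsets. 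The third inequality controls that binomial in the generation lemma, and the $2^{-\alpha/\gamma}$ in the special inequality absorbs a crude $2^{\alpha d}$ union bound in the near-orthogonality lemma --- not LLL coefficient growth.
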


\noindent
For example, we can select the parameters $\alpha = 7$ and $\gamma = 5.9$, which then require $p < 1/70$. We briefly comment on the role of each of the stated constraints on the parameters in Lemma \ref{lemma:ourpostprocessing}:
\begin{itemize}
    \item The first inequality is to limit the number of corrupted samples.
    \item The second inequality is needed for Lemma \ref{lemma:gainonesample} and ultimately places a lower bound on the radius of the initial discrete Gaussian superposition one needs to construct (as in Section \ref{sec:discretegaussian}).
    \item The third inequality is needed in Appendix \ref{sec:acciaroproof}; our algorithm will ultimately select some subset of $\gamma d$ samples and we need to take a union bound over all possible choices of this subset, necessitating an upper bound on $\binom{\alpha d}{\gamma d}$.
    \item Finally, the special inequality arises from analysis in Appendix \ref{sec:regevlatticeanalysis} (based on that by~\cite{Regev23}) and places an additional lower bound on the Gaussian radius; intuitively, it says that the small additive errors $\delta_i$ that appear in both uncorrupted and corrupted samples cannot be too large.
\end{itemize}

To apply Lemma \ref{lemma:ourpostprocessing} in the specific case of Regev's factoring circuit~\cite{Regev23}, we may once again take $\Lambda = \mathcal{L}$ and note that $\det \mathcal{L} < 2^n$, then take $T = 2^{C\sqrt{n}}$ and $\delta = 2^{(-A+o(1))n/d}$ and solve the special inequality. This is a straightforward calculation, which we defer to Appendix \ref{sec:ourparamcheck}. Combined with the observations in Section \ref{sec:regevconjecture}, this yields the following theorem.

\begin{theorem}\label{thm:mainthm}
    Assume Conjecture \ref{conjecture}, and let $\gamma > 0$ be a constant such that the hypotheses of Lemma \ref{lemma:ourpostprocessing} (except for the special inequality) hold for the lattice $\mathcal{L}$. Additionally, assume $A \geq \max(C + \frac{\gamma^2 + \gamma + 2}{2 \gamma}, \frac{\alpha-\gamma+3}{2}) + o(1)$. Then there exists a polynomial-time algorithm that, given $m = \alpha d$ samples from Regev's quantum circuit corrupted according to the error model in Section \ref{sec:errormodel}, outputs a non-trivial factor of $N$ with probability $1/4-o(1)$.
\end{theorem}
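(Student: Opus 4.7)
The plan is to combine Conjecture \ref{conjecture} with Lemma \ref{lemma:ourpostprocessing} applied to the specific lattice $\mathcal{L}$, in direct analogy to how Regev~\cite{Regev23} combines his Conjecture \ref{conjecture} with Lemma \ref{lemma:regevpostprocessing}. The work is almost entirely bookkeeping: verify that the hypotheses of Lemma \ref{lemma:ourpostprocessing} are met for $\Lambda = \mathcal{L}$, extract the spanning set $z_1,\ldots,z_l$, and use the conjecture to guarantee that at least one $z_i$ yields a nontrivial square root of $1$ modulo $N$.

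First I would instantiate Lemma \ref{lemma:ourpostprocessing} with $\Lambda = \mathcal{L}$, noting that $\det \mathcal{L} \leq N < 2^n$ (as used by Regev), with $T = 2^{C\sqrt{n}}$, with additive-error magnitude bound $\delta = 2^{(-A+o(1))n/d}$ coming from the QFT step of Regev's circuit, and with $m = \alpha d$. Three of the four stated hypotheses of the lemma (the bound on $p$, the bound $A \geq (\alpha-\gamma+3)/2 + o(1)$, and the inequality $(e\alpha/(\alpha-\gamma))^{\alpha-\gamma}\cdot 2^{-\gamma+1} < 1$) are precisely what we assume as hypotheses of Theorem~\ref{thm:mainthm}. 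The remaining ``special inequality'' must be checked by plugging in the values of $T$, $\delta$, $\det \mathcal{L}$, and $d = \lfloor\sqrt{n}\rfloor$; this is the source of the additional requirement $A \geq C + \frac{\gamma^2+\gamma+2}{2\gamma} + o(1)$ stated in the theorem. This calculation is parallel to the one done for Regev's original lemma and is deferred to Appendix \ref{sec:ourparamcheck}; it is routine but is the one place where a concrete arithmetic check is required, so I would present it as the one potential obstacle.

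With the hypotheses verified, Lemma \ref{lemma:ourpostprocessing} yields, with probability at least $1/4 - o(1)$ over the randomness of the corrupted samples, a sequence $z_1,\ldots,z_l \in \mathcal{L}$ such that every $u \in \mathcal{L}$ with $\|u\|_2 \leq T$ is an integer linear combination of the $z_i$'s. By Conjecture~\ref{conjecture}, there exists $u^\star \in \mathcal{L}\setminus\mathcal{L}_0$ with $\|u^\star\|_2 \leq T$. Since $\mathcal{L}_0$ is a sublattice of $\mathcal{L}$, if every $z_i$ lay in $\mathcal{L}_0$ then every integer combination would too, contradicting $u^\star \notin \mathcal{L}_0$. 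Hence some $z_{i^\star} \in \mathcal{L}\setminus\mathcal{L}_0$.

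Finally, for each $i = 1,\ldots,l$ the classical algorithm computes $b_i = \prod_{j=1}^{d} b_j^{(z_i)_j} \bmod N$ and then $\gcd(b_i - 1, N)$. By the same argument recalled in Section \ref{sec:regevreview}, for the good index $i^\star$ the value $b_{i^\star}$ is a nontrivial square root of $1$ modulo $N$, so $\gcd(b_{i^\star} - 1, N)$ is a nontrivial factor of $N$. Each step of the postprocessing (lattice reduction inside Lemma \ref{lemma:ourpostprocessing}, modular exponentiation with small bases $b_j$, and the gcd computations) runs in classical polynomial time, so the overall algorithm is polynomial time and succeeds with probability $1/4 - o(1)$, as claimed. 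The only genuinely new work relative to the proofs of Lemma \ref{lemma:ourpostprocessing} and Regev's original argument is verifying the special inequality, which is exactly the point where the extra $\frac{\gamma^2+\gamma+2}{2\gamma}$ term in the required lower bound on $A$ is produced.
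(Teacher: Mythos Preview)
Your proposal is correct and follows essentially the same approach as the paper: instantiate Lemma~\ref{lemma:ourpostprocessing} with $\Lambda = \mathcal{L}$, $T = 2^{C\sqrt{n}}$, $\delta = 2^{(-A+o(1))n/d}$, verify the special inequality via the calculation in Appendix~\ref{sec:ourparamcheck} (which produces the $C + \frac{\gamma^2+\gamma+2}{2\gamma}$ term), and then invoke Conjecture~\ref{conjecture} exactly as in Section~\ref{sec:regevconjecture} to extract a vector in $\mathcal{L}\setminus\mathcal{L}_0$ and hence a nontrivial factor. The paper presents this argument in the paragraph immediately preceding the theorem together with Appendix~\ref{sec:ourparamcheck}, and your write-up matches it step for step.
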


\noindent
In the special case where $\mathcal{D}$ is the uniform distribution over $\RR^d/\ZZ^d$, combining Theorem \ref{thm:mainthm} with Lemma \ref{lemma:uniformiswellspread} implies that one should take $$A \geq \max\left(C + \frac{\gamma^2 + \gamma + 2}{2 \gamma}, \frac{(\alpha-\gamma)(\alpha-\gamma+3) + 2}{2}\right) + o(1)~.$$

As mentioned earlier, our algorithm and analysis in Lemma \ref{lemma:ourpostprocessing} are also directly applicable to the recent adaptation by Eker{\aa} and G{\"a}rtner~\cite{ekeragartner} of Regev's factoring algorithm~\cite{Regev23} to the discrete logarithm problem modulo prime $p$. We provide an outline in Appendix \ref{sec:eccdl}.

\section{Reducing Qubits via Space-Efficient Exponentiation}\label{sec:oracle}

In this section, we construct a quantum oracle that computes the desired mapping in Lemma~\ref{lemma:oracle}.

\subsection{Reversible Fibonacci Exponentiation}\label{sec:fibexp}

This is the key idea used by Kaliski~\cite{kal17} for space-efficient and reversible classical exponentiation; we restate it here and explain its applicability to our optimization. The bottleneck in \cite{Regev23}'s construction of this oracle is in the use of repeated squaring. As a simple example, suppose we have some $n$-bit integer $\ket{a}$ and we wish to compute $\ket{a^{2^k} \bmod N}$ for some $k$. The natural classical-inspired approach would be to use repeated squaring as follows:
$$\ket{a} \mapsto \Ket{a^2 \bmod N} \mapsto \Ket{a^4 \bmod N} \mapsto \ldots \mapsto \Ket{a^{2^k} \bmod N}.$$
The problem is that none of these operations are easily reversible; computing square roots mod $N$ is hard, and there is also the fact that squaring mod $N$ is not one-to-one. Thus carrying out this computation requires storing all of these quantities in separate registers. This leads to an $O(k)$ blow-up in the space complexity, which appears wasteful.

To get around this, we use the fact that it is much easier to implement reversible multiplication than squaring. Concretely, if we could implement an operation resembling $\ket{a}\ket{b} \mapsto \ket{a} \ket{ab\bmod N}$, we could proceed as follows:
\begin{align*}\ket{a}\ket{a} \mapsto \ket{a}\ket{a^2\bmod{N}} & \mapsto \ket{a^3\bmod{N}}\ket{a^2\bmod{N}} \\
& \mapsto \ket{a^3\bmod{N}}\ket{a^5\bmod{N}} \mapsto \ldots
\end{align*}
We now find Fibonacci numbers rather than powers of 2 in the exponent. For this reason, previous works have also considered using Fibonacci numbers rather than powers of 2 in the exponent for fast and reversible exponentiation \cite{Byrne2007, kal17, Klein2008, Meloni}.

Notice that when using this idea, we will have an extra register lingering around. For example, if our goal is to compute $\ket{a^5 \bmod N}$ in one register, this would give us $\ket{a^3 \bmod N}$ in the other register. We would like to be able to clean this up if needed, but fortunately this is straightforward; we can copy the final output to a new register, and then uncompute our circuit to clean up all intermediate qubits. This will just leave us with the inputs and outputs of our computation. We will use this idea repeatedly in our algorithm.

If we could actually implement $\ket{a}\ket{b} \mapsto \ket{a}\ket{ab\bmod{N}}$, we could directly apply Algorithm \textsc{FibExp} from~\cite{kal17} (with a straightforward adaptation to computing a product of multiple exponents rather than one exponent). This task is known as ``in-place quantum-quantum modular multiplication''. (See Section \ref{sec:kaliski} for definitions of these terms.) The issue is that this appears difficult to do on a quantum computer for arbitrary multiplication algorithms, to the best of our knowledge. Rines and Chuang~\cite{rines2018high} show that this operation can be done on a quantum computer in $O(n^2)$ gates with a careful implementation of schoolbook multiplication using Montgomery multipliers. However, we would like to be able to cleanly adopt and use any classical multiplication algorithm in our circuit, for example to improve asymptotic gate complexity.

To achieve this, we show how to implement an ``abstract form'' of in-place quantum-quantum modular multiplication in the next two lemmas, using only black-box access to a multiplication circuit. As in Theorem \ref{thm:mainresult}, we assume throughout that this multiplication circuit maps $$\ket{a}\ket{b}\ket{t}\ket{0^S} \mapsto \ket{a}\ket{b}\ket{(t + ab) \bmod N} \ket{0^S}$$ where $a$, $b$, $t$ are all $n$-bit integers and $0 \leq a, b, t < N$. $S$ hence denotes the number of ancilla qubits used by the circuit, and we also use $G$ to denote the number of gates in this circuit. In other words, this circuit implements \emph{out-of-place} quantum-quantum modular multiplication.

Our lemmas will also make use of some ``dirty ancilla qubits'' that may not necessarily be in the $\ket{0}$ state. This will allow us to reuse qubits from other parts of our algorithm and cut down on the number of qubits needed in our circuit. The idea of using dirty ancilla qubits to achieve space savings was introduced by \cite{hrs17} in relation to implementing Shor's algorithm with fewer qubits.

The first lemma we need is essentially due to Shor \cite{shor97}. Using the terminology introduced in Section \ref{sec:kaliski}, this addresses the task of in-place quantum-classical modular multiplication.

\begin{lemma}\label{lemma:beauregardeasy}
    Let $a \in [0, N-1]$ be an integer coprime to $N$. Then there exists a circuit using $O(G+n)$ gates mapping $$\ket{x} \ket{0^{S+n}}\ket{g} \mapsto \ket{ax \bmod N} \ket{0^{S+n}} \ket{(-a^{-1}g)\bmod N}$$ for any integers $x, g$ that are reduced mod $N$.
    
    Note that this computation uses and restores $S+n$ clean ancilla qubits, while applying some reversible transformation to $n$ dirty ancilla qubits that initially store the state $\ket{g}$.
\end{lemma}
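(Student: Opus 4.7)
My plan is to adapt the Beauregard/Shor in-place modular multiplication trick---essentially a modular analogue of the classical XOR-swap---to the case where the auxiliary register starts in an arbitrary dirty state $\ket{g}$ instead of $\ket{0}$. The key primitive I need is a quantum-classical out-of-place modular multiplier $M_c$ implementing $\ket{y}\ket{t} \mapsto \ket{y}\ket{(t + cy) \bmod N}$ for a classically known constant $c$ coprime to $N$. I will build this directly from the out-of-place quantum-quantum multiplier promised in the hypotheses of Theorem~\ref{thm:mainresult}: load $c$ into an $n$-qubit register with $O(n)$ NOT gates, run the $G$-gate multiplier, and uncompute $c$ with $O(n)$ NOTs. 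This uses $n+S$ clean ancilla qubits (returned to $\ket{0}$) and costs $O(G+n)$ gates. Note that $a^{-1}\bmod N$ is computed classically once (by the extended Euclidean algorithm) and does not contribute to the quantum gate count.

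With $M_c$ in hand, the mapping $\ket{x}\ket{0^{S+n}}\ket{g} \mapsto \ket{ax}\ket{0^{S+n}}\ket{-a^{-1}g}$ is realized by a four-step sequence between the first register (call it $X$, initialized to $x$) and the third, dirty register (call it $Y$, initialized to $g$), using the second register as workspace for each call to $M_c$:
\begin{enumerate}
    \item Apply $M_a$ with $X$ as source, $Y$ as target: $(x, g) \mapsto (x,\, g+ax)$.
    \item Apply $M_{a^{-1}}^{-1}$ with $Y$ as source, $X$ as target: $(x, g+ax) \mapsto \bigl(x - a^{-1}(g+ax),\, g+ax\bigr) = (-a^{-1}g,\, g+ax)$.
    \item Apply $M_a$ again with $X$ as source, $Y$ as target: $(-a^{-1}g, g+ax) \mapsto \bigl(-a^{-1}g,\, (g+ax) + a(-a^{-1}g)\bigr) = (-a^{-1}g,\, ax)$.
    \item Physically swap $X$ and $Y$ with $3n$ CNOTs: $(-a^{-1}g, ax) \mapsto (ax, -a^{-1}g)$.
\end{enumerate}
The workspace register is clean before and after each $M_c$ call, so the guarantee that the $S+n$ clean ancilla qubits are restored is preserved at the end.

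Correctness is a direct integer-mod-$N$ calculation that uses only that $a$ is invertible mod $N$, so I do not foresee a real obstacle here. The total gate count is three invocations of $M_c$ (each $O(G+n)$) plus an $n$-bit swap, which gives $O(G+n)$ gates in all, matching the claim. The one mildly subtle point is step 2, where $M_{a^{-1}}^{-1}$ is applied with the (now entangled) dirty register as the source of the multiplication and $X$ as the target; this is still just a single call to the same primitive, and after substituting $y = g+ax$ the algebra collapses to leave exactly $-a^{-1}g$ in $X$, which is what lets the subsequent step clear the $g$-dependence out of $Y$.
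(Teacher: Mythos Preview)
Your proposal is correct and essentially the same as the paper's proof: both use three calls to the out-of-place multiplier (with the classical constant loaded into the $n$ clean ancilla qubits) followed by a swap, exploiting the same modular XOR-swap identity $(x,g)\to(x,g+ax)\to(-a^{-1}g,g+ax)\to(-a^{-1}g,ax)$. The only cosmetic difference is that where you use the inverse multiplier $M_{a^{-1}}^{-1}$ in step~2, the paper equivalently uses the forward multiplier with the constant $-a^{-1}\bmod N$, and the paper leaves the constant register loaded between calls rather than clearing it each time.
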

\begin{proof}
    We can classically precompute $a^{-1} \bmod N$ efficiently using the extended Euclidean algorithm. Now proceed as follows using our quantum multiplication circuit given in Theorem \ref{thm:mainresult}:
    \begin{align*}
        \ket{x} & \ket{0^n} \ket{g} \ket{0^S} \\ 
        &\rightarrow \ket{x} \ket{a} \ket{g} \ket{0^S}\text{ (writing in a classical constant using bit-flips)} \\
        &\rightarrow \ket{x} \ket{a} \ket{(g+ax) \bmod N} \ket{0^S} \\
        &\rightarrow \ket{x} \ket{-a^{-1} \bmod N} \ket{(g+ax) \bmod N} \ket{0^S} \\
        & \hspace{1.5in} \text{ (writing in a classical constant again)} \\
        &\rightarrow \ket{(x + (-a^{-1} \cdot (g+ax))) \bmod N} \ket{-a^{-1} \bmod N} \ket{(g+ax) \bmod N} \ket{0^S} \\
        & \hspace{1in} = \ket{(-a^{-1}g) \bmod N} \ket{-a^{-1} \bmod N} \ket{(g+ax) \bmod N} \ket{0^S} \\
        &\rightarrow \ket{(-a^{-1}g) \bmod N} \ket{a} \ket{(g+ax) \bmod N} \ket{0^S} \\ 
        & \hspace{1.5in} \text{ (writing in a classical constant again)} \\
        &\rightarrow \ket{(-a^{-1}g) \bmod N} \ket{a} \ket{(g+ax + (-a^{-1}g \cdot a)) \bmod N} \ket{0^S} \\
        &\hspace{1in} =  \ket{(-a^{-1}g) \bmod N} \ket{a} \ket{ax \bmod N} \ket{0^S} \\
        &\rightarrow \ket{(-a^{-1}g) \bmod N} \ket{0^n} \ket{ax \bmod N} \ket{0^S}\\ 
        & \hspace{1.5in} \text{ (writing in a classical constant again)} \\
        &\rightarrow \ket{ax \bmod N}\ket{0^n} \ket{(-a^{-1}g) \bmod N} \ket{0^S}
    \end{align*}
    This runs our multiplication circuit three times and does $O(n)$ bit flips and bit swaps, for a total of $O(G+n)$ gates. This completes our proof.
\condqed \end{proof}
In the next lemma, we adapt this idea to the quantum-quantum case i.e. where $a$ may be a superposition of integers rather than a classical constant. Note that this lemma does not exactly implement quantum-quantum modular multiplication, but rather an abstracted form of it where we view $\ket{a}\ket{a^{-1}\bmod{N}}$ as an abstract representation of $a$.
\begin{lemma}\label{lemma:beauregardmult}
    There exists a quantum circuit using $O(G+n)$ gates such that, for all $n$-bit integers $a, b, g \in [0, N-1]$ such that $a$ and $b$ are coprime to $N$, it will map\footnote{We are not concerned about the circuit's behavior on other basis states.} 
    \begin{align*} 
    \ket{a}\ket{a^{-1}\bmod N} & \ket{b}\ket{b^{-1}\bmod N}\ket{g}\ket{0^{S}} \\ 
    & \mapsto \ket{a}\ket{a^{-1}\bmod N}\ket{ab\bmod N}\ket{(ab)^{-1}\bmod N}\ket{g}\ket{0^{S}}.
    \end{align*}
    Note that this computation uses and restores $S$ clean ancilla qubits and $n$ dirty ancilla qubits.
\end{lemma}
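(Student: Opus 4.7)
The plan is to lift Lemma~\ref{lemma:beauregardeasy} to the setting where the multiplier $a$ is itself in a quantum register, using the additional register $\ket{a^{-1}\bmod N}$ to substitute for the classical precomputation of $a^{-1}$. Inspecting the proof of Lemma~\ref{lemma:beauregardeasy}, the constant $a$ is only used in two ways: bits of $a$ (and of $a^{-1}$) are written into ancilla registers by bit flips, and then these are used inside three invocations of the multiplication oracle $M$. If $\ket{a}$ and $\ket{a^{-1}}$ are already available as inputs, the bit-flip steps become no-ops and the three applications of $M$ go through verbatim with $a$ in superposition. Concretely, given $\ket{x}\ket{a}\ket{a^{-1}}\ket{g}\ket{0^S}$ with $\gcd(a,N)=1$, I would (i) apply $M$ using $(a,x)$ to write $g+ax$ into the $g$ register; (ii) apply $M^{-1}$ using $(a^{-1}, g+ax)$ into $x$, which sends $x \mapsto x - a^{-1}(g+ax) = -a^{-1}g \bmod N$; (iii) reapply $M$ using $(a, -a^{-1}g)$ into $g+ax$, which subtracts $g$ and leaves $ax$ there; finally, swap the first and fourth registers bitwise. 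This yields a subroutine $U_{\mathrm{mul}}$ with
\[
\ket{x}\ket{a}\ket{a^{-1}}\ket{g}\ket{0^S} \;\mapsto\; \ket{ax \bmod N}\ket{a}\ket{a^{-1}}\ket{(-a^{-1}g)\bmod N}\ket{0^S},
\]
using $3G + O(n)$ gates and treating $\ket{g}$ as a dirty ancilla.

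Next, to prove Lemma~\ref{lemma:beauregardmult}, I would apply $U_{\mathrm{mul}}$ twice in a carefully chosen way so that the dirty ancilla $\ket{g}$ is restored at the end. First, apply $U_{\mathrm{mul}}$ to the $\ket{b}$ register, using $\ket{a}, \ket{a^{-1}}$ as the ``multiplier pair'' and $\ket{g}$ as the dirty ancilla; this sends $b \mapsto ab$ and $g \mapsto -a^{-1}g$. Second, apply $U_{\mathrm{mul}}$ to the $\ket{b^{-1}}$ register, now using $\ket{a^{-1}}, \ket{a}$ as the multiplier pair (i.e.\ the roles of the two inverse registers are swapped) and using the same $\ket{g}$ register (currently holding $-a^{-1}g$) as the dirty ancilla. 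This sends $b^{-1} \mapsto a^{-1}b^{-1} = (ab)^{-1} \bmod N$ and sends $-a^{-1}g \mapsto -a(-a^{-1}g) = g$, thereby restoring the dirty ancilla to its original value. The $\ket{a}$ and $\ket{a^{-1}}$ registers act as controls throughout and are unchanged. The total gate count is $O(G+n)$, matching the claim.

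The main conceptual step is recognizing that the obstacle to directly quantizing Lemma~\ref{lemma:beauregardeasy} — namely the need to compute $a^{-1}\bmod N$ from $a$, which would require an $O(n^2)$ extended-Euclidean subroutine — can be sidestepped entirely by carrying $\ket{a^{-1}}$ alongside $\ket{a}$ as part of the representation. The only mildly delicate point will be the bookkeeping for the dirty ancilla: we must ensure that the two invocations of $U_{\mathrm{mul}}$ exactly invert each other's action on $\ket{g}$. This works out precisely because the action on the dirty register in $U_{\mathrm{mul}}$ is multiplication by $-a^{-1}$, so swapping the roles of $a$ and $a^{-1}$ in the second call composes to multiplication by $-a \cdot (-a^{-1}) = 1$. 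Verifying correctness on the stated input basis states (with $a,b$ coprime to $N$ and all registers reduced mod $N$) then reduces to checking that each intermediate quantity satisfies the input conventions of $M$, which is straightforward.
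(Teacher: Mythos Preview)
Your proof is correct and essentially the same as the paper's: unpacking your two calls to $U_{\mathrm{mul}}$ gives exactly the paper's six-multiplication chain, just with the register swaps placed after each block of three rather than all at the end. The modular factorization you present (and your explanation that the two $g\mapsto -a^{-1}g$ and $g\mapsto -ag$ actions compose to the identity) makes the restoration of the dirty ancilla more transparent than in the paper's linear write-up, but the underlying circuit is identical.
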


\begin{proof}
    The quantum multiplication circuit assumed in Theorem \ref{thm:mainresult} allows us the following computation when $0 \leq t < N$:
    $$\ket{a}\ket{b}\ket{t}\ket{0^S} \mapsto \ket{a}\ket{b}\ket{(t+ab)\bmod N}\ket{0^S}.$$
    We can also hence use the inverse of this circuit, which will behave as follows when $0 \leq t < N$:
    $$\ket{a}\ket{b}\ket{t}\ket{0^S} \mapsto \ket{a}\ket{b}\ket{(t-ab)\bmod N}\ket{0^S}.$$
    We can use these two circuits to proceed as follows. (All the computations below are $\bmod{N}$, which we omit for compactness. For example, when we say $a^{-1}$, we mean $a^{-1}\bmod N$.)
    \begin{align*}
        \ket{a} & \ket{a^{-1}} \ket{b}\ket{b^{-1}} \ket{g}\ket{0^S} \\
        & \rightarrow \ket{a}\ket{a^{-1}}\ket{b}\ket{b^{-1}} \ket{g+ab}\ket{0^S} \\
        & \rightarrow \ket{a}\ket{a^{-1}}\ket{b - (a^{-1} \cdot (g+ab))}\ket{b^{-1}} \ket{g+ab}\ket{0^S} \\
        & \hspace{.5in} = \ket{a}\ket{a^{-1}}\ket{-a^{-1}g}\ket{b^{-1}} \ket{g+ab}\ket{0^S} \\
        & \rightarrow \ket{a}\ket{a^{-1}}\ket{-a^{-1}g }\ket{b^{-1}} \ket{(g+ab)+a \cdot (-a^{-1}g)} \ket{0^S} \\
        & \hspace{.5in} =\ket{a}\ket{a^{-1}}\ket{-a^{-1}g }\ket{b^{-1}} \ket{ab}\ket{0^S} \\
        & \rightarrow \ket{a}\ket{a^{-1}}\ket{-a^{-1}g + a^{-1}b^{-1}}\ket{b^{-1}} \ket{ab }\ket{0^S} \\
        & \rightarrow \ket{a}\ket{a^{-1}}\ket{-a^{-1}g + a^{-1}b^{-1}}\ket{b^{-1} - (a \cdot (-a^{-1}g + a^{-1}b^{-1}))} \ket{ab}\ket{0^S} \\
        &\hspace{.5in} = \ket{a}\ket{a^{-1}}\ket{-a^{-1}g + a^{-1}b^{-1}}\ket{g} \ket{ab}\ket{0^S} \\
        & \rightarrow \ket{a}\ket{a^{-1}}\ket{(-a^{-1}g + a^{-1}b^{-1}) + a^{-1} \cdot g}\ket{g} \ket{ab}\ket{0^S} \\
        &\hspace{.5in} = \ket{a}\ket{a^{-1}}\ket{(ab)^{-1} }\ket{g} \ket{ab}\ket{0^S} \\
        & \rightarrow \ket{a}\ket{a^{-1}}\ket{ab}\ket{(ab)^{-1}}\ket{g}\ket{0^S}.
    \end{align*}
    This runs our multiplication circuit four times and its inverse twice and then does a constant number of swaps of $n$-bit registers at the end, for a total of $O(G+n)$ gates. This completes our proof.
\condqed \end{proof}

The next lemma essentially plugs Lemma \ref{lemma:beauregardmult} into Kaliski's algorithm~\cite{kal17}, while taking care to optimize space usage. We quickly set up some notation. Define the Fibonacci numbers by $F_0 = 0, F_1 = 1$, and then $F_k = F_{k-1} + F_{k-2}$ for $k \geq 2$. Let $K$ be maximal such that $F_K \leq D$. We have:
\begin{align*}
    K &= (1+o(1)) \cdot \frac{\log D}{\log \phi} 
    = (1+o(1)) \cdot \frac{(A+o(1))\sqrt{n}}{\log \phi} 
    = (\alpha + o(1)) \sqrt{n},
\end{align*}
where we let $\alpha = \frac{A}{\log \phi}$.

Also, let $\ket{\psi(a)}$ denote the state $\ket{a}\ket{a^{-1}\bmod N}$ for an $n$-bit integer $a$ coprime to and reduced modulo $N$. Each $\ket{\psi(a)}$ hence requires $2n$ qubits to store.

\begin{lemma}\label{lemma:fibonacci}
    \cite{kal17} There exists a quantum circuit using $O(K(G+n)) = O(n^{1/2}(G+n))$ gates such that, for all $n$-bit integers $c_1, \ldots, c_K$ coprime to and reduced modulo $N$, it will map \begin{align*} \ket{\psi(c_1)}\ldots \ket{\psi(c_K)} & \ket{0^{S+5n}} \\ 
    & \mapsto \ket{\psi(c_1)}\ldots \ket{\psi(c_K)} \Ket{\psi\left(\prod_{j = 2}^K c_j^{F_{j-1}}\right)} \Ket{\psi\left(\prod_{j = 1}^K c_j^{F_j}\right)} \ket{0^{S+n}}.
    \end{align*}
    \noindent
    (All registers throughout this lemma and its proof work with integers mod $N$, so we drop the $\bmod\ N$ everywhere for convenience.)
\end{lemma}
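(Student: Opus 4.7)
The plan is to maintain two ``running'' registers that track a Fibonacci-like recurrence, updating them at each of $K$ iterations via the abstracted in-place modular multiplication from Lemma~\ref{lemma:beauregardmult}. Concretely, I would process the inputs in reverse order --- setting $d_j := c_{K-j+1}$ --- and maintain the invariant
\[
u_k \;=\; \prod_{j=1}^k d_j^{F_{k-j}}, \qquad v_k \;=\; \prod_{j=1}^k d_j^{F_{k-j+1}},
\]
starting from $(u_0, v_0) = (1, 1)$. The identity $F_{m-1} + F_m = F_{m+1}$ (applied with $m = k-j$) yields the update rule $(u_k, v_k) = (v_{k-1},\; d_k \cdot u_{k-1} \cdot v_{k-1})$. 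After $K$ iterations, a direct substitution $i = K-j+1$ shows that $u_K = \prod_{j=2}^K c_j^{F_{j-1}}$ and $v_K = \prod_{j=1}^K c_j^{F_j}$, which are exactly the two factors claimed by the lemma, in the stated register order.

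To implement each update within the claimed budget of $S + 5n$ ancilla, I would allocate two $2n$-qubit slots \textbf{Reg1} and \textbf{Reg2} holding $\ket{\psi(u_{k-1})}$ and $\ket{\psi(v_{k-1})}$, plus $n$ workspace qubits (initially $\ket{0^n}$) and $S$ clean ancilla. The step has three parts: (a) swap \textbf{Reg1} and \textbf{Reg2} using $O(n)$ SWAP gates, so that \textbf{Reg1} now holds $\ket{\psi(v_{k-1})} = \ket{\psi(u_k)}$; (b) apply Lemma~\ref{lemma:beauregardmult} with first input \textbf{Reg1} and second input \textbf{Reg2}, which preserves \textbf{Reg1} and updates \textbf{Reg2} to $\ket{\psi(u_{k-1} v_{k-1})}$; (c) apply Lemma~\ref{lemma:beauregardmult} again with first input the already-available $\ket{\psi(d_k)}$ (one of the preserved $\ket{\psi(c_j)}$ input registers) and second input \textbf{Reg2}, which updates \textbf{Reg2} to $\ket{\psi(d_k \cdot u_{k-1} \cdot v_{k-1})} = \ket{\psi(v_k)}$. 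Both invocations of Lemma~\ref{lemma:beauregardmult} use the $n$ workspace qubits as dirty ancilla (returned unchanged) and the $S$ clean ancilla. Each iteration therefore costs $O(G+n)$ gates, so the total over $K$ iterations is $O(K(G+n)) = O(n^{1/2}(G+n))$, as required.

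The main obstacle is arranging the update so that $u_{k-1}$ can be overwritten (it is no longer needed) while $v_{k-1}$ is preserved (since it must become $u_k$), all without exceeding $5n$ ancilla. This is precisely what the swap-then-multiply-twice structure accomplishes using the ``first input preserved, second input multiplied in-place'' interface of Lemma~\ref{lemma:beauregardmult}: no dedicated scratch slot for the intermediate product $u_{k-1} v_{k-1}$ is needed because it is written directly into the slot that previously held $u_{k-1}$, and the $\ket{\psi(\cdot)}$ representation bundles the inverse alongside the value so that both inputs to the next iteration are in the expected form. Initialization of $(u_0, v_0) = (1,1)$ is just bit flips producing $\ket{\psi(1)} = \ket{1}\ket{1}$ in the two slots; correctness of $(u_K, v_K) = (Q_K, P_K)$ follows by a straightforward induction on $k$ using the recurrence above; and the workspace qubits end in $\ket{0^n}$ as required, yielding the stated $\ket{0^{S+n}}$ of clean output ancilla.
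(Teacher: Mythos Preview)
Your proposal is correct and is essentially the paper's own proof: the paper likewise initializes two accumulator registers to $\ket{\psi(1)}$ and iterates from $j=K$ down to $1$, doing two calls to Lemma~\ref{lemma:beauregardmult} (multiply by the other accumulator, then multiply by $c_j$) followed by a swap, with the same $S+5n$ ancilla accounting. Your reindexing $d_k = c_{K-k+1}$ and swap-first ordering are cosmetic; the invariant you state matches the paper's $x_1 = \prod_{i=j+1}^K c_i^{F_{i-j}}$, $x_2 = \prod_{i=j}^K c_i^{F_{i-j+1}}$ under the change of variable $k = K-j+1$.
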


\begin{proof}
    We will use $4n$ of our ancilla qubits to store two states $\ket{\psi(x_1)}$ and $\ket{\psi(x_2)}$. Thus the state at any given point in our algorithm will be $$\ket{\psi(c_1)}\ket{\psi(c_2)}\ldots \ket{\psi(c_K)} \ket{\psi(x_1)} \ket{\psi(x_2)} \ket{0^{S+n}}.$$ We will update the values of $x_1$ and $x_2$ throughout the algorithm.

    Lemma \ref{lemma:beauregardmult} tells us that we can perform the operation $\ket{\psi(x)} \ket{\psi(y)} \ket{0^{S+n}} \mapsto \ket{\psi(x)} \ket{\psi(xy)} \ket{0^{S+n}}$ using $O(G+n)$ gates. We denote this as updating $y \leftarrow xy$ and will iteratively use this, as described in Algorithm \ref{algo:basicfibonacci}.
    \begin{algorithm}
        \KwData{Initial state $\ket{\psi(c_1)}\ket{\psi(c_2)}\ldots \ket{\psi(c_K)} \ket{0^{S+5n}}$}
        \KwResult{Final state $\ket{\psi(c_1)}\ket{\psi(c_2)}\ldots \ket{\psi(c_K)} \ket{\psi(\prod_{j = 2}^K c_j^{F_{j-1}})} \ket{\psi(\prod_{j = 1}^K c_j^{F_j})} \ket{0^{S+n}}$}
        \begin{enumerate}
            \item Initialize $x_1 \leftarrow 1$ and $x_2 \leftarrow 1$. This is just a matter of copying some qubits into ancilla qubits, and the state is now $\ket{\psi(c_1)}\ket{\psi(c_2)}\ldots \ket{\psi(c_K)} \ket{\psi(x_1)} \ket{\psi(x_2)} \ket{0^{S+n}} = \ket{\psi(c_1)}\ket{\psi(c_2)}\ldots \ket{\psi(c_K)} \ket{\psi(1)} \ket{\psi(1)} \ket{0^{S+n}}$.
            \item Repeat the following for $j = K, K-1, \ldots, 1$ in that order:
            \begin{itemize}
                \item Update $x_1 \leftarrow x_1x_2$ using Lemma \ref{lemma:beauregardmult}.
                \item Update $x_1 \leftarrow x_1c_j$ using Lemma \ref{lemma:beauregardmult}.
                \item Swap $x_1$ and $x_2$ (i.e. swap $\ket{\psi(x_1)}$ and $\ket{\psi(x_2)}$).
            \end{itemize}
        \end{enumerate}
        \caption{Fibonacci multi-exponentiation}\label{algo:basicfibonacci}
    \end{algorithm}

    Before we show the correctness of our algorithm, let us quickly analyze its complexity. The initialization takes $O(n)$ gates. Within the loop, there are $O(G+n)$ gates from calls to Lemma \ref{lemma:beauregardmult} and then $O(n)$ gates to swap $x_1$ and $x_2$. The loop has $K$ iterations, so the total number of gates is $O(K(G+n)) = O(n^{1/2}(G+n))$.

    To show that the algorithm runs correctly, we claim that for all $j \leq K-1$, at the end of round $j$, we will have $x_1 = \prod_{i = j+1}^K c_i^{F_{i-j}}$ and $x_2 = \prod_{i = j}^K c_i^{F_{i+1-j}}$. This will follow by a straightforward induction, which we present in Appendix \ref{sec:fibonacciproof}.
\condqed \end{proof}

Note, importantly, that we would not want to use Lemma \ref{lemma:fibonacci} directly in our algorithm; storing $c_1, \ldots, c_K$ at the same time is an undesirable space overhead. Instead, we will compute $c_1, \ldots, c_K$ on an as-needed basis and apply the same idea; this way, we only need to store one of them at any given time. We explain how to do this and how we define the $c_j$'s in the next section. 

\subsection{Combining Fibonacci Exponentiation with Regev's Optimization}

To use the results from Section \ref{sec:fibexp}, we want to decompose each of our exponents as a sum of distinct Fibonacci numbers. Concretely, recall that we want to compute the expression
$\prod_{i = 1}^d a_i^{z_i + D/2}.$
So for each $i \in [d]$, we would like to write:
$$z_i + D/2 = \sum_{j = 1}^K z_{i, j} F_j,$$
for $z_{i, j} \in \left\{0, 1\right\}$. It is well-known that this can be achieved with a simple greedy algorithm; see Appendix \ref{sec:zeckendorf} for details. Our first task is to compute these coefficients:

\begin{lemma}\label{lemma:zeckendorf}
    There exists a quantum circuit using $O(n^{3/2})$ gates mapping the state $$\ket{z} \ket{0^{dK-d\log D}} \ket{0^{O(\sqrt{n})}} \mapsto \ket{z_{i, j}: i \in [d], j \in [K]} \ket{0^{O(\sqrt{n})}}.$$ 
\end{lemma}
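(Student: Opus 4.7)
The plan is to compute the Zeckendorf representations $z_i + D/2 = \sum_{j=1}^K z_{i,j} F_j$ separately for each $i \in [d]$, via the standard greedy algorithm reviewed in Appendix~\ref{sec:zeckendorf}, implemented reversibly in superposition. Since each $z_i + D/2$ lies in $[0, D)$ and by definition $F_{K+1} > D$, a valid Zeckendorf representation always exists with bits drawn from $F_1, \ldots, F_K$; the greedy algorithm proceeds from $j = K$ down to $j = 1$, at each step testing whether the current remainder is $\geq F_j$ and, if so, setting $z_{i,j} = 1$ and subtracting $F_j$. Each such round is a comparison and a conditional subtraction against a fixed classical constant on a $O(\sqrt{n})$-bit register, so it uses $O(\sqrt{n})$ gates. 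There are $K = O(\sqrt{n})$ rounds per $i$ and $d = O(\sqrt{n})$ values of $i$, giving a total of $O(n^{3/2})$ gates.

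In more detail, for each $i$ the $K$-qubit output block initially holds $\ket{z_i}$ in its first $\log D$ positions and zeros elsewhere. I first SWAP those $\log D$ qubits into a fresh work ancilla $W$ of $\log D + O(1)$ qubits, so that $W$ holds $\ket{z_i}$ and the entire output block becomes zero. I then add the classical constant $D/2$ into $W$ in place using a standard ripple-carry adder with a constant, giving $W = z_i + D/2$. Running the greedy loop, I use a reversible comparator that flips the $j$-th position of the output block if and only if $W \geq F_j$, and then controlled on that position I subtract $F_j$ from $W$; no separate comparison ancilla is required since the output bit itself records the comparison outcome. After all $K$ rounds, the Zeckendorf identity guarantees $W = 0$, so $W$ and the $O(1)$-qubit arithmetic scratch can be returned to the ancilla pool and reused for the next index $i$. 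Total ancilla usage is therefore $O(\sqrt{n})$, as claimed.

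The one subtlety worth flagging is that the input $\ket{z_i}$ sits in the same $K$-qubit register that must ultimately hold $(z_{i,1}, \ldots, z_{i,K})$, so one cannot simply write the output on top of it: the Zeckendorf bits do not in general equal the binary bits of $z_i$, and XORing them would leave garbage in the low $\log D$ positions. The initial SWAP handles this cleanly by relocating $\ket{z_i}$ into the scratch register $W$, after which the $K$ output positions are all zero and can be written to additively via the comparators. No uncomputation of $\ket{z_i}$ is then needed, because $W$ is emptied by the arithmetic of the greedy loop itself. Everything else is routine reversible constant-arithmetic on $O(\sqrt{n})$-bit registers, and the gate bookkeeping yields the stated $O(n^{3/2})$ bound.
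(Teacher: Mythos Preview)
Your proof is correct and follows essentially the same approach as the paper: run the greedy Zeckendorf algorithm of Appendix~\ref{sec:zeckendorf} reversibly for each $i$, using a comparison-then-controlled-subtraction loop over $j=K,\ldots,1$, and exploit the fact that the working remainder ends at zero so its qubits can be recycled. The only cosmetic difference is that the paper performs the arithmetic \emph{in place} on the register already holding $z_i$ (adding $D/2$ there directly, writing the $z_{i,j}$'s to $K$ fresh ancillas, and then freeing the now-zeroed $\log D$ qubits for later rounds), whereas you first SWAP $z_i$ into a dedicated reusable scratch register $W$; these two bookkeeping choices are equivalent up to a relabeling of qubits.
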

\begin{proof}
    We repeat the following greedy procedure for each $i \in [d]$. Note that integers here are computed in absolute terms, rather than modulo $N$. We need the ability to compute in-place additions and subtractions on $O(\sqrt{n})$-bit integers with $O(\sqrt{n})$ ancilla qubits; it was shown by~\cite{draper} that this is possible. We also need to be able to compare integers of length $O(\sqrt{n})$, but this need not be in-place so can also be done with $O(\sqrt{n})$ ancilla qubits.
    \begin{enumerate}
        \item Let $t$ denote the number in the register currently holding $z_i$. First update $t \leftarrow t + D/2$ (so that this register now holds $z_i + D/2$).
        \item Set aside $K$ ancilla qubits to hold $z_{i, j}$ for $j \in [K]$, so that $z_{i, j} = 0$ for all $j$ initially.
        \item Now for each $j = K, K-1, \ldots, 1$, check whether $t \geq F_j$ and write the output of this comparison to the qubit $z_{i, j}$. Then use $z_{i, j}$ as a control qubit to conditionally update $t \leftarrow t - F_j$.
        \item By Lemma \ref{lemma:zeckendorfgreedy}, this greedy algorithm will correctly decompose $z_i+D/2$ as a sum $\sum_{j = 1}^K z_{i, j} F_j$ of distinct Fibonacci numbers, and we will have $t = 0$ at the end. Hence we have freed up those $\log D$ bits as ancilla qubits to use in later steps.
    \end{enumerate}
    
    We have already observed that correctness follows from Lemma \ref{lemma:zeckendorfgreedy}. For the runtime, each step of the innermost loop over $j$ uses $O(\log D) = O(\sqrt{n})$ gates. So, each step of the outer loop over $i$ uses $O(K \log D) = O(n)$ gates. Finally, multiplying by $d = \sqrt{n}$ yields a gate complexity of $O(n^{3/2})$.

    Finally, we address space. All individual steps in the loop can clearly be done using $O(\sqrt{n})$ ancilla qubits (which we can then reuse). Other than that, each step of the loop consumes $K$ ancilla qubits but then frees up $\log D$ ancilla qubits. The total initial ancilla requirement is hence $d(K - \log D) + O(\sqrt{n})$ as desired.
\condqed \end{proof}
Now that we can calculate the $z_{i, j}$'s, let us write our desired expression in those terms:
\begin{align*}
    \prod_{i = 1}^d a_i^{z_i + D/2} &= \prod_{i = 1}^d \prod_{j = 1}^K a_i^{z_{i, j} F_j} 
    = \prod_{j = 1}^K \left(\prod_{i = 1}^d a_i^{z_{i, j}}\right)^{F_j}.
\end{align*}
We can calculate each $\prod_{i = 1}^d a_i^{z_{i, j}}$ efficiently following the idea by \cite{Regev23} to exploit the fact that the $a_i$'s are very small integers:
\begin{lemma}\label{lemma:regevmultiplication}
    \cite{Regev23} There exists a quantum circuit using $O(\sqrt{n} \log^3n)$ gates mapping $$\ket{t_1}\ldots\ket{t_d}\ket{0^{\widetilde{O}(\sqrt{n})}} \mapsto \ket{t_1}\ldots\ket{t_d}\Ket{\prod_{i = 1}^d a_i^{t_i}}\ket{0^{\widetilde{O}(\sqrt{n})}}.$$ 
    Here, the $t_i \in \left\{0, 1\right\}$ for all $i$.
\end{lemma}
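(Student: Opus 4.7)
The plan is to implement the map by a balanced binary tree of integer multiplications, exploiting the fact that each $a_i$ is an $O(\log n)$-bit integer, so the full product $\prod_{i=1}^d a_i^{t_i}$ has bit length at most $d \cdot O(\log n) = O(\sqrt{n} \log n) \ll n$. In particular the product is automatically reduced modulo $N$, so no explicit modular reduction is needed during the computation.

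First I would compute each $a_i^{t_i}$ into a freshly allocated $O(\log n)$-bit ancilla register. Since $t_i \in \{0,1\}$ is a single qubit and $a_i$ is a classical constant, this is just a controlled write: initialize the register to the bit pattern of $1$ and then apply CNOTs from $t_i$ into exactly the positions where the bit patterns of $a_i$ and $1$ disagree. This costs $O(\log n)$ gates per $i$, and $O(\sqrt{n} \log n)$ gates and ancilla qubits in total.

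Next I would multiply these $d$ values up a balanced binary tree. At depth $k$ (leaves at depth $0$), the tree holds $d/2^k$ numbers of bit length $O(2^k \log n)$; advancing to depth $k+1$ requires $d/2^{k+1}$ out-of-place multiplications of pairs of $O(2^k \log n)$-bit integers into fresh ancilla. Using an appropriately compiled fast multiplication circuit such as that of~\cite{Harvey21} (via Appendix~\ref{sec:multimplementation}), each such multiplication uses $\widetilde{O}(2^k \log n)$ gates and $O(2^k \log n)$ clean ancilla qubits that are returned to $\ket 0$. The per-level gate cost is $\tfrac{d}{2^{k+1}} \cdot \widetilde{O}(2^k \log n) = \widetilde{O}(d \log n)$, and summing over the $\log d = \Theta(\log n)$ levels gives a total of $O(\sqrt{n} \log^3 n)$ gates. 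Once the root product is computed, I would CNOT it into the designated output register (at cost $O(\sqrt{n}\log n)$) and then reverse every step above to uncompute the intermediate tree registers and the $\ket{a_i^{t_i}}$ registers, leaving only $\ket{t_1}\cdots\ket{t_d}$ together with the output.

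For the space bound, the peak ancilla usage is the sum over all levels of the sizes of that level's registers, $\sum_{k=0}^{\log d} (d/2^k)(2^k \log n) = d \log n \cdot (\log d + 1) = O(\sqrt{n} \log^2 n) = \widetilde O(\sqrt n)$ qubits. The one point that requires care, and which I would single out as the main (though fairly routine) obstacle, is ensuring that the borrowed internal ancilla of the chosen multiplication subroutine is returned to $\ket 0$ cleanly between levels so that the same workspace can be reused for the next level and the tree can be uncomputed; this follows from the ``clean ancilla'' form of the out-of-place modular multiplication circuit in the hypothesis of Theorem~\ref{thm:mainresult} and the compilation of fast classical multiplication into that form given in Appendix~\ref{sec:multimplementation}.
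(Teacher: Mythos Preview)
Your proposal is correct and is essentially the same approach as the paper's proof: the paper simply cites Regev's classical $O(d\log^3 d)$-gate circuit (which is precisely the balanced binary tree of fast integer multiplications you spell out), applies the standard classical-to-quantum compilation from Appendix~\ref{sec:classicaltoquantum} to obtain a reversible circuit with $\widetilde O(\sqrt n)$ ancilla, copies the output, and uncomputes. Your version is just more explicit about the internals of that tree and gives a slightly sharper ancilla bound ($O(\sqrt n\log^2 n)$ versus the paper's cruder $O(\sqrt n\log^3 n)$), but the argument is the same.
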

\begin{proof}
    \cite{Regev23} shows that such a computation can be done classically with $O(d\log^3d) = O(\sqrt{n}\log^3n)$ gates. We can implement this using unitary gates (as explained in Appendix \ref{sec:classicaltoquantum}) and just use $O(\sqrt{n}\log^3n) = \widetilde{O}(\sqrt{n})$ ancilla qubits for all intermediate values in the circuit. Then we copy the final output $\ket{\prod_{i = 1}^d a_i^{t_i}}$ to a fresh register; each $a_i$ is $O(\log d)$ bits so we need $O(d \log d) = \widetilde{O}(\sqrt{n})$ ancilla qubits and $O(d \log d) = O(\sqrt{n} \log n)$ gates to do this. Finally, we can just uncompute the original circuit to restore all ancilla qubits to $\ket{0}$. In total, we have used $\widetilde{O}(\sqrt{n})$ ancilla qubits and $O(\sqrt{n}\log^3n)$ gates, as desired.
\condqed \end{proof}

Combining these ideas allows us to outline the algorithm. For each $j \in [K]$, define $c_j = \prod_{i = 1}^d a_i^{z_{i, j}}$. Then we have:
\begin{align*}
    \prod_{i = 1}^d a_i^{z_i + D/2} &= \prod_{j = 1}^K \left(\prod_{i = 1}^d a_i^{z_{i, j}}\right)^{F_j} 
    = \prod_{j = 1}^K c_j^{F_j}.
\end{align*}
We can use Lemma \ref{lemma:regevmultiplication} to compute each $c_j$, and Lemma \ref{lemma:fibonacci} to finally compute this product. The only missing detail is that of computing $c_j^{-1}$, which we do as follows:
\begin{align*}
    c_j^{-1} &= \prod_{i = 1}^d a_i^{-z_{i, j}} 
    = \left(\prod_{i = 1}^d a_i^{-1}\right) \cdot \prod_{i = 1}^d a_i^{1-z_{i, j}}.
\end{align*}
We can compute the latter expression using Lemma \ref{lemma:regevmultiplication} since the exponents are now in $\left\{0, 1\right\}$. Then we can apply Lemma \ref{lemma:beauregardeasy} using the classical constant $\prod_{i = 1}^d a_i^{-1}$ to finish computing $c_j^{-1}$. We formally detail all of this in the next section.

\subsection{Proof of Lemma \ref{lemma:oracle}}

Our procedure is detailed in Algorithm \ref{algo:fulloracle}. We track the use of ancilla qubits in the algorithm description.

\begin{algorithm}
    \KwData{Initial state $\ket{z}$ and $S + (\alpha-A+6+o(1))n$ ancilla qubits in the $\ket{0}$ state.}
    \KwResult{Final state comprising $\ket{\prod_{i = 1}^d a_i^{z_i+D/2} \bmod N}$ and $S + (\alpha+5+o(1))n$ qubits in some state (which may not be $\ket{0}$ and could depend on $z$).}
    \begin{enumerate}
        \item Use Lemma \ref{lemma:zeckendorf} to compute and store the values $\ket{z_{i, j}}$ for all $i \in [d]$ and $j \in [K]$. Note that this step also ``overwrites'' the qubits storing $\ket{z}$. (This consumes $dK - d\log D = (\alpha - A + o(1))n$ qubits, leaving $S+(6+o(1))n$ ancilla qubits.)
        \item Set aside $4n$ ancilla qubits. These will store our states $\ket{\psi(x_1)}$ and $\ket{\psi(x_2)}$. Initialize $x_1 \leftarrow 1$ and $x_2 \leftarrow 1$. (This leaves $S+(2+o(1))n$ ancilla qubits.)
        \item Repeat the following for $j = K, K-1, \ldots, 1$ in that order:
        \begin{enumerate}
            \item Consider the qubits $z_{i, j'}$ for $i \in [d]$ and $j' \neq j$. There are $d(K-1) \geq (\alpha - o(1))n \geq (\frac{2}{\log\phi} - o(1))n > 2n$ such qubits, and we will not use any of them for this iteration of the loop. Hence we may take $n-1$ of these qubits and pre-pend one clean qubit in the $\ket{0}$ state to obtain $n$ dirty ancilla qubits. We now have $S+2n$ clean ancilla qubits available.
            \item Update $x_1 \leftarrow x_1x_2$ using Lemma \ref{lemma:beauregardmult}. (This temporarily uses and restores $S$ clean ancilla qubits and $n$ dirty ancilla qubits, which we have.)
            \item Now we prepare the state $\ket{\psi(c_j)}$:
            \begin{enumerate}
                \item Calculate the state $\ket{\prod_{i = 1}^d a_i^{1-z_{i, j}} \bmod N}$ using Lemma \ref{lemma:regevmultiplication} and store it in another $n$-bit register. (We now have $S+n$ clean ancilla qubits.)
                \item Use Lemma \ref{lemma:beauregardeasy} with the classical constant $\prod_{i = 1}^d a_i^{-1} \bmod N$ to update the register from the above step to contain $\ket{c_j^{-1} \bmod N}$. (This will use and then restore all $S+n$ of our remaining clean ancilla qubits, as well as modifying our $n$ dirty ancilla qubits.)
                \item Calculate the state $\ket{c_j}$ using Lemma \ref{lemma:regevmultiplication} and store it in an $n$-bit register. (We now have $S$ clean ancilla qubits.) 
            \end{enumerate}
                
            \item We now have the state $\ket{\psi(c_j)}$, so we can update $x_1 \leftarrow x_1c_j$ using Lemma \ref{lemma:beauregardmult}. (This will use and then restore $S$ clean ancilla qubits and $n$ dirty ancilla qubits.)
            \item Now we uncompute the state $\ket{\psi(c_j)}$, returning all qubits to $\ket{0}$. This can be done by just applying in reverse order the inverses of the circuits used to construct this state. (This will use $S$ of our available clean ancilla qubits and then also free up the $2n$ ancilla qubits containing $\ket{\psi(c_j)}$, so we are now back to $S+2n$ clean ancilla qubits. Additionally, this will return all dirty ancilla qubits to their original value, since the dirty ancilla qubits are only modified in the construction of $\ket{\psi(c_j)}$ when we use Lemma \ref{lemma:beauregardeasy}.)
            \item Swap $x_1$ and $x_2$ (i.e. swap the registers $\ket{\psi(x_1)}$ and $\ket{\psi(x_2)}$).
        \end{enumerate}
    \end{enumerate}
    \caption{Quantum oracle for $\prod_{i = 1}^d a_i^{z_i+D/2}\bmod N$}\label{algo:fulloracle}
\end{algorithm}
First we address the correctness of our algorithm. We make some important observations to justify our use of dirty ancilla qubits:
\begin{itemize}
    \item At step $j$ in the loop, our algorithm only uses qubits $z_{i, j'}$ for $j' \neq j$ as dirty ancilla qubits, so the fact that these qubits might change value during the loop will not affect how $x_1$ and $x_2$ are updated.
    \item We also need to check that our dirty ancilla qubits are usable in Lemmas \ref{lemma:beauregardeasy} and \ref{lemma:beauregardmult} i.e. that $g \in [0, N-1]$, where $\ket{g}$ is the value stored by these ancilla qubits. Initially, this is ensured by the fact that the first bit of $g$ is 0 i.e. $g < 2^{n-1} \leq N$. Additionally, even though Lemma \ref{lemma:beauregardeasy} modifies $g$, it replaces $g$ with some other value that is reduced mod $N$. So $g$ will always be in $[0, N-1]$.
    \item Any application of Lemma \ref{lemma:beauregardmult} will preserve the value of these dirty ancilla qubits.
    \item The one application of Lemma \ref{lemma:beauregardeasy} when constructing $\ket{\psi(c_j)}$ will modify the dirty ancilla qubits, but this will be reversed at the end of step $j$ in the loop when uncomputing $\ket{\psi(c_j)}$. Hence the value of these qubits is ultimately restored at the end of the loop.
\end{itemize}

Now that we have shown that our use of dirty ancilla qubits does not affect anything, correctness follows from the proof of Lemma \ref{lemma:fibonacci}; at the end of step 3, we will have $\ket{x_2} = \ket{\prod_{j = 1}^K c_j^{F_j} \bmod N} = \ket{\prod_{i = 1}^d a_i^{z_i+D/2} \bmod N}$.

Finally, we check the size of our circuit. The steps before the loop use $O(n^{3/2})$ gates. Now, for each step of the loop over $j$:
\begin{enumerate}
    \item Each application of Lemma \ref{lemma:beauregardmult} uses $O(G+n)$ gates.
    \item Computing and later uncomputing the state $\ket{\psi(c_j)}$ uses $O(G+n)$ gates.
    \item The final swap uses $O(n)$ gates.
\end{enumerate}
The loop runs for $K$ iterations, so the total circuit size from the loop will be $O(K(G+n)) = O(n^{1/2}(G+n))$ gates. Combining this with the initialization steps yields a circuit of size $O(n^{1/2} \cdot G + n^{3/2})$, thus completing the proof of Lemma \ref{lemma:oracle}. \qed 

\subsection{Potential Future Space Optimizations}\label{sec:furtheropt}

We have shown that Regev's factoring algorithm can be implemented using just
$$S + \bigg(\frac{A}{\log\phi} + 6 + o(1)\bigg)n$$
qubits, while retaining the asymptotic $O(n^{1/2} \cdot G + n^{3/2})$ circuit size. We remind the reader that here $G$ and $S$ denote the number of gates and the number of ancilla qubits respectively for our multiplication circuit on $n$-bit integers.

A natural question is whether this can be optimized further, particularly the constant $\frac{A}{\log\phi} + 6$ in the linear term. To this end, we describe the various space costs incurred by our algorithm below:

\begin{enumerate}
    \item $\alpha n = \big(\frac{A}{\log\phi} + o(1)\big)n$ qubits are used to store the qubits $z_{i, j}$ for $i \in [d]$ and $j \in [K]$. This is a slight blow-up from the $(A+o(1))n$ bits that we really need to store all the $z_i$'s.
    
    This does appear inefficient, because at any given point in our algorithm we are only interested in one particular value of $j$. However, the natural way to compute the $z_{i, j}$'s involves iterating through each value of $i$ (since we decompose $z_i + D/2$ as a sum of Fibonacci numbers to obtain the $z_{i, j}$'s). A potential workaround would be to find a way to compute $z_{i, j}$ given $i$ and $j$ in a more efficient way than actually calculating the entire representation of $z_i + D/2$ as a sum of Fibonacci numbers, but we are not aware of how to do this.

    In follow-up work~\cite{cryptoeprint:2024/636}, we take a different approach to show that this space cost can be driven down to $(A+\epsilon)n$ for any $\epsilon > 0$. The idea is to work with more general sequences of integers than the Fibonacci numbers that yield encodings $z_{i, j}$ that are less redundant than the Zeckendorf representation.

    \item $S+n$ clean ancilla qubits are needed for the multiplication operations given in Lemmas \ref{lemma:beauregardeasy} and \ref{lemma:beauregardmult}. This is already quite optimized due to our use of dirty ancilla qubits in these lemmas. However, under certain conditions this can be brought down to $S$ (which is necessary for us to be able to use our multiplication circuit).
    
    The only reason we need the extra $n$ ancilla qubits is in Lemma \ref{lemma:beauregardeasy}, which we use with $a = (\prod_{i = 1}^d a_i^{-1}) \bmod N$. These $n$ qubits are used to write in $a$ and $-a^{-1}$ so that we can use our multiplication circuit to multiply by these integers. Note however that $a$ and $-a^{-1}$ are classical constants and can hence be precomputed classically.

    If the quantum multiplication circuit is just an implementation of a classical multiplication circuit with unitaries (e.g. when we use the $O(n \log n)$ multiplication algorithm by~\cite{Harvey21}), then we can save $n$ clean ancilla qubits as follows. Following ideas by Shor~\cite{shor97}, one can bake the values $a$ and $-a^{-1}$ into the given classical circuit architecture to obtain hardwired classical circuits for ``multiplication by $a$'' and ``multiplication by $-a^{-1}$''. Then these hardwired classical circuits can be implemented with unitary gates. Now we no longer require $a$ and $-a^{-1}$ to be explicitly written down in another register.

    For general quantum multiplication circuits, we are not aware of a way to achieve this hardwiring.
    

    \item $4n$ ancilla qubits are needed to store the ``accumulator'' states $\ket{\psi(x_1)}$ and $\ket{\psi(x_2)}$. We are not aware of any approaches to bypass this requirement.



    \item $2n$ ancilla qubits are used to store the state $\ket{\psi(c_j)} = \ket{c_j}\ket{c_j^{-1}\bmod N}$, but $n$ of these qubits are reused from the aforementioned $S+n$ clean ancilla qubits. This is still potentially unnecessary; $c_j = \prod_{i = 1}^d a_i^{z_{i, j}}$ is a product of at most $d$ integers of $O(\log d)$ bits each, so $c_j$ only comprises $O(d \log d) = \widetilde{O}(\sqrt{n})$ bits. While the same cannot be said of $c_j^{-1} \bmod N$, it might be possible to instead store $\prod_{i = 1}^d a_i^{1-z_{i, j}}$ (which also only comprises $\widetilde{O}(\sqrt{n})$ bits) in its place, and bake the fixed multiplier $(\prod_{i = 1}^d a_i^{-1}) \bmod{N}$ into the circuit architecture.

    Once again, we used $n$ qubits for each of these registers so that we could maintain a clean abstraction for multiplication operations, namely multiplying two $n$-bit integers that are reduced mod $N$. It may be possible to improve this if the multiplication algorithm being used can multiply a small integer with an $n$-bit integer without padding to make both integers $n$ bits.
    
\end{enumerate}

\section{Error-Resilience for Regev's Classical Postprocessing}\label{sec:error-resilience}

In this section, we prove Lemma \ref{lemma:ourpostprocessing}, our main error correction lemma for Regev's postprocessing algorithm. We start by introducing our error detection and filtering algorithm. The intuition behind it is that uncorrupted samples will be very close to $\Lambda^\ast/\ZZ^d$, so it should be easy to find small integer relations between them. Moreover, it should be much harder to find a small integer relation between our uncorrupted samples and even one corrupted sample. We formalize this as a shortest-vector problem in a suitably constructed lattice, and will see that the approximate solution to this via LLL~\cite{lenstra1982factoring} is sufficient for our purposes. Using this idea, we can identify sufficiently many uncorrupted samples and apply Regev's postprocessing procedure~\cite{Regev23} to these. Our algorithm is described in Algorithm \ref{algo:allpostprocessing}.

Our analysis proceeds in three steps:

\begin{enumerate}
    \item In Section \ref{sec:completeness}, we show that with probability $1-o(1)$, Algorithm \ref{algo:allpostprocessing} will add at least 1 sample to $B$ in each iteration until $|B| \geq \gamma d$, hence proving that it terminates in polynomial time.
    \item In Section \ref{sec:corruptfiltering}, we formalize the above intuition and argue that none of the samples added to $B$ will be corrupted, with probability $1-o(1)$.
    \item Although we use Regev's postprocessing procedure as-is, it turns out that Regev's analysis is not directly applicable to our algorithm. In Section \ref{sec:regevanalysis}, we outline the necessary modifications and elaborate in Appendix \ref{sec:regevlatticeanalysis}.
\end{enumerate}

\begin{algorithm}
    \KwData{Norm bound $T$ and noisy samples $w_1, w_2, \ldots, w_{m} \in \RR^d/\ZZ^d$ from $\Lambda^\ast/\ZZ^d$ corrupted according to the model in Section \ref{sec:errormodel}.}
    \KwResult{A finite list of elements of $\Lambda$, such that they generate any element $u \in \Lambda$ with $||u||_2 \leq T$.}
    Initialize $B = \emptyset$. $B$ will track the subset of samples that we select. Let $S = 2^{An/d}$. Recall that $m = \alpha d$.
    \begin{enumerate}
        \item Repeat the following until $|B| \geq \gamma d$:
        \begin{enumerate}
            \item Let $E$ be a subset of $[m]$ such that $E \cap B = \emptyset$ and $|E| = (\alpha-\gamma)d$. This exists because $|B| \leq \gamma d$.
            \item Define the matrix $W \in \RR^{d \times |E|}$ to be the matrix with columns $w_i$ for $i \in E$.
            \item Define the lattice $\Lambda'' \subseteq \RR^{d+|E|}$ as that spanned by the columns of $$H = \begin{pmatrix} SI_{d \times d} &\rvline & \begin{matrix} SW \end{matrix} \\ \hline 0 &\rvline & I_{|E| \times |E|}\end{pmatrix}.$$
            We note that this is similar but not identical to the lattice construction used in Regev's postprocessing procedure~\cite{Regev23}.
            \item Apply LLL basis reduction~\cite{lenstra1982factoring} to $\Lambda''$ to find a $2^{(d+|E|-1)/2}$-approximate shortest vector in $\Lambda''$.
            \item This vector can be written as $$H\begin{pmatrix} \beta \\ \hline a_i: i \in E \end{pmatrix} = \begin{pmatrix} S(\beta + \sum_{i \in E} a_iw_i) \\ \hline a_i: i \in E\end{pmatrix},$$where $\beta \in \ZZ^d$ is some vector and $a_i: i \in E$ denotes a column vector in $\ZZ^{|E|}$.
            \item For each $i \in E$, add $i$ to $B$ if and only if $a_i \neq 0$.
        \end{enumerate}
        \item Let $B' \subseteq B$ be arbitrary with $|B'| = \gamma d$. Apply Regev's classical postprocessing (Algorithm \ref{algo:regevpostprocessing}) to the collection $\left\{w_i: i \in B'\right\}$.
    \end{enumerate}
    \caption{Classical postprocessing of the outputs from Regev's quantum circuit}\label{algo:allpostprocessing}
\end{algorithm}

\subsection{Proof that Our Algorithm Selects $\gamma d$ Samples}\label{sec:completeness}

Firstly, note that since $p < 1 - (\gamma+1)/\alpha$, with probability $1-o(1)$ at most $(\alpha-\gamma-1)d$ of our samples from the quantum circuit will be corrupted. We now show an upper bound on the shortest vector of $\Lambda''$ in the next two lemmas:

\begin{lemma}\label{lemma:regevPHP}
    Given vectors $v_1, \ldots, v_d \in \Lambda^\ast$, there exist $a_1, a_2, \ldots, a_d \in \ZZ$, not all zero, such that $|a_i| \leq 2^{n/d}$ for all $i$ and $\sum_i a_iv_i \in \ZZ^d$.
\end{lemma}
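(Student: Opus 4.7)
The plan is a clean pigeonhole argument, in the style of the name of the lemma (``PHP''). The key observation is that the condition $\sum_i a_i v_i \in \ZZ^d$ is equivalent to saying that the images $\bar{v}_i \in \Lambda^\ast/\ZZ^d$ of the $v_i$'s satisfy $\sum_i a_i \bar{v}_i = 0$ in the finite abelian group $\Lambda^\ast/\ZZ^d$. So the problem reduces to finding a small, nonzero integer combination of the $\bar v_i$'s that vanishes in this quotient.

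First I would observe that $|\Lambda^\ast/\ZZ^d| = \det(\ZZ^d)/\det(\Lambda^\ast) = \det(\Lambda) < 2^n$, using the hypothesis from Lemma~\ref{lemma:ourpostprocessing} that $\det \Lambda < 2^n$ (and the standard fact that $\det \Lambda^\ast = 1/\det \Lambda$, together with $\Lambda \subseteq \ZZ^d \subseteq \Lambda^\ast$).

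Next I would set $M = \lfloor 2^{n/d}\rfloor$ and consider the map $\Phi : \{0,1,\ldots,M\}^d \to \Lambda^\ast/\ZZ^d$ defined by $\Phi(a_1,\ldots,a_d) = \sum_i a_i \bar v_i$. The domain has size $(M+1)^d > 2^n \geq |\Lambda^\ast/\ZZ^d|$, so by the pigeonhole principle there exist two distinct tuples $(a_1,\ldots,a_d) \neq (a'_1,\ldots,a'_d)$ with $\Phi(a_1,\ldots,a_d) = \Phi(a'_1,\ldots,a'_d)$. Setting $c_i = a_i - a'_i$ yields integers, not all zero, with $|c_i| \leq M \leq 2^{n/d}$ and $\sum_i c_i v_i \in \ZZ^d$, which is exactly the conclusion we want.

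There is essentially no obstacle here; the only things to double-check are that $(M+1)^d$ strictly exceeds $2^n$ (which holds because $M+1 > 2^{n/d}$) and that the difference of two distinct tuples in $\{0,\ldots,M\}^d$ is indeed a nonzero integer vector with entries of absolute value at most $M$. The lemma is a standard counting argument and will be short; its role in the subsequent development is to provide a baseline short vector in the lattice $\Lambda''$ that LLL can then approximate in the filtering step of Algorithm~\ref{algo:allpostprocessing}.
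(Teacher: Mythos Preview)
Your proposal is correct and is essentially the same pigeonhole argument the paper gives: both count more than $2^n > \det\Lambda = |\Lambda^\ast/\ZZ^d|$ integer combinations of the $v_i$'s, find a collision modulo $\ZZ^d$, and take the difference. The only cosmetic difference is that the paper ranges $a_i$ symmetrically over $|a_i|\le 2^{n/d-1}$ while you range over $\{0,\ldots,\lfloor 2^{n/d}\rfloor\}$, but the resulting bounds and conclusion are identical.
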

\begin{proof}
    This follows from a pigeonhole principle argument similar to that observed by~\cite{Regev23}, which we restate here. Consider the sums $\sum_{i = 1}^d a_iv_i \in \Lambda^\ast$ for all $|a_i| \leq 2^{n/d-1}$. There are $(2^{n/d} + 1)^d > 2^n > \det\Lambda = |\Lambda^\ast/\ZZ^d|$ such sums, so some two belong in the same coset of $\Lambda^\ast/\ZZ^d$. The difference between those sums will hence lie in $\ZZ^d$, and each coefficient will have magnitude $\leq 2^{n/d}$. At least one coefficient will be nonzero.
\condqed \end{proof}

\begin{lemma}\label{lemma:existsshortvector}
    With probability $1-o(1)$, for any $E$ with $|E| = (\alpha-\gamma)d$
    there exists a nonzero vector in the corresponding $\Lambda''$ of length at most $2^{(1+o(1))n/d}$.
\end{lemma}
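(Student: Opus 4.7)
The plan is to show the short vector exists by combining a Chernoff-style bound on the number of corrupted samples with the pigeonhole argument of Lemma~\ref{lemma:regevPHP}, applied to uncorrupted samples within $E$.

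First, since the probability of corruption in each run is $p < 1 - (\gamma+1)/\alpha$, a standard concentration bound gives that with probability $1-o(1)$ the total number of corrupted samples among the $m = \alpha d$ runs is at most $(\alpha - \gamma - 1)d$. I condition on this event for the remainder of the argument. Under this event, any subset $E \subseteq [m]$ of size $(\alpha-\gamma)d$ must contain at least $(\alpha-\gamma)d - (\alpha-\gamma-1)d = d$ indices corresponding to uncorrupted samples. Note this statement is uniform over the choice of $E$ once we condition on the global event, so no additional union bound over $E$ is required.

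Next I fix an arbitrary such $E$ and pick any subset $E_c \subseteq E$ of exactly $d$ uncorrupted indices. For $i \in E_c$, write $w_i = v_i + \delta_i$ with $v_i \in \Lambda^\ast/\ZZ^d$ and $\|\delta_i\| \leq \delta = 2^{(-A+o(1))n/d}$; lift each $v_i$ to a representative in $\Lambda^\ast$. Applying Lemma~\ref{lemma:regevPHP} to the $d$ lifted vectors $\{v_i\}_{i \in E_c}$ (valid since $\det \Lambda < 2^n$), I obtain integers $\{a_i\}_{i \in E_c}$, not all zero, with $|a_i| \leq 2^{n/d}$, together with some $\beta \in \ZZ^d$ satisfying $\beta + \sum_{i \in E_c} a_i v_i = 0$. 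I then extend to all of $E$ by setting $a_i = 0$ for $i \in E \setminus E_c$, and consider the corresponding lattice vector
\[
H\begin{pmatrix} \beta \\ a_i : i \in E \end{pmatrix} = \begin{pmatrix} S\bigl(\beta + \sum_{i \in E_c} a_i w_i\bigr) \\ a_i : i \in E \end{pmatrix} = \begin{pmatrix} S \sum_{i \in E_c} a_i \delta_i \\ a_i : i \in E \end{pmatrix}.
\]
The top block has norm at most $S \cdot d \cdot 2^{n/d} \cdot \delta = d \cdot 2^{(1+o(1))n/d}$ by the triangle inequality, and the bottom block has norm at most $\sqrt{d} \cdot 2^{n/d}$. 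Since $d = \sqrt{n}$ is absorbed into the $o(1)$ in the exponent (as $\log d = o(n/d)$), the total norm is at most $2^{(1+o(1))n/d}$, and the vector is nonzero because not all $a_i$ vanish.

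The argument is largely routine; the only subtlety I anticipate is bookkeeping the $o(1)$ factor in the exponent cleanly, in particular confirming that the $\delta_i$-noise inflation $S\delta = 2^{o(1) \cdot n/d}$ truly stays inside the $(1+o(1))$ slack (which follows from $\delta \leq 2^{(-A+o(1))n/d}$ as given in the setup for Regev's circuit). No obstacle arises from $E$ being chosen adversarially within the iteration, since the high-probability event bounds the global corruption count and thus applies uniformly to every eligible $E$.
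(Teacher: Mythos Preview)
Your proposal is correct and follows essentially the same approach as the paper: condition on the high-probability event that at most $(\alpha-\gamma-1)d$ samples are corrupted, locate $d$ uncorrupted indices inside any $E$, apply the pigeonhole Lemma~\ref{lemma:regevPHP} to their $v_i$'s, extend by zeros, and bound the two blocks of the resulting $\Lambda''$-vector. Your bookkeeping is slightly tighter in places (using $d$ where the paper uses $m$), but both absorb into the $o(1)$ in the exponent and the arguments are otherwise identical.
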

\textit{Proof.}
    Observe firstly that if $|E| = (\alpha-\gamma) d$ and we know that with probability $1-o(1)$ we have at most $(\alpha-\gamma-1)d$ corrupted samples in total, this implies that there are $\geq d$ uncorrupted samples in $E$. By Lemma \ref{lemma:regevPHP}, it follows that there exist integers $a_i$ for $i \in E$ such that:
    \begin{itemize}
        \item $\sum_{i \in E} a_iv_i \in \ZZ^d$.
        \item $|a_i| \leq 2^{n/d}$ for all $i$.
        \item $a_i$ is nonzero for at most $d$ values of $i$, and these indices correspond to uncorrupted samples.
    \end{itemize}
    Hence we can set $\beta = -\sum_{i \in E} a_iv_i \in \ZZ^d$. Then consider the lattice element $$H\begin{pmatrix} \beta \\ \hline a_i: i \in E\end{pmatrix} = \begin{pmatrix} S(\beta + \sum_{i \in E} a_iw_i) \\ \hline a_i: i \in E\end{pmatrix}.$$This is nonzero and an element of $\Lambda''$. Moreover, we have:
    \begin{align*}
        \beta + \sum_{i \in E} a_iw_i &= \beta + \sum_{i \in E} a_iv_i + \sum_{i \in E} a_i\delta_i 
        = \sum_{i \in E} a_i\delta_i,
    \end{align*}
    which has norm at most $m 2^{n/d} \cdot 2^{(-A+o(1))n/d} = m2^{(-A+1+o(1))n/d}.$ (Note that $a_i$ is 0 for any $i$ corresponding to a corrupted sample). Finally, we have $\sum_{i \in E} a_i^2 \leq d \cdot 2^{2n/d}.$ It follows that the norm of this element in $\Lambda''$ is at most:
    \begin{align*}
        \sqrt{S^2m^22^{(-2A+2+o(1))n/d} + d \cdot 2^{2n/d}} &= \sqrt{m^22^{(2+o(1))n/d} + d \cdot 2^{2n/d}} \\
        &\leq \sqrt{(m^2+d)2^{(2+o(1))n/d}} \\
        &= 2^{(1+o(1))n/d}. \hspace{1.4in} \condqed
    \end{align*}

\begin{lemma}\label{lemma:gainonesample}
    With probability $1-o(1)$, as long as $|B| < \gamma d$, our algorithm will add at least one sample to $B$ in each iteration.
\end{lemma}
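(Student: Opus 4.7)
The plan is to run LLL on the lattice $\Lambda''$ and argue that the output must have at least one nonzero $a_i$, essentially because any vector that comes purely from the $S\beta$ part (i.e.\ with all $a_i = 0$) is too long compared to the short vector guaranteed by Lemma~\ref{lemma:existsshortvector}.

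First, I would invoke Lemma~\ref{lemma:existsshortvector}, which says that with probability $1 - o(1)$ the shortest nonzero vector in $\Lambda''$ (for any admissible choice of $E$) has norm at most $2^{(1+o(1))n/d}$. Since LLL~\cite{lenstra1982factoring} produces a $2^{(d+|E|-1)/2}$-approximate shortest vector, the vector it returns has norm at most
\[
2^{(d+|E|-1)/2} \cdot 2^{(1+o(1))n/d}.
\]
Using $|E| = (\alpha - \gamma) d$ and $d = \lfloor\sqrt{n}\rfloor$, so that $n/d = (1+o(1))\sqrt{n}$, this simplifies to $2^{\left(\frac{\alpha-\gamma+3}{2}+o(1)\right)\sqrt{n}}$.

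Next, I would argue by contradiction. Suppose the LLL-returned vector has $a_i = 0$ for every $i \in E$. Since LLL outputs a nonzero lattice vector, the corresponding $\beta \in \ZZ^d$ must be nonzero, so the vector has the form $(S\beta, 0)$ with $\|\beta\| \geq 1$. Its norm is therefore at least $S = 2^{An/d} = 2^{(A+o(1))\sqrt{n}}$. Combining this with the upper bound from LLL yields
\[
2^{(A+o(1))\sqrt{n}} \;\leq\; 2^{\left(\frac{\alpha-\gamma+3}{2} + o(1)\right)\sqrt{n}},
\]
which contradicts the hypothesis $A \geq \frac{\alpha-\gamma+3}{2} + o(1)$ of Lemma~\ref{lemma:ourpostprocessing} (taking the $o(1)$ slack on the right-hand side to dominate the $o(1)$ from the LLL bound). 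Hence at least one $a_i$ must be nonzero, so step~1(f) of Algorithm~\ref{algo:allpostprocessing} adds at least one index to $B$ in that iteration.

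The main subtlety is matching the $o(1)$ terms on both sides and checking that Lemma~\ref{lemma:existsshortvector}'s failure probability $o(1)$ is uniform over the polynomially many iterations and choices of $E$; this is handled by noting that there are at most $m$ iterations and the bad event in Lemma~\ref{lemma:existsshortvector} occurs with probability $o(1)$ once we fix the set of corrupted indices, so a union bound preserves the $1 - o(1)$ success probability. Otherwise the argument is essentially a direct size comparison between the LLL bound and the rigid $S$-scaling of the ``all-zero $a_i$'' sublattice.
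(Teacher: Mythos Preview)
Your proof is correct and follows essentially the same approach as the paper: invoke Lemma~\ref{lemma:existsshortvector} to bound the LLL output, then argue by contradiction that an all-zero $(a_i)$ vector would have norm at least $S = 2^{An/d}$, violating the parameter condition $A \geq \frac{\alpha-\gamma+3}{2}+o(1)$. Your extra care about the union bound over iterations is unnecessary since Lemma~\ref{lemma:existsshortvector} is already stated uniformly over all choices of $E$ (the $1-o(1)$ event is simply that at most $(\alpha-\gamma-1)d$ samples are corrupted), but it does no harm.
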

\begin{proof}
    By Lemma \ref{lemma:existsshortvector}, our algorithm will find a nonzero vector in $\Lambda''$ of $\ell_2$ norm at most $$2^{(d+|E|-1)/2} \cdot 2^{(1+o(1))n/d} \leq 2^{(\alpha-\gamma+3+o(1))n/(2d)}.$$
    We can write this vector as$$H\begin{pmatrix} \beta \\ \hline a_i: i \in E\end{pmatrix} = \begin{pmatrix} S(\beta + \sum_{i \in E} a_iw_i) \\ \hline a_i: i \in E\end{pmatrix}.$$Suppose for the sake of contradiction that our algorithm adds no samples to $B$ in this round. Then $a_i$ must be 0 for all $i \in E$. It follows that $\beta + \sum_{i \in E} a_iw_i = \beta$ must be a nonzero vector in $\ZZ^d$. It follows that the norm of this vector is at least $S||\beta||_2 \geq S = 2^{An/d}$, which is a contradiction for $A \geq \frac{\alpha-\gamma+3}{2} + o(1)$.
\condqed \end{proof}
It follows that in each round of the algorithm, $|B|$ will increase by at least 1. This implies that Algorithm \ref{algo:allpostprocessing} will terminate in polynomial time, and it will find a subset of $\geq \gamma d$ samples to pass to Regev's postprocessing procedure~\cite{Regev23}.

\subsection{Proof that Our Selected Samples are Not Corrupted}\label{sec:corruptfiltering}

Our next step is to show that our algorithm will not select any corrupted samples to be passed to Regev's postprocessing procedure. We address this in the next two lemmas.

\begin{lemma}\label{lemma:nobadlincombs}
    With probability $1-o(1)$, there do not exist integers $a_1, a_2, \ldots, a_m$ such that:
    \begin{itemize}
        \item $|a_i| \leq 2^{(\alpha-\gamma+3+o(1))n/(2d)}$ for all $i$.
        \item $a_i \neq 0$ for at least one index $i$ corresponding to a corrupted sample.
        \item $d(\sum_{i\text{ corrupted}} a_i\epsilon_i, \Lambda^\ast/\ZZ^d) \leq 2^{(-2A+\alpha-\gamma+3+o(1))n/(2d)}.$
    \end{itemize}
\end{lemma}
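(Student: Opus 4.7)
\medskip

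\noindent\textbf{Proof proposal.}
The plan is to reduce the claim directly to the $(\alpha, \gamma, A)$-well-spread hypothesis on $\mathcal{D}$ (Definition~\ref{defn:distassumption}) by conditioning on which samples are corrupted. The key observation is that, modulo the probabilistic step controlling the number of corrupt samples, the three bulleted conditions in the lemma match exactly the three bulleted conditions in the definition of well-spread, once we restrict attention to the coordinates corresponding to corrupt samples.

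More concretely, I would first control the random cardinality $k := |\{i : w_i \text{ is corrupted}\}|$. Since each sample is independently corrupted with probability $p < 1 - (\gamma+1)/\alpha$, we have $\mathbb{E}[k] < (\alpha-\gamma-1)d$, so a standard Chernoff bound (applicable because $m = \alpha d = \omega(1)$) gives
\[
\Pr[k \leq (\alpha-\gamma-1)d] \;=\; 1-o(1).
\]
Call this event $\mathcal{E}$. Crucially, the random \emph{locations} of the corrupt samples are independent of the random noise values $\epsilon_i$ themselves, and conditional on any realization of the set $I$ of corrupt indices with $|I|=k \leq (\alpha-\gamma-1)d$, the samples $(\epsilon_i)_{i\in I}$ remain i.i.d.\ draws from $\mathcal{D}$.

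Condition on $\mathcal{E}$ and on any fixed set $I$ of corrupt indices. Suppose for contradiction that there exist integers $a_1,\ldots,a_m$ satisfying the three bullets of the lemma. Restricting to $I$, the subfamily $(a_i)_{i\in I}$ then satisfies: each $|a_i| \leq 2^{(\alpha-\gamma+3+o(1))n/(2d)}$; at least one $a_i$ with $i\in I$ is nonzero (this is exactly the second bullet of the lemma); and
\[
\dist\!\Bigl(\sum_{i\in I} a_i\epsilon_i,\ \Lambda^\ast/\ZZ^d\Bigr) \;\leq\; 2^{(-2A+\alpha-\gamma+3+o(1))n/(2d)}.
\]
But this is precisely the event that Definition~\ref{defn:distassumption} rules out for $k = |I| \leq (\alpha-\gamma-1)d$ i.i.d.\ samples from $\mathcal{D}$, which has probability $o(1)$ because $\mathcal{D}$ is $(\alpha,\gamma,A)$-well-spread with respect to $\Lambda^\ast$. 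Taking a union bound over the (at most $m+1$) possible values of $k$, or more cleanly averaging over $I$ conditional on $\mathcal{E}$, the total probability of a bad tuple existing is $o(1) + \Pr[\mathcal{E}^c] = o(1)$.

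I do not anticipate a serious obstacle here; the lemma is essentially a bookkeeping exercise packaging the well-spread definition into the form needed downstream. The only mild subtlety is making sure that the $o(1)$ in the exponent bound on $|a_i|$ is compatible between the lemma statement and Definition~\ref{defn:distassumption} (they are, by design), and that the independence structure of the corruption model really does allow us to treat $(\epsilon_i)_{i\in I}$ as i.i.d.\ samples from $\mathcal{D}$ after conditioning on $I$, which is immediate from the generative description in Section~\ref{sec:errormodel}.
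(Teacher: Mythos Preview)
Your proposal is correct and follows essentially the same approach as the paper: condition on the set of corrupted indices, note that having more than $(\alpha-\gamma-1)d$ corrupted samples has probability $o(1)$ by the bound $p < 1-(\gamma+1)/\alpha$, and on the complementary event invoke the $(\alpha,\gamma,A)$-well-spread property of $\mathcal{D}$ directly. The paper's proof is a two-sentence version of exactly this argument.
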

\begin{proof}
    We can argue by conditioning on the subset of our $m$ samples that are corrupted. The terms from cases where there are $> (\alpha - \gamma - 1)d$ corrupted samples contribute a probability of $o(1)$. When there are $\leq (\alpha - \gamma - 1)d$ corrupted samples, we can directly use the assumption that $\mathcal{D}$ (the distribution of each $\epsilon_i$) is well-spread, with $k \leq (\alpha - \gamma - 1)d$ set to the number of corrupted samples.
\condqed \end{proof}

\begin{lemma}\label{lemma:nobadsamples}
    With probability $1-o(1)$, as long as $|B| < \gamma d$, our algorithm will not add any corrupted samples to $B$ in any iteration.
\end{lemma}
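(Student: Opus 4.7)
The plan is to argue by contradiction, leveraging Lemma \ref{lemma:nobadlincombs} as the source of the ``well-spread'' guarantee. Condition on the high-probability event that Lemma \ref{lemma:nobadlincombs} holds, and also on the high-probability bounds used in Lemma \ref{lemma:gainonesample} (namely that $\Lambda''$ has a short nonzero vector so that LLL returns a vector of norm at most $2^{(\alpha-\gamma+3+o(1))n/(2d)}$). Suppose for contradiction that at some iteration the algorithm adds a corrupted index $i^\ast$ to $B$, so that the LLL-output coefficient $a_{i^\ast}$ is nonzero for a corrupted sample.

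Next, I would extract the two consequences of the LLL output norm bound. Writing the vector as
$$\begin{pmatrix} S(\beta + \sum_{i \in E} a_i w_i) \\ \hline a_i : i \in E \end{pmatrix},$$
the bottom block gives $|a_i| \leq 2^{(\alpha-\gamma+3+o(1))n/(2d)}$ for all $i \in E$, and the top block, after dividing by $S = 2^{An/d}$, gives $\|\beta + \sum_{i \in E} a_i w_i\|_2 \leq 2^{(-2A+\alpha-\gamma+3+o(1))n/(2d)}$. Since $\beta \in \ZZ^d$, the same bound holds for the distance of $\sum_{i \in E} a_i w_i$ to $\ZZ^d$.

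The key computation is to isolate the $\epsilon_i$-contribution from the corrupted samples. Split
$$\sum_{i \in E} a_i w_i \;=\; \sum_{i\text{ unc.}} a_i (v_i + \delta_i) \;+\; \sum_{i\text{ cor.}} a_i\bigl(\eta(v_i+\delta_i) + \epsilon_i\bigr).$$
Reducing modulo $\ZZ^d$, the linear combination $\sum_{\text{unc.}} a_i v_i + \eta \sum_{\text{cor.}} a_i v_i$ lies in $\Lambda^\ast/\ZZ^d$ for either $\eta \in \{0,1\}$. The combined $\delta_i$-perturbation has norm at most $m \cdot 2^{(\alpha-\gamma+3+o(1))n/(2d)} \cdot \delta$, and with $\delta = 2^{(-A+o(1))n/d}$ this is $O\bigl(2^{(-2A+\alpha-\gamma+3+o(1))n/(2d)}\bigr)$. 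Absorbing this perturbation into the overall distance bound yields
$$\dist\Bigl(\sum_{i\text{ cor.}} a_i \epsilon_i,\ \Lambda^\ast/\ZZ^d\Bigr) \;\leq\; 2^{(-2A+\alpha-\gamma+3+o(1))n/(2d)}.$$
Combined with $|a_i| \leq 2^{(\alpha-\gamma+3+o(1))n/(2d)}$ and the fact that $a_{i^\ast} \neq 0$ for a corrupted $i^\ast$, this is exactly the event that Lemma \ref{lemma:nobadlincombs} rules out with probability $1-o(1)$, producing the contradiction.

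The main subtlety I expect to encounter is the bookkeeping of $o(1)$ terms and making sure all three bounds (the LLL approximation factor, the additive $\delta_i$ error contribution, and the well-spread threshold in Lemma \ref{lemma:nobadlincombs}) line up at precisely the same exponent $(-2A+\alpha-\gamma+3+o(1))n/(2d)$. A secondary subtlety is that the algorithm may run for many iterations, but no union bound over iterations is required: Lemma \ref{lemma:nobadlincombs} is a single statement over the randomness of the $\epsilon_i$'s that simultaneously forbids all small ``bad'' integer combinations, so once we condition on its success event, the contradiction argument above applies uniformly to every iteration.
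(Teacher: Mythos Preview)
Your proposal is correct and follows essentially the same route as the paper's proof: extract the coefficient and residual bounds from the LLL output, split $\sum a_i w_i$ into its $v_i$-, $\delta_i$-, and $\epsilon_i$-parts, bound the $\delta_i$-contribution at the same exponent, and invoke Lemma~\ref{lemma:nobadlincombs} to force $a_i = 0$ on all corrupted indices. Your remark that no union bound over iterations is needed (since Lemma~\ref{lemma:nobadlincombs} is a single event ruling out all bad combinations) is exactly right and matches the paper's implicit treatment.
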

\begin{proof}
    As in the proof of Lemma \ref{lemma:gainonesample}, it follows from Lemma \ref{lemma:existsshortvector} that our algorithm will find a nonzero vector in $\Lambda''$ of $\ell_2$ norm at most $$2^{(d+|E|-1)/2} \cdot 2^{(1+o(1))n/d} \leq 2^{(\alpha-\gamma+3+o(1))n/(2d)},$$and moreover we can write this vector as$$H\begin{pmatrix} \beta \\ \hline a_i: i \in E\end{pmatrix} = \begin{pmatrix} S(\beta + \sum_{i \in E} a_iw_i) \\ \hline a_i: i \in E\end{pmatrix}.$$
    Both of the following must hence hold:
    \begin{itemize}
        \item $|a_i| \leq 2^{(\alpha-\gamma+3+o(1))n/(2d)}$ for all $i \in E$.
        \item $||\beta + \sum_{i \in E} a_iw_i||_2 \leq S^{-1} 2^{(\alpha-\gamma+3+o(1))n/(2d)} = 2^{(-2A+\alpha-\gamma+3+o(1))n/(2d)}.$
    \end{itemize}
    Letting $\mathcal{C}$ denote the set of corrupted samples, we can write the latter of these two expressions as follows:
    \begin{align*}
        & \beta  + \sum_{i \in E} a_iw_i 
         = \beta + \sum_{i \in E\cap\overline{\mathcal{C}}} a_i(v_i+\delta_i) + \sum_{i \in E\cap \mathcal{C}} a_i(\eta(v_i+\delta_i)+\epsilon_i) \\
        &= \bigg(\beta + \sum_{i \in E\cap \overline{\mathcal{C}}}a_iv_i + \sum_{i \in E\cap \mathcal{C}}\eta a_iv_i\bigg) 
        + \bigg(\sum_{i \in E\cap \overline{\mathcal{C}}} a_i\delta_i + \sum_{i \in E\cap \mathcal{C}}\eta a_i\delta_i\bigg) + \bigg(\sum_{i \in E\cap \mathcal{C}} a_i\epsilon_i\bigg)
    \end{align*}
    Rearranging, we get 
    \begin{align*}
        \sum_{i \in E\cap \mathcal{C}} a_i\epsilon_i &= \bigg(\beta + \sum_{i \in E} a_iw_i\bigg) - \bigg(\sum_{i \in E\cap\overline{\mathcal{C}}} a_i\delta_i + \sum_{i \in E\cap \mathcal{C}}\eta a_i\delta_i\bigg) \\
        &-\bigg(\beta + \sum_{i \in E\cap \overline{\mathcal{C}}}a_iv_i + \sum_{i \in E\cap \mathcal{C}}\eta a_iv_i\bigg)
    \end{align*}
    The last of these three terms is some element in $\Lambda^\ast/\ZZ^d$, noting that $\eta$ and the $a_i$'s are all integers and that $\beta \in \ZZ^d$. We can bound the second term as follows:
    \begin{align*}
        \norm*{\sum_{i \in E\cap \overline{\mathcal{C}}} a_i\delta_i + \sum_{i \in E\cap \mathcal{C}}\eta a_i\delta_i}_2 &\leq \alpha d \cdot \max_i |a_i| \cdot \max_i ||\delta_i||_2 \\
        &\leq \alpha d \cdot 2^{(\alpha-\gamma+3+o(1))n/(2d)} \cdot 2^{(-A+o(1))n/d} \\
        &= 2^{(-2A+\alpha-\gamma+3+o(1))n/(2d)}.
    \end{align*}
    Finally, we already observed that the first term also has magnitude at most $2^{(-2A+\alpha-\gamma+3+o(1))n/(2d)}$. It follows by the triangle inequality that
    $$d\bigg(\sum_{i \in E\cap \mathcal{C}}a_i\epsilon_i, \Lambda^\ast/\ZZ^d\bigg) \leq 2^{(-2A+\alpha-\gamma+3+o(1))n/(2d)}.$$
    By Lemma \ref{lemma:nobadlincombs}, this forces $a_i = 0$ for all $i \in E$ such that sample $i$ is corrupted. This implies that our algorithm will not add any corrupted samples to $B$.
\condqed \end{proof}

\subsection{Proof that Regev's Postprocessing Works with Any $\gamma d$ Uncorrupted Samples}\label{sec:regevanalysis}

Regev's postprocessing argument~\cite{Regev23} works with $d+4$ independent and uncorrupted samples from $\Lambda^\ast/\ZZ^d$ and proceeds as follows:
\begin{enumerate}
    \item First, he applies a result due to Pomerance~\cite{pomerance2002expected} to argue that these $d+4$ elements will generate $\Lambda^\ast/\ZZ^d$ with probability at least $1/2$.
    \item He then argues that it is unlikely for there to exist $u \in \ZZ^d\backslash\Lambda$ which is nearly orthogonal to all $d+4$ samples. Thus these samples can be used to construct a suitable lattice such that applying LLL basis reduction~\cite{lenstra1982factoring} yields a basis for short vectors in $\Lambda$.
\end{enumerate}
In our case, we do not quite have independence; instead, what we have is a subset of size $\gamma d$ out of $\alpha d$ independent samples from $\Lambda^\ast/\ZZ^d$. Our analysis in Section \ref{sec:corruptfiltering} implies that we can assume these samples are uncorrupted. Informally, we just need to argue that this subset is ``close enough'' to independent that all of Regev's analysis still works. This largely mirrors the existing analysis in~\cite{Regev23}, so we defer details to Appendix \ref{sec:regevlatticeanalysis}.

\ifanon
\else
\condparagraph{Acknowledgements.} We thank Oded Regev for suggesting the noise-robustness problem to us and for outlining an approach that led to our solution.  We thank Oded, Martin Eker{\aa}, Joel G{\"a}rtner, and anonymous reviewers for their detailed comments on this manuscript, and Madhu Sudan for discussions about the noise-robustness problem. SR would also like to thank Ittai Rubinstein for helpful discussions about lattices and the noise-robustness problem, and Gregory D. Kahanamoku-Meyer and Katherine van Kirk for feedback on Table~\ref{summarytable} and numerous discussions related to Regev's algorithm and quantum multiplication circuits. VV would like to thank Muli Safra, Subhash Khot, Dor Minzer and the Simons Foundation for organizing a stimulating workshop on lattices, and Oded Regev for his lecture at the workshop, which initiated our thinking on the problems in this paper. SR was supported by an Akamai Presidential Fellowship, NSF CNS-2154149, and a Simons Investigator Award. VV was supported in part by DARPA under Agreement Number HR00112020023, NSF CNS-2154149, a grant from the MIT-IBM Watson AI, a Thornton Family Faculty Research Innovation Fellowship from MIT and a Simons Investigator Award. Any opinions, findings and conclusions or recommendations expressed in this material are those of the author(s) and do not necessarily reflect the views of the United States Government or DARPA.
\fi

\bibliographystyle{alpha}
\bibliography{main}

\appendix

\section{Implementation of Our Multiplication Oracle}\label{sec:multimplementation}

In Theorem \ref{thm:mainresult} and our algorithm, we assume a quantum multiplication circuit that takes a specific form, mapping $$\ket{a}\ket{b}\ket{t}\ket{0^S} \mapsto \ket{a}\ket{b}\ket{(t+ab)\bmod N}\ket{0^S}~.$$
Existing multiplication circuits such as those by~\cite{Harvey21} and~\cite{gidney2019} may differ from this form in three ways:
\begin{itemize}
    \item Circuits such as that by~\cite{Harvey21} may be classical circuits rather than quantum.
    \item These circuits, as in both~\cite{Harvey21} and~\cite{gidney2019}, may only support multiplication over $\ZZ$ rather than modulo $N$ i.e. they produce $ab$ as an output rather than $ab\bmod{N}$.
    \item Even if these circuits could compute $ab\bmod{N}$, they may not have the specific structure mapping $\ket{t}$ to $\ket{(t+ab)\bmod{N}}$ that we require.
\end{itemize}
In this section, we show how to overcome each of these gaps:
\begin{itemize}
    \item In Section \ref{sec:classicaltoquantum}, we restate well-known procedures~\cite{bennett_logical_1973, bennett_timespace_1989, levine_note_1990} for ``compiling'' a classical circuit into a quantum circuit that can compute the same function out of place.
    \item In Section \ref{sec:modNoracle}, we show how to go from computing $ab$ to computing $ab\bmod{N}$.
    \item In Section \ref{sec:ringstructuremodN}, we show how to achieve the specific structure computing $(t+ab) \bmod{N}$ that we need.
\end{itemize}

Additionally, we quickly verify in Section \ref{sec:checkharveyandgidney} that these arguments yield the space and size guarantees we claim when using the multiplication circuits in~\cite{Harvey21} and~\cite{gidney2019}. Finally, we present a different oracle construction for the special case of schoolbook multiplication in Section \ref{sec:schoolbookoracle}. This avoids the space overheads associated with the above ``conversion'' procedures, and only requires $O(1)$ ancilla qubits.

\subsection{Compiling Classical Circuits into Quantum Circuits}\label{sec:classicaltoquantum}

In this section, we state and reprove the well-known fact~\cite{bennett_logical_1973, bennett_timespace_1989, levine_note_1990} that a classical circuit can be ``compiled'' into a quantum circuit that can carry out the same computation in superposition, while tracking the space and size of such a quantum circuit.

\begin{lemma}
    (Well-known~\cite{bennett_logical_1973, bennett_timespace_1989, levine_note_1990}) Assume there exists a classical circuit with $G$ wires (including input and output bits) that computes some function $f(x) \in \left\{0, 1\right\}^{n_o}$ of its input $x \in \left\{0, 1\right\}^{n_i}$. Then there exists a quantum circuit using $O(G)$ gates mapping $$\ket{x}\ket{0^{G+n_o-n_i}} \mapsto \ket{x} \ket{0^{G-n_i}} \ket{f(x)}.$$
    Note that the number of qubits in such a circuit is $G+n_o \leq 2G$ (since $G$ includes output wires).
\end{lemma}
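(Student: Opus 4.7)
The plan is to carry out the standard Bennett-style reversible simulation of the classical circuit, carefully accounting for space and gate count. Without loss of generality I may assume the classical circuit is built from a fixed universal gate set (say NAND), and that the $G$ ``wires'' include one wire per gate output together with the $n_i$ input wires and the $n_o$ output wires. The key observation is that each classical NAND (or AND, XOR, NOT, fan-out) gate can be simulated reversibly and in-place on a dedicated fresh ancilla qubit using a constant number of Toffoli, CNOT, and $X$ gates, provided that the ancilla qubit starts in $\ket{0}$ and the input qubits for that gate have already been computed.

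First I would set aside $G - n_i$ of the $G + n_o - n_i$ ancilla qubits and designate one of them to hold the value of each non-input wire of the classical circuit. Traversing the gates in topological order, I simulate each one by applying the appropriate Toffoli/CNOT/$X$ gates targeting the ancilla qubit that represents its output wire. After this ``forward pass,'' the $G - n_i$ designated ancilla qubits hold the values of all non-input wires, and in particular the $n_o$ qubits corresponding to output wires contain $f(x)$. Next, I apply $n_o$ CNOTs that copy each output wire's value onto a fresh qubit in the remaining register of $n_o$ ancilla qubits. Finally, I perform the reverse of the forward pass, which is well-defined because every gate in the forward pass was implemented reversibly; this restores the $G - n_i$ simulation ancillas to $\ket{0}$ while leaving the copied output untouched (since the copy operation commutes with the subsequent un-computation, which only acts on the simulation ancillas). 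The net effect is the map $\ket{x}\ket{0^{G+n_o-n_i}} \mapsto \ket{x}\ket{0^{G-n_i}}\ket{f(x)}$.

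For the gate count, the number of classical gates is at most $G$ (since each gate produces at least one new wire), so the forward and reverse passes together contribute $O(G)$ quantum gates; the copy step adds $n_o \le G$ more CNOTs. The space accounting matches the statement: $G - n_i$ qubits for the simulation scratch plus $n_o$ qubits for the output register equals $G + n_o - n_i$ ancillas. There is no real obstacle here; the only thing to be careful about is the bookkeeping to confirm that the copy-then-uncompute pattern acts on disjoint registers, so that the uncomputation cleanly returns the scratch ancillas to $\ket{0}$ without disturbing $\ket{f(x)}$.
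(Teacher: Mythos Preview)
Your proposal is correct and follows essentially the same approach as the paper: simulate each classical gate on a fresh ancilla qubit (the paper uses NOT/AND with CNOT+$X$ and Toffoli, you use NAND, but this is immaterial), then copy the outputs to $n_o$ fresh qubits and uncompute. The bookkeeping of $G-n_i$ scratch qubits plus $n_o$ output qubits and the $O(G)$ gate count matches exactly.
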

\begin{proof}
    Assume without loss of generality that the classical circuit comprises only NOT and AND gates.
    
    Let us temporarily ignore $n_o$ of the ancilla qubits. The remaining ancilla qubits can be assigned bijectively to non-input wires in the classical circuit. Using these qubits, we simulate the classical circuit as follows:
    \begin{itemize}
        \item For a NOT gate with input wire corresponding to qubit $\ket{a}$ and output wire corresponding to qubit $\ket{b}$, we can apply a CNOT gate with control qubit $\ket{a}$ and target qubit $\ket{b}$, and then apply an X gate to $\ket{b}$.
        \item For an AND gate with input wires corresponding to qubits $\ket{a}, \ket{b}$ and output wire corresponding to qubit $\ket{c}$, we can apply a CCNOT gate with control qubits $\ket{a}, \ket{b}$ and target qubit $\ket{c}$.
    \end{itemize}
    At the end of this computation, we will have the state $$\ket{x} \ket{\phi} \ket{f(x)} \ket{0^{n_o}},$$ for some state $\phi$ on $G - n_i - n_o$ qubits. We can then copy $\ket{f(x)}$ to the $n_o$ remaining ancilla qubits using CNOT gates, and then uncompute the classical circuit simulation to obtain the final desired state. The number of gates in our quantum circuit is clearly $O(G)$.
\condqed \end{proof}

\subsection{Computing $ab \bmod{N}$}\label{sec:modNoracle}

To go from computing $ab$ to computing $ab \bmod{N}$, the main ingredient is unsurprisingly an efficient quantum circuit for out-of-place reduction modulo $N$:

\begin{lemma}\label{lemma:reducemodN}
    Assume there exists a quantum circuit mapping $$\ket{a} \ket{b} \ket{0^{2n}} \ket{0^S} \mapsto \ket{a} \ket{b} \ket{ab} \ket{0^S}$$ with $G$ gates, where $a, b$ are $n$-bit integers.
    
    Let $c$ be a $2n$-bit integer. Then there exists a quantum circuit using $O(G+n)$ gates mapping $$\ket{c}\ket{0^{S+O(n)}} \mapsto \ket{c} \ket{c \bmod{N}} \ket{0^{S+O(n)}}.$$
\end{lemma}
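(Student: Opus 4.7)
The plan is to implement Barrett reduction using only a constant number of invocations of the given out-of-place multiplication oracle. Precompute the classical constant $\mu = \lfloor 2^{2n}/N \rfloor$, which satisfies $2^n \leq \mu \leq 2^{n+1}$ and serves as a fixed-point approximation of $2^{2n}/N$. Write $c = c_H \cdot 2^n + c_L$ as the concatenation of its two $n$-bit halves, and split $\mu = \mu_H \cdot 2^n + \mu_L$ with $\mu_H \in \{0,1\}$ a classical bit and $\mu_L$ an $n$-bit classical constant. A standard Barrett-reduction analysis shows that $q' := \lfloor c_H \mu / 2^n \rfloor$ differs from the true quotient $q = \lfloor c/N \rfloor$ by at most an absolute constant, because the contribution of $c_L \mu$ to $c\mu/2^{2n}$ is in $[0,2)$.

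The circuit then proceeds as follows. First, write the classical constant $\mu_L$ into a fresh $n$-qubit register via $O(n)$ bit-flips. Invoke the multiplication oracle on $\ket{c_H}\ket{\mu_L}$ to produce $\ket{c_H \mu_L}$ in a $2n$-qubit ancilla register. Conditionally add the (classically-controlled) contribution $c_H \cdot \mu_H \cdot 2^n$, which is either $0$ or a shift of $c_H$, using an $O(n)$-gate adder; the top $n+1$ bits of the result form $q'$. Next, write $N$ into an $n$-qubit register and use one more invocation of the multiplication oracle, together with another $O(n)$-gate conditional addition for the high bit of $q'$, to compute $q' N$ in a $(2n{+}1)$-qubit ancilla register. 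A $(2n)$-bit ripple subtractor then produces $r' = c - q'N$, which lies in $[0, O(N))$; at most a constant number of conditional subtractions of $N$ (each an $O(n)$-gate compare-and-subtract) bring $r'$ into $[0, N)$. Copy $r' = c \bmod N$ into the output register via CNOTs, and then run the inverses of all the preceding operations to uncompute every ancilla register (including the intermediate $c_H \mu_L$, $q' N$, and the scratch copies of $\mu_L$ and $N$). The total cost is $O(1)$ invocations of the multiplication oracle plus $O(n)$ gates for the additions, subtractions, bit-flips, and comparisons, giving $O(G + n)$ gates and $S + O(n)$ ancilla qubits.

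\textbf{Main obstacle.} The delicate part is the bookkeeping needed to realize the $2n$-bit-by-$(n{+}1)$-bit multiplications as a constant number of calls to the $n$-bit oracle while keeping every intermediate value in a dedicated register so it can be cleanly uncomputed by the inverse oracle; in particular, one must be careful to save $c_H \mu_L$ and $q' \cdot N_{\mathrm{low}}$ in separate scratch registers rather than accumulating them in place, since the multiplication oracle we are given cannot be inverted from an accumulated value. The Barrett error bound $|q - q'| = O(1)$ is standard and only requires a routine case-check when $c$ is close to $2^{2n}$, determining the exact constant number of final conditional subtractions.
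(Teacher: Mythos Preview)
Your proposal is correct and follows essentially the same approach as the paper: both implement Barrett reduction by precomputing a fixed-point approximation of $1/N$, using $O(1)$ calls to the $n$-bit multiplier to form an approximate quotient, multiplying back by $N$, subtracting from $c$, and cleaning up with a constant number of conditional subtractions before copy-and-uncompute. The only minor difference is that the paper multiplies the full $2n$-bit $c$ by the precomputed constant (obtaining a remainder in $(0,2N)$), whereas you use the standard ``truncated'' Barrett variant with only the high half $c_H$, trading a slightly looser error bound for one fewer oracle call; this distinction is cosmetic.
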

\begin{proof}
    Let $b = 2n$ be a precision parameter (so $c < 2^b$). Assume we have classically precomputed a high-precision approximation of $1/N$; concretely, we have an integer $r$ so that $$r/2^b < 1/N < (r+1)/2^b.$$
    The idea is to use this to approximately divide $c$ by $N$, which we do as follows:
    \begin{align*}
        \ket{c} \ket{r} \ket{N} \ket{0^{S+O(n)}} &\rightarrow \ket{c} \ket{r} \ket{N} \ket{cr} \ket{0^{S+O(n)}} \\
        &= \ket{c} \ket{r} \ket{N} \ket{\lfloor cr/2^b \rfloor} \ket{cr\bmod{2^b}} \ket{0^{S+O(n)}} \\
        &~~~~\text{(splitting the register containing $cr$ into two pieces)} \\
        &\rightarrow \ket{c} \ket{r} \ket{N} \ket{\lfloor cr/2^b \rfloor} \ket{0^{S+O(n)}} \ket{cr\bmod{2^b}} \ket{N \cdot \lfloor cr/2^b \rfloor} \ket{0^{S+O(n)}} \\
        &\rightarrow \ket{c} \ket{r} \ket{N} \ket{\lfloor cr/2^b \rfloor} \ket{0^{S+O(n)}} \ket{cr\bmod{2^b}} \ket{c - N \cdot \lfloor cr/2^b \rfloor} \ket{0^{S+O(n)}},
    \end{align*}
    where the last step is achieved by using the inverse of an in-place integer addition circuit e.g. see~\cite{draper}. Next, observe that:
    \begin{align*}
        N \cdot \lfloor cr/2^b \rfloor &\leq Ncr/2^b \\
        &< c \\
        \Rightarrow 0 &< c-N \cdot \lfloor cr/2^b \rfloor,\text{ and } \\
        N \cdot \lfloor cr/2^b \rfloor &> N \cdot (cr/2^b - 1) \\
        &> N \cdot (c \cdot (1/N - 1/2^b) - 1) \\
        &> c - 2N \\
        \Rightarrow 2N &> c-N \cdot \lfloor cr/2^b \rfloor.
    \end{align*}
    So the number we have computed is congruent to $c$ modulo $N$, and is contained in the interval $(0, 2N)$. This implies that it is either equal to $c \bmod{N}$ or $(c \bmod{N}) + N$. These cases can be distinguished and $c\bmod{N}$ computed out of place using $O(n)$ qubits and $O(n)$ gates. Finally, this can be copied to a fresh register and the above computation uncomputed.

    In terms of circuit size, the only operation that is not $O(n)$ gates is multiplying pairs of integers of $\leq 2n$ bits a constant number of times. This can in turn be done using a constant number of calls to the given quantum circuit (one call does not suffice, since we are assuming a circuit that multiplies $n$-bit integers rather than $2n$-bit). The conclusion follows.
\condqed \end{proof}
Now, it is straightforward to check that we can compute $ab\bmod{N}$:
\begin{lemma}
    Using the same multiplication circuit as in Lemma \ref{lemma:reducemodN}, there exists a quantum circuit using $O(G+n)$ gates mapping $$\ket{a} \ket{b} \ket{0^n} \ket{0^{S+O(n)}} \mapsto \ket{a} \ket{b} \ket{ab\bmod{N}} \ket{0^{S+O(n)}}.$$
\end{lemma}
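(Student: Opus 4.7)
The plan is to combine the given out-of-place multiplication circuit (over $\ZZ$) with the modular reduction circuit from Lemma~\ref{lemma:reducemodN} in the standard compute--reduce--uncompute pattern. Concretely, starting from the state $\ket{a}\ket{b}\ket{0^n}\ket{0^{S+O(n)}}$, I would first carve the ancilla region into a $2n$-qubit scratch register plus an additional $S+O(n)$ ancillas.

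First, I would apply the given multiplication circuit to the scratch register to produce
\[
\ket{a}\ket{b}\ket{0^n}\ket{ab}\ket{0^S}\ket{0^{O(n)}},
\]
noting that $ab < 2^{2n}$ fits in the scratch register. Next, I would invoke Lemma~\ref{lemma:reducemodN} with input $c = ab$ to compute the reduction out-of-place into the designated $n$-qubit output register, yielding
\[
\ket{a}\ket{b}\ket{ab\bmod N}\ket{ab}\ket{0^{S+O(n)}}.
\]
Finally, I would apply the inverse of the multiplication circuit to uncompute the $2n$-qubit scratch register back to $\ket{0^{2n}}$, leaving exactly
\[
\ket{a}\ket{b}\ket{ab\bmod N}\ket{0^{S+O(n)}},
\]
which is the desired mapping. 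Since the multiplication circuit treats $\ket{a},\ket{b}$ as inputs it preserves, running it in reverse correctly cleans the scratch register regardless of the intervening action on the output register (which only touches the output plus its own ancillas).

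For the gate count: the forward and inverse multiplication circuits each contribute $O(G)$ gates, and the reduction step contributes $O(G+n)$ gates by Lemma~\ref{lemma:reducemodN}, for a total of $O(G+n)$. For space, beyond the inputs and output we simultaneously use $2n$ qubits for the scratch product plus the $S+O(n)$ ancillas consumed by the reduction step, which fits within the claimed $S+O(n)$ ancilla budget. There is no real obstacle here; the only point requiring mild care is ensuring that the reduction circuit's ancillas are disjoint from the scratch register holding $ab$ so that the final uncomputation step cleanly inverts the multiplication, which is automatic from the statement of Lemma~\ref{lemma:reducemodN}.
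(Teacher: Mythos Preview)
Your proposal is correct and follows essentially the same compute--reduce--uncompute pattern as the paper's proof. The only cosmetic difference is that the paper writes the reduced value into a fresh register and then does a copy-and-uncompute of everything, whereas you write the reduction directly into the designated output register and uncompute only the multiplication; both are valid and yield the same $O(G+n)$ gate and $S+O(n)$ ancilla bounds.
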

\begin{proof}
    We proceed as follows:
    \begin{align*}
        \ket{a} \ket{b} \ket{0^n} \ket{0^{S+O(n)}} &\mapsto \ket{a} \ket{b} \ket{ab} \ket{0^n} \ket{0^{S+O(n)}} \text{ (using the multiplication circuit)} \\
        &\mapsto \ket{a} \ket{b} \ket{ab} \ket{ab\bmod{N}} \ket{0^{S+O(n)}}, \text{ (using Lemma \ref{lemma:reducemodN})}
    \end{align*}
    after which we can copy and uncompute as usual to obtain the desired result.
\condqed \end{proof}

\subsection{Computing $(t+ab)\bmod{N}$}\label{sec:ringstructuremodN}

Now, we can assume we have a quantum circuit that maps $$\ket{a}\ket{b}\ket{0^n}\ket{0^{S_0}} \mapsto \ket{a}\ket{b}\ket{ab \bmod N}\ket{0^{S_0}}~.$$We now show how to obtain our specific functionality from this, at the expense of $n$ additional ancilla qubits. The key ingredients for our constructions in this section are the following space-efficient primitives for modular doubling and addition, due to \cite{roetteler17}:

\begin{lemma}\label{lemma:roetadd}
    {\em \cite{roetteler17}} There exists a circuit using $O(n\log n)$ gates computing the mapping 
    $$\ket{x}\ket{y}\ket{0}\ket{0} \mapsto \ket{x} \ket{(x+y)\bmod N} \ket{0}\ket{0}~.$$ 
    Here, $x$ and $y$ are reduced mod $N$, and each $\ket{0}$ is just one individual ancilla qubit.
\end{lemma}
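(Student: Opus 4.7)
The plan is to use a textbook modular addition circuit built from Draper's QFT-based in-place adder~\cite{draper}, which adds two $n$-bit integers (or adds a classical constant to an $n$-bit integer) in place using $O(n \log n)$ gates and $0$ ancilla qubits. All comparisons and subtractions can also be implemented through Draper's adder by adding bit-flipped constants or by running the adder in reverse. The only extra qubits we will need are (i) one ``overflow'' qubit to safely hold intermediate values of $x+y$ that can be as large as $2N-2$, and (ii) one ``sign'' qubit that records whether a subsequent reduction step had to add $N$ back. These are the two ancilla qubits promised in the statement.

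The steps, in order, are:
\begin{enumerate}
    \item Using Draper's adder, compute $\ket{x}\ket{y}\ket{0}\ket{0} \mapsto \ket{x}\ket{x+y}\ket{0}\ket{0}$ in place on the second register, where the first ancilla qubit now holds the high-order carry bit so that $x+y$ is represented on $n+1$ qubits.
    \item Subtract the classical constant $N$ from the second register (by running Draper's constant adder with $-N$). The second register now holds $x+y-N$ in two's complement on $n+1$ qubits, and the top bit of that register equals $1$ exactly when $x+y < N$.
    \item Copy the sign bit (the top bit of the second register) into the second ancilla qubit via a CNOT, and controlled on this ancilla add $N$ back to the second register. After this step the second register holds $(x+y) \bmod N$ on its lower $n$ qubits and $0$ on its top qubit, while the second ancilla holds the indicator ``$x+y<N$''.
    \item It remains to uncompute the second ancilla. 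As observed in~\cite{roetteler17}, we have $(x+y)\bmod N < x$ iff $x+y \geq N$, because $y \in [0,N)$. So we can compare $x$ with the second register (by adding $-x$ into an internal workspace created by running the adder, or more directly by running an in-place adder in reverse and looking at its final carry), store the comparison bit on the second ancilla with a CNOT, and then reverse the comparison. This returns the second ancilla to $\ket{0}$.
\end{enumerate}
Each step is a constant number of calls to Draper's adder (or its inverse), plus a handful of single- and two-qubit gates, giving $O(n\log n)$ total gates. The ancilla count stays at two throughout, since the overflow bit introduced in Step~1 is cleared by the end of Step~3 and the sign bit introduced in Step~3 is cleared in Step~4.

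The main obstacle is the uncomputation in Step~4: naively, the sign bit recorded in Step~3 depends on $x+y$ in a way that is no longer directly visible after the reduction, so one might fear it cannot be cleaned up with only one extra qubit. The key observation that makes Step~4 work with no further ancilla is the arithmetic identity ``$(x+y)\bmod N < x \iff x+y \geq N$'', which expresses the needed bit purely in terms of $x$ and the already-reduced output. Everything else is a careful accounting of the two scratch bits.
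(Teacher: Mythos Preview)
The paper does not prove this lemma at all; it simply cites \cite{roetteler17} and uses the result as a black box. Your sketch is a faithful reconstruction of the standard modular-addition construction from that line of work (the idea goes back to Vedral--Barenco--Ekert and Beauregard, and \cite{roetteler17} implements it with a space-efficient integer adder), so there is nothing to compare against.

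One small slip worth fixing: in Step~3 your second ancilla ends up holding the indicator ``$x+y<N$'', while the comparison you propose in Step~4 computes ``$(x+y)\bmod N<x$'', which by your own identity equals ``$x+y\ge N$''. CNOTing that into the ancilla leaves it at $1$, not $0$. A single extra NOT gate (or computing the opposite comparison) repairs this; the asymptotics and the ancilla count are unaffected.
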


\begin{lemma}\label{lemma:roetdbl}
    {\em \cite{roetteler17}} 
    Provided $N$ is odd, there exists a circuit using $O(n \log n)$ gates computing the mapping $$\ket{x}\ket{0}\ket{0} \mapsto \ket{2x \bmod N} \ket{0}\ket{0}~.$$ 
    Here, $x$ is once again reduced mod $N$, and we only have two ancilla qubits.
\end{lemma}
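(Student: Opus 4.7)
The plan is to implement the natural three-step strategy: shift $x$ to form $2x$, subtract $N$ to do the modular reduction, and then carefully uncompute intermediate information using the oddness of $N$. The obvious temptation would be to feed $x$ into both inputs of Lemma~\ref{lemma:roetadd}, but that circuit expects two independent registers, whereas here we only have a single copy of $x$ and two loose ancillae, so I will build the doubler directly.

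First I would view the $n$ qubits holding $x$ together with one ancilla qubit $a$ as a single $(n{+}1)$-bit register, with $a$ as the MSB (currently $0$). A cyclic left shift of this register, implemented as $n$ SWAP gates on adjacent qubits, moves each $x_i$ one position up and rotates the $0$ from $a$ into the new LSB; the resulting integer is exactly $2x$, which is less than $2N < 2^{n+1}$ and so really does fit in $n{+}1$ bits. The other ancilla $b$ is untouched.

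Next I would subtract $N$ in place from the $(n{+}1)$-bit register using the Draper QFT-based adder (the same primitive that underlies Lemma~\ref{lemma:roetadd}) at a cost of $O(n \log n)$ gates. The resulting integer is in $[-N, N-1]$, stored in two's complement, so its sign lives in the MSB (which currently sits in $a$). A single CNOT copies this sign into $b$, and then a controlled-on-$b$ in-place addition of $N$ (again $O(n \log n)$ gates) restores the value to $2x \bmod N \in [0, N-1]$. Because this final value is strictly less than $N \leq 2^n$, the MSB is automatically $0$, so $a$ is already clean.

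The main obstacle is restoring $b$ to $0$ without another ancilla, and this is where I would invoke the hypothesis that $N$ is odd. By construction, $b=1$ precisely when the conditional add-back fired, that is, when $2x < N$, in which case $2x \bmod N = 2x$ is even; conversely $b=0$ means $2x \bmod N = 2x - N$, which is odd since $N$ is odd. So $b$ equals the negation of the LSB of the output register, and I can uncompute it with a CNOT from the LSB into $b$ followed by a single $X$ gate on $b$. Without the oddness of $N$ this parity trick breaks and one appears to need strictly more ancilla, which is why the hypothesis is essential. Totaling up, the gate count is dominated by the two modular additions, giving $O(n \log n)$ overall, as claimed.
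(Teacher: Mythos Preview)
Your argument is correct and is essentially the standard construction from \cite{roetteler17}; the paper itself does not supply a proof but simply cites that reference, so there is nothing further to compare. One small clarification worth making explicit: after your cyclic left shift via SWAPs, the physical qubit $a$ remains the MSB of the $(n{+}1)$-bit register (the SWAPs move the \emph{value} $0$ down to the LSB slot while leaving $a$ at position $n$), which is exactly why the post-reduction MSB being $0$ really does clean $a$; your prose already says this, but it is the one place a reader might trip.
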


Using the addition lemma, we have our first simple result:
\begin{lemma}\label{lemma:modoracleconstruct}
    Suppose we have a quantum circuit using $G_0$ gates that maps $$\ket{a}\ket{b}\ket{0^n}\ket{0^{S_0}} \mapsto \ket{a}\ket{b}\ket{ab \bmod N}\ket{0^{S_0}},$$ and $S_0 \geq 2$. Then there is also a quantum circuit using $O(G_0 + n \log n)$ gates that maps $$\ket{a}\ket{b}\ket{t}\ket{0^{S_0+n}} \mapsto \ket{a}\ket{b}\ket{(t+ab)\bmod N}\ket{0^{S_0+n}},$$whenever $t$ is reduced mod $N$.
\end{lemma}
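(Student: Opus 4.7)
\medskip
\noindent\textbf{Proof plan.} The idea is to use the $n$ extra ancilla qubits as scratch space to hold $ab \bmod N$ computed out-of-place by the given multiplication circuit, add that value into the $\ket{t}$ register via the modular addition primitive of Lemma~\ref{lemma:roetadd}, and then uncompute the scratch register by running the multiplication circuit in reverse. This is a textbook compute--copy--uncompute pattern, specialized to the modular setting.

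Concretely, I would start from the state $\ket{a}\ket{b}\ket{t}\ket{0^n}\ket{0^{S_0}}$, reinterpreting the ancilla block as a fresh $n$-qubit register (to hold $ab \bmod N$) followed by the $S_0$ ancillas required by the given multiplication circuit. Step one: apply the hypothesized circuit on registers $\ket{a}$, $\ket{b}$, the $n$-qubit scratch register, and the $S_0$ ancillas, obtaining
\[
\ket{a}\ket{b}\ket{t}\ket{ab \bmod N}\ket{0^{S_0}}.
\]
Step two: apply Lemma~\ref{lemma:roetadd} with $x := ab \bmod N$ (the scratch register, preserved) and $y := t$ (the target register, overwritten), using two of the $S_0 \geq 2$ clean ancillas as the two singleton ancilla qubits required by that lemma; after the addition and restoration of those two ancillas, the state is
\[
\ket{a}\ket{b}\ket{(t+ab)\bmod N}\ket{ab \bmod N}\ket{0^{S_0}}.
\]
Step three: apply the inverse of the multiplication circuit of step one. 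Since $a$ and $b$ are unchanged, this uncomputes the scratch register back to $\ket{0^n}$ and leaves the $S_0$ ancillas in $\ket{0^{S_0}}$, giving exactly the desired output.

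Correctness uses only the reversibility of the multiplication circuit and the fact that $t, ab \bmod N$ are both already reduced mod $N$ so that Lemma~\ref{lemma:roetadd} applies. For the gate count, steps one and three each cost $O(G_0)$ gates, while step two costs $O(n \log n)$ by Lemma~\ref{lemma:roetadd}, for a total of $O(G_0 + n \log n)$. For the space accounting, all $n$ scratch qubits and all $S_0$ ancillas are restored to $\ket{0}$ by the end, matching the $S_0 + n$ ancilla budget claimed. There is no real obstacle here; the only point requiring a moment of care is checking that the inputs to Lemma~\ref{lemma:roetadd} lie in $\{0,1,\dots,N-1\}$, which is immediate since the multiplication circuit outputs a residue mod $N$ and $t$ is assumed reduced mod $N$ by hypothesis.
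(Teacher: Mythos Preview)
Your proposal is correct and follows essentially the same compute--add--uncompute approach as the paper's proof, with the same three steps and the same gate-count accounting. Your write-up is in fact slightly more careful than the paper's, in that you explicitly note why $S_0 \geq 2$ is needed (to supply the two ancillas for Lemma~\ref{lemma:roetadd}) and check that the inputs to the modular adder are reduced mod $N$.
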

\begin{proof}
    We proceed as follows:
    \begin{align*}
        \ket{a} & \ket{b} \ket{t}\ket{0^n}\ket{0^{S_0}} \\ 
        &\rightarrow \ket{a}\ket{b}\ket{t}\ket{ab \bmod N}\ket{0^{S_0}}\text{ (using the given circuit)}\\
        &\rightarrow \ket{a}\ket{b}\ket{(t+ab)\bmod N}\ket{ab \bmod N} \ket{0^{S_0}}\text{ (using Lemma \ref{lemma:roetadd})} \\
        &\rightarrow \ket{a}\ket{b}\ket{(t+ab)\bmod N}\ket{0^n}\ket{0^{S_0}} \text{ (using the inverse of the given circuit)}
    \end{align*}
    The number of gates is clearly $O(G_0 + n\log n)$, so this establishes the lemma.
\condqed \end{proof}

\subsection{Special Cases of~\cite{Harvey21} and~\cite{gidney2019}}\label{sec:checkharveyandgidney}

Let us check that the compilation procedures presented here yield the space and size guarantees we claimed in Theorem \ref{thm:mainresult} when adapting the multiplication circuits by~\cite{Harvey21} and~\cite{gidney2019} to our situation.

In the case of~\cite{Harvey21}, we initially have a classical circuit of size $O(n \log n)$. Section \ref{sec:classicaltoquantum} converts this into a quantum circuit with size and space $O(n \log n)$ that computes $ab$. As for~\cite{gidney2019}, we initially have a quantum circuit with size $O(n^{\log_23})$ and space $O(n)$ for computing $ab$.

Now applying the conversions in Section \ref{sec:modNoracle} and Section \ref{sec:ringstructuremodN} yields circuits computing our desired $t \mapsto (t+ab)\bmod{N}$ functionality while retaining the asymptotic size and space guarantees; the size incurs a constant factor overhead plus $O(n\log n)$ extra gates, while the space incurs an $O(n)$ additive overhead.

\subsection{Special Case of Schoolbook Multiplication}\label{sec:schoolbookoracle}

Finally, we show how to construct a circuit using schoolbook multiplication where only $O(1)$ ancilla qubits are needed. \cite{roetteler17} already constructs a circuit mapping $\ket{a}\ket{b}\ket{0^n} \mapsto \ket{a}\ket{b}\ket{ab \bmod N}$, but we want to avoid the $O(n)$-qubit overhead of the conversion procedure in Section \ref{sec:ringstructuremodN}. Fortunately, this can be done with just a slight tweak of their shift-and-add construction:

\begin{lemma}\label{lemma:schoolbook}
    Provided $N$ is odd, there exists a quantum circuit using $O(n^2\log n)$ gates that maps $$\ket{a}\ket{b}\ket{t}\ket{0}\ket{0} \mapsto \ket{a}\ket{b}\ket{(t+ab)\bmod N}\ket{0}\ket{0},$$whenever $a, b, t$ are all reduced mod $N$.
\end{lemma}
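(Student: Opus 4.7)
The plan is to realize a standard shift-and-add schoolbook multiplication modulo $N$, but accumulating into the register holding $t$ rather than into a fresh register. Write $a = \sum_{i=0}^{n-1} a_i 2^i$ in binary, so that
\[
(t+ab) \bmod N \;=\; \Bigl(t + \sum_{i=0}^{n-1} a_i \cdot (2^i b \bmod N)\Bigr) \bmod N.
\]
Rather than materializing $2^i b \bmod N$ in a fresh register for each $i$ (which would cost $\Omega(n)$ extra ancillas), I would keep the $b$-register itself as the ``current power,'' doubling it in place modulo $N$ after each conditional addition.

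The circuit proceeds as follows. For $i = 0, 1, \ldots, n-1$, in order:
\begin{enumerate}
    \item Using qubit $a_i$ as a control, add the contents of the $b$-register to the $t$-register modulo $N$, via a controlled version of the in-place modular addition circuit of Lemma~\ref{lemma:roetadd}. (The control is folded into the internal comparisons/additions of that circuit; this only inflates the gate count by a constant and still uses just $O(1)$ ancillas.)
    \item Apply the in-place modular doubling circuit of Lemma~\ref{lemma:roetdbl} to the $b$-register, using that $N$ is odd.
\end{enumerate}
After the loop, the $b$-register holds $2^n b \bmod N$. To restore it, apply the inverse of Lemma~\ref{lemma:roetdbl} exactly $n$ more times; this is well-defined since doubling mod odd $N$ is a bijection on $\{0,\ldots,N-1\}$.

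Correctness follows by a straightforward induction: if before iteration $i$ the $b$-register holds $2^i b \bmod N$ and the $t$-register holds $(t + \sum_{j<i} a_j 2^j b) \bmod N$, then after the controlled addition the $t$-register holds $(t + \sum_{j\le i} a_j 2^j b) \bmod N$, and after the doubling the $b$-register holds $2^{i+1} b \bmod N$. The restoration step recovers the input value of $b$. For the gate count, each of the $n$ iterations costs $O(n\log n)$ gates by Lemmas~\ref{lemma:roetadd} and~\ref{lemma:roetdbl}, and the restoration contributes another $O(n \cdot n\log n)$ gates, for a total of $O(n^2 \log n)$. Both primitive circuits use only $2$ ancilla qubits and both can reuse the same two clean ancillas across iterations, matching the stated $O(1)$ ancilla budget.

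The only mildly non-routine step is verifying that the modular addition of Lemma~\ref{lemma:roetadd} admits a controlled version with the same asymptotic size and ancilla cost; this follows because the circuit of~\cite{roetteler17} is composed of a constant number of (un)controlled additions/subtractions and comparisons, each of which has a standard singly-controlled variant of size $O(n\log n)$ using $O(1)$ ancillas (equivalently, one can fold the single control qubit $a_i$ into the carry logic of the underlying adder). Everything else is just the bookkeeping above.
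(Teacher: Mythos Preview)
Your proposal is correct and follows essentially the same approach as the paper: the paper also performs a shift-and-add loop where, for $i=0,\ldots,n-1$, it uses $a_i$ as a control for the in-place modular addition of Lemma~\ref{lemma:roetadd} into the $t$-register, then doubles the $b$-register in place via Lemma~\ref{lemma:roetdbl}, and finally undoes the $n$ doublings with $n$ applications of the inverse of Lemma~\ref{lemma:roetdbl}. Your treatment of the controlled variant of the adder is slightly more explicit than the paper's, which simply asserts that $a_i$ is used as a control bit, but otherwise the structure, correctness induction, gate count, and ancilla accounting are the same.
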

\begin{proof}
    Label the bits comprising $a$ as $a_0, a_1, \ldots, a_{n-1}$, so that $a = \sum_{i = 0}^{n-1} a_i2^i$. Then observe that $$ab = \left(\sum_{i = 0}^{n-1} a_i2^i\right)b = \sum_{i = 0}^{n-1}a_i \cdot (2^ib).$$ This suggests the following algorithm:
    \begin{enumerate}
        \item Let $x, y, z$ denote the values in the register containing $a, b, t$ respectively. So initially we have $x = a, y = b, z = t$.
        \item Repeat the following for $i = 0, 1, \ldots, n-1$:
        \begin{enumerate}
            \item Use $a_i$ as a control bit for the circuit in Lemma \ref{lemma:roetadd} to update $z \leftarrow (z + a_iy) \bmod N$. We have the necessary two ancilla qubits for this.
            \item Use Lemma \ref{lemma:roetdbl} to update $y \leftarrow 2y \bmod N$. We have the necessary two ancilla qubits for this.
        \end{enumerate}
        \item Repeat the following $n$ times: update $y \leftarrow y/2 \bmod N$. This can be done using the inverse of the circuit from Lemma \ref{lemma:roetdbl}. Once again, this only needs two ancilla qubits.
    \end{enumerate}
    First, note that we make $O(n)$ calls to the circuits in Lemmas \ref{lemma:roetadd} and \ref{lemma:roetdbl}, implying the gate complexity of $O(n^2\log n)$.

    To see correctness, it is straightforward to show by induction that at the end of step $i$ within the loop in stage 2, we will have:
    \begin{align*}
        x &= a, \\
        y &= 2^{i+1}b \bmod N,\text{ and} \\
        z &= \left(t + \sum_{j = 0}^i a_j \cdot 2^j b\right) \bmod N.
    \end{align*}
    So at the end of stage 2, we will have $x = a$, $y = 2^nb \bmod N$, and $z = (t+ab)\bmod N$. Then after stage 3, $y$ will become $b$ again. The ancilla qubits are returned to 0 in each step, so the final state after applying our algorithm will be
    $$\ket{x}\ket{y}\ket{z}\ket{0}\ket{0} = \ket{a}\ket{b}\ket{(t+ab)\bmod N}\ket{0}\ket{0},$$as desired.
\condqed \end{proof}

\section{Analysis of Regev's Classical Postprocessing Procedure}\label{sec:regevlatticeanalysis}

In this section, we state Regev's classical postprocessing procedure and adapt the analysis by~\cite{Regev23} to work for our slightly different setting. As explained in Section \ref{sec:regevanalysis}, Regev's analysis assumes his postprocessing procedure is given $k = d+4$ independent noisy samples from $\Lambda^\ast/\ZZ^d$. Our setting is different because we have an arbitrary subset of size $k = \gamma d$ out of $m = \alpha d$ independent noisy samples from $\Lambda^\ast/\ZZ^d$. We first describe Regev's postprocessing procedure~\cite{Regev23} in Algorithm \ref{algo:regevpostprocessing}, then turn to its analysis.

\begin{algorithm}
    \KwData{Norm bound $T$ and a collection of noisy samples $w_1, w_2, \ldots, w_{k} \in \RR^d/\ZZ^d$ from $\Lambda^\ast/\ZZ^d$.}
    \KwResult{A finite list of elements of $\Lambda$, such that they generate any element $u \in \Lambda$ with $||u||_2 \leq T$.}
    \begin{enumerate}
        \item Define the lattice $\Lambda' \subseteq \RR^{d+k}$ as that spanned by the columns of $$\begin{pmatrix} I_{d \times d} &\rvline & 0 \\ \hline \begin{matrix} \delta^{-1}w_1 \\ \vdots \\ \delta^{-1}w_k \end{matrix} &\rvline & I_{k \times k}\end{pmatrix}.$$
        Recall that $\delta$ is an upper bound on the magnitude of additive noise $\delta_i$ included in each $w_i$.
        \item Let $z_1, \ldots, z_{d+k} \in \RR^{d+k}$ be an LLL reduced basis~\cite{lenstra1982factoring} of $\Lambda'$, and let $\tilde{z}_1, \ldots, \tilde{z}_{d+k}$ be the Gram-Schmidt orthogonalization of this basis.
        \item Let $l \geq 0$ be minimal such that $||\tilde{z}_{l+1}||_2 \geq 2^{(d+k)/2} \cdot (k+1)^{1/2} \cdot T$ (if no such $l$ exists, take $l = d+k$).
        \item For each $i \in [d+k]$, let $z'_i \in \RR^d$ consist of the first $d$ coordinates of $z_i$. Output $z'_1, \ldots, z'_l$.
    \end{enumerate}
    \caption{Regev's classical postprocessing procedure~\cite{Regev23}}\label{algo:regevpostprocessing}
\end{algorithm}

To analyze this algorithm in the way it is used in Algorithm \ref{algo:allpostprocessing}, we need to prove the following lemma:

\begin{lemma}\label{lemma:regevpostprocessingforus}
    (Compare with Lemma \ref{lemma:regevpostprocessing}, which also analyzes Algorithm \ref{algo:regevpostprocessing}) Let $\Lambda \subseteq \ZZ^d$, and let $T > 0$ be some norm bound. Assume we are given an arbitrary subset of $k = \gamma d$ out of $m = \alpha d$ independent samples of the form $$w_i = v_i + \delta_i,$$where each $v_i$ is a uniform sample from $\Lambda^\ast/\ZZ^d$ and $\delta_i$ is some additive error of magnitude at most $\delta$. Additionally, assume the following inequalities:
    \begin{itemize}
        \item $\big(\frac{e\alpha}{\alpha-\gamma}\big)^{\alpha-\gamma} \cdot 2^{-\gamma+1} < 1.$
        \item $(\gamma d + d)^{1/2} \cdot 2^{(\gamma+1)d/2} \cdot (\gamma d + 1)^{1/2} \cdot T < \delta^{-1} \cdot (4 \det \Lambda)^{-1/(\gamma d)} \cdot 2^{-\alpha/\gamma}/6.$
    \end{itemize}

    Then, with probability at least $1/4$, Algorithm \ref{algo:regevpostprocessing} succeeds when given the subset of $k$ samples as input i.e. it produces vectors $z_1, \ldots, z_l \in \Lambda$ that generate any element $u \in \Lambda$ with magnitude at most $T$.
\end{lemma}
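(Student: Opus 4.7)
The plan is to follow Regev's analysis in Lemma~\ref{lemma:regevpostprocessing} closely, making two modifications: substituting the parameter $m = d+4$ by $k = \gamma d$ throughout, and handling the fact that Algorithm~\ref{algo:allpostprocessing} selects its subset $B'$ of size $\gamma d$ in a data-dependent manner, by union-bounding over all $\binom{\alpha d}{\gamma d}$ possible subsets of the $\alpha d$ independent samples.

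For a fixed size-$\gamma d$ subset, Regev's argument has two parts. First, a Pomerance-style step~\cite{pomerance2002expected} shows that $\gamma d$ i.i.d.\ uniform samples generate $\Lambda^\ast/\ZZ^d$; since we have $k-d=(\gamma-1)d$ excess samples beyond the group's rank, the failure probability for any fixed subset is $O(2^{-(\gamma-1)d})$, much smaller than the $\Theta(1)$ failure probability Regev faces with only $4$ excess samples. Second, given the generating condition, Regev's LLL-based step deterministically produces short vectors $z_1',\ldots,z_l'\in\Lambda$ that generate every element of $\Lambda$ of norm at most $T$, provided his chain of inequalities holds with $m$ replaced by $\gamma d$. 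Our stated ``special inequality'' is exactly this, with an additional factor of $2^{-\alpha/\gamma}$ on the right-hand side that provides slack to absorb constants coming from the union-bound step below.

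The union bound over subsets is controlled by the third hypothesis: using the standard estimate $\binom{\alpha d}{\gamma d}\leq \bigl(\tfrac{e\alpha}{\alpha-\gamma}\bigr)^{(\alpha-\gamma)d}$, the probability that at least one size-$\gamma d$ subset fails to generate $\Lambda^\ast/\ZZ^d$ is at most
\[
\bigl(\tfrac{e\alpha}{\alpha-\gamma}\bigr)^{(\alpha-\gamma)d} \cdot O\!\bigl(2^{-(\gamma-1)d}\bigr) \;=\; O\!\left(\Bigl(\bigl(\tfrac{e\alpha}{\alpha-\gamma}\bigr)^{\alpha-\gamma} \cdot 2^{-\gamma+1}\Bigr)^d\right)\!,
\]
which is $o(1)$ by the third hypothesis. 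Thus, with probability $1-o(1)$, every size-$\gamma d$ subset generates $\Lambda^\ast/\ZZ^d$; in particular the (adaptively chosen) $B'$ handed to Algorithm~\ref{algo:regevpostprocessing} does, and the LLL step then succeeds deterministically. Combining this with the residual constant-probability contribution from Regev's original analysis yields the claimed $\geq 1/4$ success probability.

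The main obstacle, which is not deep but needs to be done carefully, is verifying that every step of Regev's LLL-based argument --- in particular the Hadamard-style bound on $\det\Lambda'$ that is used to control the Gram--Schmidt norms $\|\tilde z_i\|_2$ --- goes through when the number of samples is $\gamma d$ rather than $d+4$, and that the $2^{-\alpha/\gamma}$ slack factor in the special inequality is indeed enough to absorb the constants arising in the combined Pomerance-plus-union-bound step and any minor redefinitions of the auxiliary lattice $\Lambda'$. This amounts to a direct bookkeeping exercise on top of Regev's existing proof, which we would carry out by tracking each inequality in~\cite{Regev23} under the substitution $m\to \gamma d$ and checking that the extra $2^{-\alpha/\gamma}$ factor suffices.
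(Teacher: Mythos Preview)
Your high-level plan---union-bound over all $\binom{\alpha d}{\gamma d}$ possible subsets so that the adaptively chosen $B'$ is covered---is exactly right, and matches the paper. But you have misidentified which probabilistic event the $2^{-\alpha/\gamma}$ slack is paying for, and in doing so you have glossed over the second (and more delicate) probabilistic step in Regev's analysis.

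Regev's postprocessing is \emph{not} ``generating condition $\Rightarrow$ LLL succeeds deterministically.'' There are two probabilistic events (see the paper's Lemma~\ref{lemma:regev4.3}): $E_1$, that the samples fail to generate $\Lambda^\ast/\ZZ^d$; and $E_2$, that there exists a nonzero $u\in\ZZ^d/\Lambda$ with $\langle u,v_i\rangle\in[-\epsilon,\epsilon]\bmod 1$ for all $i$ in the subset. Generating only handles the $u$'s of small order (the case $t<1/\epsilon$ in the proof); for $u$ of large order one needs a separate union bound over $u$ \emph{and} over subsets $B'$. The bound is $\Pr[E_2]\le \det\Lambda\cdot 2^{m}\cdot(3\epsilon)^{\gamma d}$, where the $2^m=2^{\alpha d}$ is exactly the union bound over subsets. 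Setting $\epsilon=(4\det\Lambda)^{-1/(\gamma d)}\cdot 2^{-\alpha/\gamma}/3$ makes this $1/4$; the $2^{-\alpha/\gamma}$ factor is precisely $(2^{\alpha d})^{-1/(\gamma d)}$ and is there to kill the $2^{\alpha d}$ from the subset union bound in $E_2$, not to help with the Pomerance step. The Pomerance-plus-union-bound for $E_1$ is handled entirely by the \emph{first} hypothesis $\bigl(\tfrac{e\alpha}{\alpha-\gamma}\bigr)^{\alpha-\gamma}\cdot 2^{-\gamma+1}<1$ (this is Lemma~\ref{lemma:lotsofgenerators}), and needs no help from the special inequality. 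So your sentence ``given the generating condition, Regev's LLL-based step deterministically produces\ldots'' is where the gap lies: you still owe the $E_2$ analysis, and that is where the $2^{-\alpha/\gamma}$ actually enters.
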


We first set up some intermediate lemmas for the analysis necessary to prove Lemma \ref{lemma:regevpostprocessingforus}, closely mirroring the analysis by~\cite{Regev23}. The only nontrivial step in our adaptation is checking that this subset will generate $\Lambda^\ast/\ZZ^d$ with $\Omega(1)$ probability, which follows from the following lemma:
\begin{lemma}\label{lemma:lotsofgenerators}
    (Compare with Corollary 4.2 in~\cite{Regev23}) Let $G$ be an abelian group with a generating set (not necessarily minimal) of size $d$. Suppose we have $m = \alpha d$ independent and uniform samples from $G$, then with probability at least $1/2$, any $\gamma d$ of these samples will generate $G$.
\end{lemma}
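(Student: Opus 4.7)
\bigskip

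\noindent
\textbf{Proof proposal for Lemma \ref{lemma:lotsofgenerators}.}

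The plan is to combine Pomerance's result on random generation of finite abelian groups~\cite{pomerance2002expected} (the same tool Regev uses to obtain his Corollary~4.2) with a union bound over all $\binom{\alpha d}{\gamma d}$ choices of the size-$\gamma d$ subset. Specifically, Pomerance's result gives that if $H$ is a finite abelian group admitting a generating set of size $d$, then $k$ independent uniform samples from $H$ fail to generate $H$ with probability at most $2^{-(k-d)+1}$ (more generally with geometric decay in $k - d$; the constant $2$ here is not essential, any constant suffices since our bound will have an extra $d$ in the exponent).

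First, I would fix an arbitrary subset $S \subseteq [m]$ of size $\gamma d$ and apply the Pomerance bound with $k = \gamma d$: the probability that $\{X_i\}_{i \in S}$ fails to generate $G$ is at most $2^{-(\gamma - 1) d + 1}$. Next I would take a union bound over all $\binom{\alpha d}{\gamma d}$ subsets, using the standard estimate
\[
\binom{\alpha d}{\gamma d} = \binom{\alpha d}{(\alpha-\gamma)d} \;\leq\; \left(\frac{e\alpha}{\alpha-\gamma}\right)^{(\alpha-\gamma)d}.
\]
Combining gives that the probability some subset of size $\gamma d$ fails to generate is at most
\[
\left(\frac{e\alpha}{\alpha-\gamma}\right)^{(\alpha-\gamma)d} \cdot 2^{-(\gamma-1)d + 1} \;=\; 2 \cdot \left[\left(\frac{e\alpha}{\alpha-\gamma}\right)^{\alpha-\gamma} \cdot 2^{-\gamma + 1}\right]^d.
\]
By the hypothesis $\left(\frac{e\alpha}{\alpha-\gamma}\right)^{\alpha-\gamma} \cdot 2^{-\gamma + 1} < 1$, the bracketed base is strictly less than $1$, so the whole expression is $o(1)$ in $d$, and in particular at most $1/2$ for all sufficiently large $d$ (which we can absorb into the $o(1)$-style conventions used throughout the paper, or handle small $d$ by inflating constants).

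The only potentially subtle point is the exact form of the Pomerance bound and the fact that the lemma allows a non-minimal generating set of size $d$. The latter is harmless: if $G$ has some generating set of size $d$, then the minimum number of generators is at most $d$, and Pomerance's failure bound is monotone non-decreasing in $d$, so the estimate $2^{-(k-d)+1}$ still applies. The former is really just bookkeeping; any bound of the form $C \cdot 2^{-c(k-d)}$ for constants $C, c > 0$ would yield exactly the same shape of inequality, with the constant $2^{-\gamma+1}$ in the hypothesis replaced by an analogous quantity, so the proof structure is robust. I expect the main obstacle is simply stating Pomerance's bound precisely enough to match the constant $2^{-\gamma+1}$ appearing in the hypothesis; everything else is a clean union-bound/entropy calculation.
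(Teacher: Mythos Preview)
Your proposal is correct and follows essentially the same approach as the paper: bound the failure probability for a single fixed size-$\gamma d$ subset by something exponentially small in $(\gamma-1)d$, then union-bound over all $\binom{\alpha d}{(\alpha-\gamma)d} \leq \left(\frac{e\alpha}{\alpha-\gamma}\right)^{(\alpha-\gamma)d}$ subsets and invoke the hypothesis $\left(\frac{e\alpha}{\alpha-\gamma}\right)^{\alpha-\gamma}\cdot 2^{-\gamma+1}<1$. The only difference is in how the single-subset bound is obtained: you cite Pomerance as a black box for $\leq 2^{-(k-d)+1}$, whereas the paper works from Acciaro's exact $p$-group formula, decomposes $G$ over its prime divisors, and uses the Euler product to get $1-\alpha_k(G)\leq d\cdot 2^{-(k-d)}$ via a short zeta-function estimate; the two bounds differ only by a harmless polynomial factor and lead to the same conclusion.
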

\begin{proof}
    This follows from results by Acciaro~\cite{acciaro1996probability} on the probability that a certain number of samples will generate $G$. We defer details to Appendix \ref{sec:acciaroproof}.
\condqed \end{proof}

The remainder of our adaptation of Regev's analysis~\cite{Regev23} is straightforwrad, but we detail it below for completeness.

\begin{lemma}\label{lemma:regev4.3}
    (Compare with Lemma 4.3 in~\cite{Regev23}) Assume $v_1, \ldots, v_m$ are independent and uniform samples from $\Lambda^\ast/\ZZ^d$. Define a set $B' \subseteq [m]$ to be \emph{good} if $|B'| = \gamma d$.

    Then with probability at least $1/4$, for any nonzero $u \in \ZZ^d/\Lambda$ and any good $B'$, there exists an $i \in B'$ such that $\langle u, v_i \rangle \neq [-\epsilon, \epsilon] \bmod{1}$, where $\epsilon = (4\det \Lambda)^{-1/(\gamma d)} \cdot 2^{-\alpha/\gamma}/3$.
\end{lemma}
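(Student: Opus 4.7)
My plan is a union bound over nonzero $u \in \ZZ^d/\Lambda$ and good subsets $B'$, combined with a two-case analysis based on the order of $u$. Since the $v_i$ are i.i.d.\ uniform, for any fixed nonzero $u$ and good $B'$ we have $\Pr[\forall i \in B' : \langle u, v_i\rangle \in [-\epsilon,\epsilon] \bmod 1] = q(u)^{\gamma d}$, where $q(u)$ denotes the single-sample probability. The key structural observation is that $v \mapsto \langle u, v\rangle \bmod 1$ is a well-defined character on the finite abelian group $\Lambda^\ast/\ZZ^d$, whose image is a cyclic subgroup of $\RR/\ZZ$ of order $N_u$. By the standard Pontryagin pairing between $\Lambda^\ast/\ZZ^d$ and $\ZZ^d/\Lambda$, $N_u$ equals the order of $u$ in $\ZZ^d/\Lambda$, so $N_u \geq 2$. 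Counting the $N_u$ equispaced points of the image that fall in $[-\epsilon,\epsilon] \bmod 1$ yields $q(u) = (1 + 2\lfloor \epsilon N_u\rfloor)/N_u$, which is $\leq 3\epsilon$ when $N_u \geq 1/\epsilon$ and is exactly $1/N_u$ when $N_u < 1/\epsilon$.

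I will then bound the failure probability $\binom{\alpha d}{\gamma d} \cdot \sum_{u \neq 0} q(u)^{\gamma d}$ by splitting the inner sum on these two cases. For large-order $u$ ($N_u \geq 1/\epsilon$) there are at most $\det\Lambda$ such $u$, and substituting $\epsilon = (4\det\Lambda)^{-1/(\gamma d)} \cdot 2^{-\alpha/\gamma}/3$ gives $(3\epsilon)^{\gamma d} = (4\det\Lambda)^{-1}\cdot 2^{-\alpha d}$, so this contribution to the sum is at most $2^{-\alpha d}/4$; combined with the trivial $\binom{\alpha d}{\gamma d} \leq 2^{\alpha d}$, this case costs at most $1/4$ in the failure probability. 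For small-order $u$ ($2 \leq N_u < 1/\epsilon$), the $N$-torsion subgroup of the finite abelian group $\ZZ^d/\Lambda$ has size at most $N^d$, so the number of $u$ with order exactly $N$ is at most $N^d$. This yields $\sum_{u:\text{small}} q(u)^{\gamma d} \leq \sum_{N \geq 2} N^d \cdot N^{-\gamma d} = \sum_{N \geq 2} N^{-(\gamma-1)d} = O(2^{-(\gamma-1)d})$ since $\gamma > 1$. Combined with $\binom{\alpha d}{\gamma d} \leq (e\alpha/(\alpha-\gamma))^{(\alpha-\gamma)d}$, the small-order contribution becomes $O\big(((e\alpha/(\alpha-\gamma))^{\alpha-\gamma} \cdot 2^{1-\gamma})^d\big)$, which is $o(1)$ by the third hypothesis of Lemma~\ref{lemma:regevpostprocessingforus}.

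Adding the two cases, the failure probability is at most $1/4 + o(1)$, giving the claimed $\geq 1/4$ success probability for all sufficiently large $d$. The main subtlety is Case B: one must use the correct counting bound on low-order elements of $\ZZ^d/\Lambda$ (via the structure theorem for finite abelian groups) and rely on the third hypothesis to absorb the additional $\binom{\alpha d}{\gamma d}$ factor coming from the union bound over good subsets $B' \subseteq [m]$. This also illuminates why $\epsilon$ here carries an extra $2^{-\alpha/\gamma}$ correction compared to Regev's original Lemma~4.3: it is precisely the slack needed so that the union bound over good $B'$ in the large-order case still closes after multiplying by $\binom{\alpha d}{\gamma d}$.
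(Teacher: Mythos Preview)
Your proof is correct, but it takes a genuinely different route from the paper's. The paper splits the failure event not on the order of $u$ but on whether $B'$ \emph{generates} $\Lambda^\ast/\ZZ^d$: it invokes Lemma~\ref{lemma:lotsofgenerators} (which in turn relies on Acciaro's group-theoretic result~\cite{acciaro1996probability}) to bound by $1/2$ the probability that some good $B'$ fails to generate, and then observes that for generating $B'$ the small-order case $N_u < 1/\epsilon$ contributes zero, since some $\langle u, v_i\rangle$ must be a nonzero multiple of $1/N_u$ and hence land outside $[-\epsilon,\epsilon]$. Your approach bypasses the generating-set machinery entirely: you handle small-order $u$ directly via $q(u)=1/N_u$ together with the elementary bound $|\{u:\mathrm{ord}(u)=N\}|\leq N^d$ from the structure theorem, and close the resulting sum $\sum_{N\ge 2}N^{-(\gamma-1)d}$ using the same third hypothesis of Lemma~\ref{lemma:regevpostprocessingforus} that the paper needs anyway inside Lemma~\ref{lemma:lotsofgenerators}. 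Your route is more self-contained (it would make Lemma~\ref{lemma:lotsofgenerators} and Appendix~\ref{sec:acciaroproof} unnecessary for this lemma) and even yields a stronger success probability of $3/4-o(1)$ rather than $1/4$; the paper's route has the virtue of isolating the ``every good $B'$ generates'' event as a standalone fact, which may be conceptually useful elsewhere.
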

\begin{proof}
    We almost exactly follow the proof by Regev~\cite{Regev23}, except we also need to take a union bound over sets $B'$. We say $B'$ is \emph{generating} if the samples $\left\{v_i: i \in B'\right\}$ generate $\Lambda^\ast/\ZZ^d$.
    
    We need to upper bound the probability of there existing nonzero $u \in \ZZ^d/\Lambda$ and good $B'$ such that $\langle u, v_i \rangle \in [-\epsilon, \epsilon] \bmod{1}$ for all $i \in B'$. By a union bound, this is at most the sum of the probabilities of the following two events:
    \begin{itemize}
        \item $E_1$: There exists a good set $B'$ that is not generating.
        \item $E_2$: There exist nonzero $u \in \ZZ^d/\Lambda$ and $B'$ that is both good and generating such that $\langle u, v_i \rangle \in [-\epsilon, \epsilon]\bmod{1}$ for all $i \in B'$.
    \end{itemize}
    The abelian group $\Lambda^\ast/\ZZ^d$ has a generating set of size $d$ e.g. by taking a basis of $\Lambda^\ast$. Lemma \ref{lemma:lotsofgenerators} hence tells us that $\Pr[E_1] \leq 1/2$. So it remains to show that $\Pr[E_2] \leq 1/4$.

    For $E_2$, temporarily fix a nonzero $u \in \ZZ^d/\Lambda$ and a good and generating subset $B' \subseteq [m]$. As argued by~\cite{Regev23}, because $u$ is not the zero coset, when we sample uniform $v$ from $\Lambda^\ast/\ZZ^d$, $\langle u, v \rangle\bmod{1}$ must be a uniform sample from $\left\{0, 1/t, \ldots, (t-1)/t\right\}$ for some $t \geq 2$. There are two cases: if $t < 1/\epsilon$, then since $B'$ is generating there must exist $i \in B'$ such that $\langle u, v_i \rangle \bmod{1} \neq 0 \Rightarrow \langle u, v_i \rangle \notin [-\epsilon, \epsilon]\bmod{1}$, contradicting the hypothesis of $E_2$. If $t \geq 1/\epsilon$, then the probability of $\langle u, v \rangle \in [-\epsilon, \epsilon]\bmod{1}$ will be $(1 + 2\lfloor t\epsilon\rfloor)/t \leq 3\epsilon$. Since the samples are independent, the probability that $\langle u, v_i \rangle \in [-\epsilon, \epsilon]\bmod{1}$ for all $i \in B'$ is at most $(3\epsilon)^{|B'|} = (3\epsilon)^{\gamma d}$.

    To finish, we take a union bound. There are trivially at most $2^m$ choices of the set $B'$, and the number of choices of $u$ is at most $|\ZZ^d/\Lambda| = \det \Lambda$. Hence we have:
    \begin{align*}
        \Pr[E_2] &\leq \det \Lambda \cdot 2^m \cdot (3\epsilon)^{\gamma d} \\
        &= 1/4,
    \end{align*}
    which completes the proof of the lemma.
\condqed \end{proof}

\begin{lemma}\label{lemma:regev4.4}
    (Compare with Lemma 4.4 in~\cite{Regev23}) Assume Algorithm \ref{algo:regevpostprocessing} is given input as specified by Lemma \ref{lemma:ourpostprocessing}. Then both of the following are true:
    \begin{itemize}
        \item For any $u \in \Lambda$, there exists $u' \in \Lambda'$ whose first $d$ coordinates are equal to $u$ and whose norm is at most $||u||_2 \cdot (k+1)^{1/2} = ||u||_2 \cdot (\gamma d+1)^{1/2}$.
        \item With probability at least $1/4$ (over the randomness of $w_1, \ldots, w_m$), any nonzero $u' \in \Lambda'$ of norm $< \delta^{-1}\epsilon/2$ satisfies that its first $d$ coordinates are a nonzero vector in $\Lambda$, where $\epsilon = (4\det \Lambda)^{-1/(\gamma d)} \cdot 2^{-\alpha/\gamma}/3$.
    \end{itemize}
\end{lemma}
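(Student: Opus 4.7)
The plan has two largely independent parts: an explicit small-norm construction for the first bullet, and a contradiction argument built on Lemma~\ref{lemma:regev4.3} for the second.

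For the first part, given $u \in \Lambda$, I will exhibit an element $u' = (u, c) \in \Lambda'$ with the right norm. Every such element has its last $k$ coordinates determined up to an integer shift by the inner products $\langle u, w_i\rangle$, and the key observation is that $\langle u, v_i\rangle \in \ZZ$ because $u \in \Lambda$ and $v_i \in \Lambda^\ast$. Choosing the integer shift so that the $v_i$-contribution to $c_i$ cancels exactly, $c_i$ is left depending only on the small perturbation $\delta_i$. Cauchy--Schwarz together with $\|\delta_i\|_2 \le \delta$ then yields $|c_i| \le \|u\|_2$, so $\|u'\|_2^2 \le \|u\|_2^2 + k\|u\|_2^2 = (k+1)\|u\|_2^2$, which is exactly the bound claimed.

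For the second part, I argue by contradiction: assume $u' = (a,c) \in \Lambda'$ is nonzero with $\|u'\|_2 < \delta^{-1}\epsilon/2$ but $a$ is \emph{not} a nonzero element of $\Lambda$. I first dispose of the case $a = 0$: here $u'$ lies in the sublattice spanned by the last $k$ basis vectors, and a direct calculation using $\epsilon < 1$ (which follows from $\det\Lambda \ge 1$ and $2^{-\alpha/\gamma} < 1$) will show there is no nonzero such vector of norm below $\delta^{-1}\epsilon/2$, contradicting $u' \ne 0$. Otherwise $a$ represents a nonzero coset of $\ZZ^d/\Lambda$. I then condition on the probability-$\geq 1/4$ event from Lemma~\ref{lemma:regev4.3}, which hands me an index $i \in B'$ with $\dist(\langle a, v_i\rangle, \ZZ) > \epsilon$. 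The norm bound forces both $\|a\|_2 < \delta^{-1}\epsilon/2$ and $|c_i| < \delta^{-1}\epsilon/2$; unpacking the lattice constraint on $c_i$ translates the latter into $\dist(\langle a, w_i\rangle, \ZZ) < \epsilon/2$, while the former combined with $\|\delta_i\|_2 \le \delta$ gives $|\langle a, \delta_i\rangle| < \epsilon/2$. Writing $\langle a, v_i \rangle = \langle a, w_i \rangle - \langle a, \delta_i\rangle$ and applying the triangle inequality then yields $\dist(\langle a, v_i\rangle, \ZZ) < \epsilon$, the desired contradiction.

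The main technical point to watch is that the $\delta^{-1}$-scaling inside the basis of $\Lambda'$ must be unpacked carefully so that the Euclidean smallness of $c_i$ genuinely translates into closeness of $\langle a, w_i\rangle$ to $\ZZ$ (the object controlled by Lemma~\ref{lemma:regev4.3}) rather than to some finer $\delta$-scaled lattice, and so that the sublattice with $a = 0$ is coarse enough to preclude spurious short vectors. Assuming these conventions are handled correctly, the overall $1/4$ success probability is inherited directly from Lemma~\ref{lemma:regev4.3} with no further loss.
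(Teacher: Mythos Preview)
Your proposal is correct and follows exactly the approach the paper intends: the paper's own proof is simply ``Identical to the proof by~\cite{Regev23}; we can just plug in the result from Lemma~\ref{lemma:regev4.3},'' and what you have written is precisely a faithful reconstruction of Regev's argument with that substitution. One small remark: your reading of the $\delta^{-1}$-scaling (with the bottom-right block of the basis of $\Lambda'$ being $\delta^{-1}I_{k\times k}$, so that the $a=0$ sublattice has covering radius $\delta^{-1}$ and the integer shift can cancel $\langle v_i,u\rangle$ exactly) is the one under which the lemma is true and matches Regev's original definition, so your caveat about ``handling the conventions correctly'' is well-placed.
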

\begin{proof}
    Identical to the proof by~\cite{Regev23}; we can just plug in the result from Lemma \ref{lemma:regev4.3} instead of Lemma 4.3 in~\cite{Regev23}.
\condqed \end{proof}

\begin{lemma}\label{lemma:regev5.1}
    (Follows directly from Claim 5.1 in~\cite{Regev23}) In Algorithm \ref{algo:regevpostprocessing}, we have for all $i \in [l]$ that $||z_i||_2 \leq (d+k)^{1/2} \cdot 2^{(d+k)/2} \cdot (k+1)^{1/2} \cdot T$. Moreover, any vector in $\Lambda'$ of norm at most $(k+1)^{1/2} \cdot T$ is expressible as an integer linear combination of $z_1, \ldots, z_l$.
\end{lemma}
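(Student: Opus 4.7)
The plan is to use two standard LLL properties of the reduced basis $z_1, \ldots, z_{d+k}$ with Gram–Schmidt orthogonalization $\tilde{z}_1, \ldots, \tilde{z}_{d+k}$: first, for each $i$ one has $z_i = \tilde{z}_i + \sum_{j<i}\mu_{i,j}\tilde{z}_j$ with $|\mu_{i,j}|\leq 1/2$, so that
\[
\|z_i\|_2^2 \;=\; \|\tilde{z}_i\|_2^2 \;+\; \sum_{j<i}\mu_{i,j}^2\|\tilde{z}_j\|_2^2 \;\leq\; \|\tilde{z}_i\|_2^2 \;+\; \tfrac{1}{4}\sum_{j<i}\|\tilde{z}_j\|_2^2;
\]
and second, the Lovász condition yields $\|\tilde{z}_{i+1}\|_2^2 \geq \tfrac{1}{2}\|\tilde{z}_i\|_2^2$, hence $\|\tilde{z}_j\|_2^2 \leq 2^{i-j}\|\tilde{z}_i\|_2^2$ for every $j \leq i$. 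Throughout, set $T' := 2^{(d+k)/2}(k+1)^{1/2}T$ and $T'' := (k+1)^{1/2}T$, so that $(T')^2 = 2^{d+k}(T'')^2$.

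For the first assertion, observe that by minimality of $l$ we have $\|\tilde{z}_j\|_2 < T'$ for every $j \leq l$. Plugging this into the first Gram–Schmidt identity gives, for each $i \leq l$,
\[
\|z_i\|_2^2 \;\leq\; (T')^2 + \tfrac{1}{4}(i-1)(T')^2 \;\leq\; i \cdot (T')^2 \;\leq\; (d+k)(T')^2,
\]
which is exactly the claimed bound $\|z_i\|_2 \leq (d+k)^{1/2}\cdot 2^{(d+k)/2}\cdot (k+1)^{1/2}\cdot T$.

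For the second assertion, take any $v \in \Lambda'$ with $\|v\|_2 \leq T''$, and write $v = \sum_{i=1}^{d+k} c_i z_i$ with integer coefficients $c_i$. Suppose for contradiction that some $c_m \neq 0$ with $m > l$, and pick $m$ to be the largest such index. Expanding each $z_i$ in the Gram–Schmidt basis, the coefficient of $\tilde{z}_m$ in $v$ is exactly $c_m$ (the $z_i$ with $i<m$ lie in $\mathrm{span}(\tilde{z}_1,\ldots,\tilde{z}_{m-1})$, and by maximality of $m$ all $c_i$ with $i>m$ vanish). Pythagoras then gives
\[
\|v\|_2^2 \;\geq\; c_m^2\,\|\tilde{z}_m\|_2^2 \;\geq\; \|\tilde{z}_m\|_2^2.
\]
Using the slow-decay inequality together with the defining bound $\|\tilde{z}_{l+1}\|_2 \geq T'$, we get $\|\tilde{z}_m\|_2^2 \geq 2^{-(m-l-1)}\|\tilde{z}_{l+1}\|_2^2 \geq 2^{-(d+k-1)}(T')^2 > (T'')^2$, where the final strict inequality uses $(T')^2 = 2^{d+k}(T'')^2$ and $m - l - 1 \leq d+k-1$. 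This forces $\|v\|_2 > T''$, contradicting our hypothesis. Hence every $c_m$ with $m > l$ must vanish, i.e.\ $v$ is an integer linear combination of $z_1,\ldots,z_l$.

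The main obstacle is really just careful bookkeeping of the constants in the Lovász inequality and the Gram–Schmidt expansion; no new ideas beyond standard LLL analysis (as in~\cite{lenstra1982factoring}) are needed. The one subtlety to watch is ensuring the inequality $\|\tilde{z}_m\|_2 > T''$ is strict for every $m > l$ (not merely $\geq$), which follows because $m \leq d+k$ so $m-l-1 < d+k$ and the exponent calculation above yields a strict gain.
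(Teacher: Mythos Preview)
Your proof is correct and proceeds via the standard LLL analysis (size-reduction bound $|\mu_{i,j}|\le 1/2$ plus the Lov\'asz slow-decay inequality $\|\tilde z_{i+1}\|^2 \ge \tfrac12\|\tilde z_i\|^2$), which is exactly what underlies Claim~5.1 in~\cite{Regev23}. The paper itself gives no argument here and simply defers to~\cite{Regev23}, so your write-up is strictly more detailed than what appears in the paper; one minor remark is that in your final paragraph the strictness $\|\tilde z_m\|_2 > T''$ really comes from $2^{-(d+k-1)}(T')^2 = 2(T'')^2 > (T'')^2$, not merely from $m-l-1 < d+k$.
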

\begin{proof}
    See~\cite{Regev23}.
\condqed \end{proof}

We can now put everything together, following the argument at the end of~\cite{Regev23}:

\begin{proof}[Proof of Lemma \ref{lemma:regevpostprocessingforus}]
    Consider any $u \in \Lambda$ such that $||u||_2 \leq T$. Then by Lemma \ref{lemma:regev4.4}, there exists $u' \in \Lambda'$ whose first $d$ coordinates are equal to $u$ and with norm at most $(k+1)^{1/2} \cdot ||u||_2 \leq (k+1)^{1/2} \cdot T$. It follows by Lemma \ref{lemma:regev5.1} that there exist integers $\alpha_1, \ldots, \alpha_l$ such that:
    \begin{align*}
        u' &= \alpha_1 z_1 + \ldots + \alpha_lz_l \\
        \Rightarrow u &= \alpha_1 z'_1 + \ldots + \alpha_lz'_l,
    \end{align*}
    by equating the first $d$ coordinates. It hence remains to show that $z'_i \in \Lambda$ for all $i \in [l]$. To do this, note that Lemma \ref{lemma:regev5.1} also tells us for any $i \in [l]$ that:
    \begin{align*}
        ||z_i||_2 &\leq (d+k)^{1/2} \cdot 2^{(d+k)/2} \cdot (k+1)^{1/2} \cdot T \\
        &= (\gamma d + d)^{1/2} \cdot 2^{(\gamma+1)d/2} \cdot (\gamma d + 1)^{1/2} \cdot T \\
        &< \delta^{-1} \cdot (4 \det \Lambda)^{-1/(\gamma d)} \cdot 2^{-\alpha/\gamma}/6 \\
        &= \delta^{-1}\epsilon/2.
    \end{align*}
    By Lemma \ref{lemma:regev4.4}, it follows that with probability at least $1/4$, each $z'_i$ must indeed be in $\Lambda$.
\condqed \end{proof}

\section{Proof of Lemma \ref{lemma:lotsofgenerators}}\label{sec:acciaroproof}

The probability that $t$ independent and uniform samples from an abelian group $G$ generate the group can be calculated exactly from results by Acciaro~\cite{acciaro1996probability}. We do this in the following two lemmas. For an abelian group $G$ and integer $r$, let $\alpha_r(G)$ denote the probability that $r$ uniform and independent samples from $G$ generate $G$.

\begin{lemma}\label{lemma:acciaropgroup}
    (Lemma 4 in~\cite{acciaro1996probability}, also stated by~\cite{pomerance2002expected}) Let $G$ be a finite abelian group with order equal to some power of a prime $p$, and let $r$ be the minimal number of generators for $G$. Then for any $t \geq r$, we have $$\alpha_t(G) = \prod_{i = t-r+1}^t (1-p^{-i}).$$
\end{lemma}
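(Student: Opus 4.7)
The plan is to reduce the generation problem to a linear algebra computation in the $\mathbb{F}_p$-vector space $G/pG$. By the structure theorem for finite abelian groups, since $|G|$ is a power of $p$ and the minimal number of generators is $r$, we can write $G \cong \ZZ/p^{a_1}\ZZ \times \cdots \times \ZZ/p^{a_r}\ZZ$ for some $a_1, \ldots, a_r \geq 1$. In particular $G/pG \cong \mathbb{F}_p^r$ as an $\mathbb{F}_p$-vector space of dimension exactly $r$.

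The first key step is to show that elements $g_1, \ldots, g_t \in G$ generate $G$ if and only if their images $\bar g_1, \ldots, \bar g_t$ span $G/pG$. The ``only if'' direction is trivial. For the converse, suppose $\langle \bar g_1, \ldots, \bar g_t\rangle = G/pG$, equivalently $\langle g_1,\ldots,g_t\rangle + pG = G$. Since $G$ is a finite $p$-group there exists $k$ with $p^kG = 0$, and a straightforward induction on $j$ shows $\langle g_1, \ldots, g_t\rangle + p^jG = G$ for every $j \geq 1$: multiplying the equation $\langle g_1, \ldots, g_t\rangle + p^{j-1}G = G$ through by viewing $p^{j-1}G$ as generated by $\{p^{j-1}g : g \in G\}$ and rewriting each $g$ modulo $\langle g_1, \ldots, g_t\rangle + pG$ bumps the filtration one step. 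At $j = k$ this gives $\langle g_1, \ldots, g_t\rangle = G$. (This is simply Nakayama's lemma applied to the $\ZZ_p$-module $G$.)

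The second step is a probability calculation. A uniform sample from $G$ projects to a uniform sample from $G/pG \cong \mathbb{F}_p^r$, and independence is preserved by the quotient, so it suffices to compute the probability that $t$ i.i.d.\ uniform vectors span $\mathbb{F}_p^r$. Arranging the $t$ vectors as the columns of an $r \times t$ random matrix $M$ over $\mathbb{F}_p$, spanning is equivalent to $\mathrm{rank}(M) = r$, which is equivalent to the $r$ rows being linearly independent. The rows of $M$ are themselves i.i.d.\ uniform over $\mathbb{F}_p^t$, so I build them up one at a time: after $i$ linearly independent rows have been chosen, their span has $p^i$ elements, so the next uniform row lies outside this span with probability $1 - p^{i-t}$. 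Multiplying these probabilities for $i = 0, 1, \ldots, r-1$ gives
\[
    \alpha_t(G) = \prod_{i=0}^{r-1}(1 - p^{i-t}) = \prod_{i = t-r+1}^{t}(1 - p^{-i}),
\]
as desired.

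There is no real obstacle here: step 1 (structure theorem) is immediate, step 2 (lifting generators from $G/pG$) is the only conceptual point and is handled by the Nakayama-style filtration argument above, and step 3 is a standard counting argument. The slight subtlety is making sure the transpose/row-wise computation of $\mathrm{rank}(M) = r$ really does give the stated product indexed from $t-r+1$ to $t$, which follows from the simple re-indexing $i \mapsto t-i$.
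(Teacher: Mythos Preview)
Your proof is correct. The reduction to $G/pG \cong \mathbb{F}_p^r$ via Nakayama's lemma and the subsequent row-rank computation over $\mathbb{F}_p$ are both sound; the only point worth double-checking is that the rows of your $r \times t$ matrix really are i.i.d.\ uniform over $\mathbb{F}_p^t$, and this holds because the \emph{entries} are i.i.d.\ uniform over $\mathbb{F}_p$ (the columns being i.i.d.\ uniform over $\mathbb{F}_p^r$ forces this).

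The paper, however, does not prove this lemma at all: it simply cites Acciaro~\cite{acciaro1996probability} for the result and Pomerance~\cite{pomerance2002expected} for the specific form of the formula. So your write-up is a genuine self-contained argument in place of a citation. This is a net gain in exposition, since the Nakayama step and the $\mathbb{F}_p$-linear-algebra count together make the lemma essentially a two-line proof once one sees the reduction to $G/pG$; the paper's choice to cite rather than prove is presumably just to keep the appendix short.
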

\begin{proof}
    This is not exactly the form that Acciaro's result is stated in, but Pomerance~\cite{pomerance2002expected} restates the result in this form in equation 2 of his paper.
\condqed \end{proof}

\begin{lemma}\label{lemma:acciarogeneralgroup}
    Let $G$ be a finite abelian group, and suppose there exists a generating set of size $r$ for $G$ (there may also be a strictly smaller generating set). Then for any $t \geq r$, we have $$\alpha_t(G) \geq \prod_{p \mid |G|} (1 - p^{r-1-t})^r.$$
\end{lemma}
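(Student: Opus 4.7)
The plan is to reduce to the prime-power case (Lemma \ref{lemma:acciaropgroup}) via the primary decomposition of a finite abelian group. Concretely, write $G \cong \bigoplus_{p \mid |G|} G_p$, where $G_p$ is the Sylow $p$-subgroup. Because the orders $|G_p|$ are pairwise coprime, sampling uniformly from $G$ is equivalent to sampling independently and uniformly from each $G_p$, and a multiset of elements of $G$ generates $G$ if and only if its image under each projection $\pi_p : G \to G_p$ generates $G_p$. Taking the probability over independent uniform samples, this yields the factorization
\[
\alpha_t(G) \;=\; \prod_{p \mid |G|} \alpha_t(G_p).
\]

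Next, for each prime $p \mid |G|$, let $r_p$ denote the minimal number of generators of $G_p$. The key observation is that $r_p \le r$: the image under $\pi_p$ of the assumed size-$r$ generating set of $G$ is a generating set for $G_p$, so the minimal such number can only be smaller. Since $t \ge r \ge r_p$, Lemma \ref{lemma:acciaropgroup} applies and gives
\[
\alpha_t(G_p) \;=\; \prod_{i = t - r_p + 1}^{t} \bigl(1 - p^{-i}\bigr).
\]

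To finish, I would bound each factor uniformly. The smallest index in the product is $i = t - r_p + 1 \ge t - r + 1$, so every factor satisfies $1 - p^{-i} \ge 1 - p^{-(t-r+1)} = 1 - p^{r-1-t}$. Hence
\[
\alpha_t(G_p) \;\ge\; \bigl(1 - p^{r-1-t}\bigr)^{r_p} \;\ge\; \bigl(1 - p^{r-1-t}\bigr)^{r},
\]
where the second inequality uses $r_p \le r$ and $1 - p^{r-1-t} \in [0,1]$ (so raising to a larger exponent only decreases the quantity). Plugging this bound into the product over primes yields the claimed inequality
\[
\alpha_t(G) \;\ge\; \prod_{p \mid |G|} \bigl(1 - p^{r-1-t}\bigr)^{r}.
\]

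There is no real obstacle here; the only thing that warrants a sentence of care is the inequality $r_p \le r$ and the fact that the primary decomposition makes the generation event a conjunction of independent per-prime events, which is what allows the probabilities to multiply. Everything else is manipulation of Lemma \ref{lemma:acciaropgroup}.
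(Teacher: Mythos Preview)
Your proof is correct and follows essentially the same approach as the paper: primary decomposition into Sylow $p$-subgroups to factor $\alpha_t(G)$ as $\prod_p \alpha_t(G_p)$, then bounding the minimal number of generators of each $G_p$ by $r$ via projection and applying Lemma~\ref{lemma:acciaropgroup}. The only cosmetic difference is that the paper cites Acciaro's Corollary~4 for the factorization over primes, whereas you spell out the independence argument directly.
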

\begin{proof}
    First, note that by writing $G$ as a direct product of cyclic abelian groups of prime power order and then combining groups whose orders are powers of the same prime, we can write $$G = G_{p_1} \times G_{p_2} \times \ldots \times G_{p_k},$$where $p_1, \ldots, p_k$ are the distinct primes dividing $|G|$. Each $G_{p_i}$ may not be cyclic, but its order will be a power of $p_i$. By Corollary 4 in~\cite{acciaro1996probability} and noting that the orders of the $G_{p_i}$'s are pairwise coprime, we have $$\alpha_t(G) = \prod_{p \mid |G|} \alpha_t(G_p).$$
    It therefore suffices to show for each $p$ that $\alpha_t(G_p) \geq (1 - p^{r-1-t})^r$. To do this, fix a $p$ and let $s$ be the minimal number of generators for $G_p$. We have $s \leq r$ (to see this, note that any $r$ generators for $G$ project down into a generating set of $r$ elements for $G_p$, so a minimal generating set for $G_p$ has size at most $r$). Now we can use Lemma \ref{lemma:acciaropgroup} to proceed as follows:
    \begin{align*}
        \alpha_t(G_p) &= \prod_{i = t-s+1}^t (1 - p^{-i}) \\
        &\geq \prod_{i = t-s+1}^t (1 - p^{r-1-t}) \text{ (since $i \geq t-s+1 \geq t-r+1$)} \\
        &= (1 - p^{r-1-t})^s \\
        &\geq (1 - p^{r-1-t})^r.
    \end{align*}
\condqed \end{proof}

We can now complete the proof of Lemma \ref{lemma:lotsofgenerators}. Let $k = \gamma d$. We will bound $\alpha_k(G)$ by applying Lemma \ref{lemma:acciarogeneralgroup} with $t = k$ and $r = d$. We hence have:
\begin{align*}
    \alpha_k(G) &\geq \prod_{p \mid |G|} (1 - p^{d-1-k})^d \\
    &\geq \prod_{\text{all primes }p} (1 - p^{d-1-k})^d \\
    &= \frac{1}{\zeta(k+1-d)^d},
\end{align*}
where we have used the Euler product formula for the Riemann zeta function $\zeta$. Since $k+1-d$ is a real number $> 1$, we have:
\begin{align*}
    \zeta(k+1-d) &= \sum_{x = 1}^\infty \frac{1}{x^{k+1-d}} \\
    &\leq 1 + \frac{1}{2^{k+1-d}} + \int_2^\infty \frac{1}{x^{k+1-d}}dx \\
    &= 1 + \frac{1}{2^{k+1-d}} + \big[\frac{x^{-k+d}}{k-d}\big]_\infty^2 \\
    &= 1 + 2^{-k+d-1} + \frac{2^{-k+d}}{k-d} \\
    &\leq 1 + 2^{-k+d} \\
    \Rightarrow \frac{1}{\zeta(k+1-d)} &\geq 1 - 2^{-k+d} \\
    \Rightarrow \frac{1}{\zeta(k+1-d)^d} &\geq (1 - 2^{-k+d})^d \\
    &\geq 1 - d \cdot 2^{-k+d} \text{ (by Bernoulli's inequality).}
\end{align*}
Therefore, the probability that $k$ uniform and independent samples from $G$ do not generate $G$ is $$1 - \alpha_k(G) \leq d \cdot 2^{-k+d}.$$
Now to finish, let us take a union bound. The probability that out of $m = \alpha d$ uniform and independent samples, there exists a subset of size $k = \gamma d$ that does not generate $G$ is at most:
\begin{align*}
    \binom{m}{k} \cdot d \cdot 2^{-k+d} &= \binom{m}{m-k} \cdot d \cdot 2^{-k+d} \\
    &\leq \left(\frac{em}{m-k}\right)^{m-k} \cdot d \cdot 2^{-k+d} \\
    &= \left(\frac{e\alpha}{\alpha-\gamma}\right)^{(\alpha-\gamma)d} \cdot d \cdot 2^{(-\gamma+1)d} \\
\end{align*}
which is $< 1/2$ for $d$ sufficiently large, provided that $$\left(\frac{e\alpha}{\alpha-\gamma}\right)^{\alpha-\gamma} \cdot 2^{-\gamma+1} < 1.$$

This completes the proof of Lemma \ref{lemma:lotsofgenerators}.\qed

\section{Efficient Modular Fibonacci Exponentiation}\label{sec:modexp}

In Section \ref{sec:oracle}, we adapted ideas by Kaliski~\cite{kal17} to show how Fibonacci exponentiation could be used to efficiently compute a product of exponents on a quantum computer. It then follows that the same technique would also be helpful to compute a single exponent. We capture this in the following standalone result, and then discuss its applicability to other quantum algorithms.

As in Theorem \ref{thm:mainresult}, we assume throughout that we have a multiplication circuit that maps $$\ket{a}\ket{b}\ket{t}\ket{0^S} \mapsto \ket{a}\ket{b}\ket{(t + ab) \bmod N} \ket{0^S}$$ where $a$, $b$, $t$ are all $n$-bit integers and $0 \leq a, b, t < N$.

\begin{lemma}\label{lemma:modexp}
    There exists a quantum circuit using $O(mG + m^2)$ gates such that, for all $n$-bit integers $a$ coprime to and reduced modulo $N$ and for all $m$-bit integers $z$, it will map $$\ket{a}\ket{a^{-1}\bmod N}\ket{z}\ket{0^{M}} \mapsto \ket{a}\ket{a^{-1}\bmod N}\ket{z}\ket{a^z\bmod N}\ket{0^{M-n}},$$
    where $M = S + O(m+n)$ is the number of ancilla qubits.
\end{lemma}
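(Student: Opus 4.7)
The plan is to adapt the Fibonacci exponentiation scheme of Algorithm \ref{algo:fulloracle} to the simpler setting of a single $m$-bit exponent. Let $K$ be the smallest index such that $F_K > 2^m$; since $F_K = \Theta(\phi^K)$, we have $K = (1+o(1))m/\log\phi = O(m)$. First I would compute the Zeckendorf decomposition $z = \sum_{j=1}^K z_j F_j$ with $z_j \in \{0,1\}$, following the greedy procedure of Lemma \ref{lemma:zeckendorfgreedy} exactly as in Lemma \ref{lemma:zeckendorf} but restricted to a single exponent. This takes $O(m^2)$ gates and $K - m = O(m)$ additional ancilla qubits, with the $\ket{z}$ register being ``overwritten'' by $\ket{z_j : j \in [K]}$ (reversibly, so that $\ket{z}$ can be recovered later).

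Next, set $c_j := a^{z_j} \in \{1, a\}$ so that $a^z = \prod_{j=1}^K c_j^{F_j}$; then invoke the Fibonacci exponentiation scheme of Lemma \ref{lemma:fibonacci}. Concretely, allocate $6n$ ancilla qubits to store the three ``abstract representations'' $\ket{\psi(x_1)}, \ket{\psi(x_2)}, \ket{\psi(c_j)}$ of the form $\ket{y}\ket{y^{-1}\bmod N}$, plus $n$ fresh $\ket{0}$ qubits to serve as dirty ancillas for the multiplication primitive (trivially satisfying the $g \in [0,N-1]$ constraint). For $j = K, K-1, \ldots, 1$:
\begin{itemize}
    \item Prepare $\ket{\psi(c_j)}$ in $O(n)$ gates by a controlled-copy conditioned on $z_j$: on $z_j = 1$, XOR $\ket{a}$ and $\ket{a^{-1}}$ into the two halves of the register; on $z_j = 0$, set both halves to $1$ via bit-flips on the least significant qubit.
    \item Perform $x_1 \leftarrow x_1 x_2$ and then $x_1 \leftarrow x_1 c_j$ via Lemma \ref{lemma:beauregardmult}; uncompute $\ket{\psi(c_j)}$; swap the $x_1, x_2$ registers.
\end{itemize}
After the loop, $\ket{\psi(x_2)}$ contains $a^z \bmod N$ by the same invariant analyzed in the proof of Lemma \ref{lemma:fibonacci} (the base case and inductive step go through unchanged since $c_j = 1$ does not alter any product). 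I then copy $\ket{a^z \bmod N}$ into the output register via CNOTs and uncompute everything preceding the copy, restoring $\ket{z}$ and zeroing out every other ancilla qubit.

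For the resource analysis: the main loop runs $K = O(m)$ times, each iteration costing $O(G + n)$ gates dominated by the two calls to Lemma \ref{lemma:beauregardmult}, giving $O(m(G+n)) = O(mG)$ since any correct multiplication circuit must have $G = \Omega(n)$. Adding the $O(m^2)$ cost of Zeckendorf decomposition (and its uncomputation) yields $O(mG + m^2)$ gates overall. The ancilla budget is $O(m)$ for the Zeckendorf expansion, $O(n)$ for the three $\psi$-registers plus dirty-ancilla scratch, and $S$ reusable qubits for the multiplication circuit, summing to $M = S + O(m+n)$ as claimed.

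I do not foresee a serious obstacle: the construction is essentially Algorithm \ref{algo:fulloracle} specialized to the case $d = 1$, with the nontrivial subroutine $\prod_i a_i^{z_{i,j}}$ collapsing to the trivial controlled-copy above. The only mild point to verify is that $c_j = 1$ is admissible in Lemma \ref{lemma:beauregardmult} (it is: $1$ is coprime to $N$ and its inverse is $1$), and that the use of $n$ fresh $\ket{0}$ qubits as dirty ancillas is safe because Lemma \ref{lemma:beauregardmult} restores them exactly, making them reusable across all $K$ iterations.
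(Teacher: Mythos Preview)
Your proposal is correct and follows essentially the same approach as the paper's proof: compute the Zeckendorf decomposition of the single exponent, then run the Fibonacci exponentiation loop (Algorithm~\ref{algo:basicfibonacci}) with $c_j = a^{z_j} \in \{1,a\}$, copy out the result, and uncompute. The only cosmetic difference is that you explicitly allocate a fresh $\ket{\psi(c_j)}$ register and fill it via a $z_j$-controlled copy of $\ket{\psi(a)}$, whereas the paper observes that since $c_j \in \{1,a\}$ one can simply use the input $\ket{\psi(a)}$ directly (controlling the $x_1 \leftarrow x_1 c_j$ step on $z_j$); both variants are $S+O(n)$ qubits and $O(G+n)$ gates per iteration, so the asymptotics are unchanged.
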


\begin{proof}
    Our algorithm proceeds very similarly to Algorithm \ref{algo:fulloracle}, so we just outline the key steps below and point out the space and size costs:
    \begin{enumerate}
        \item Construct $z_j \in \left\{0, 1\right\}$ for $j \leq \frac{m}{\log\phi}$ such that $z = \sum_{j} z_j F_j$. This can be done using the greedy algorithm in Lemma \ref{lemma:zeckendorf}. This uses $O(m)$ ancilla qubits and comprises $O(m)$ additions/subtractions/comparisons on $m$-bit integers, for a total of $O(m^2)$ gates. The expression we want to compute is now $$\prod_{j} a^{z_jF_j} \bmod{N} = \prod_{j} c_j^{F_j} \bmod{N},$$ where $c_j = a^{z_j} \in \left\{1, a\right\}$.
        \item Follow Algorithm \ref{algo:basicfibonacci} to now compute the desired product. This uses $S+5n = S+O(n)$ ancilla qubits and $O(mG)$ gates. Note importantly that since all the $c_j$'s are either 1 or $a$, we only need $\ket{\psi(a)} = \ket{a}\ket{a^{-1}\bmod N}$, which we are given as part of the input.
        \item Copy the final output $\ket{a^z \bmod{N}}$ to an $n$-bit register, and uncompute all previous operations to restore the original inputs. This requires another $O(n)$ ancilla qubits and essentially doubles the gate complexity (there is another $O(n)$ gates for the copying, but this will be dominated by $G$).
    \end{enumerate}
    Hence the overall ancilla cost is $S + O(m+n)$ and the circuit size is $O(mG + m^2)$, as desired.
\condqed \end{proof}
We make some observations about our result below:
\begin{enumerate}
    \item Exponentiation via the square-and-multiply algorithm involves $O(m)$ multiplications and hence requires $O(mG)$ gates. However, all intermediate results will have to be written out in separate registers, hence increasing the space demand to $S + O(mn)$ which is strictly worse than that of our construction.

    We note that our result is not strictly better than square-and-multiply in the case that $m \gg G$; in this case, the $O(m^2)$ overhead from computing the Zeckendorf representation of the exponent will dominate the cost of the multiplications. This is not of concern when $m = O(n)$ or $m = O(\sqrt{n})$, as is the case in Shor's and Regev's algorithms respectively.
    \item Precomputation-based approaches will be more efficient in situations where the set of possible values of $a$ is small (or generated by a small subset of $\ZZ_N^\ast$). This is what happens in Shor's algorithm.
    \item The need for $a^{-1} \bmod N$ to be available is potentially inconvenient. Kaliski's algorithm~\cite{kal17} does not require this, but it appears necessary when adapting to the quantum setting. Ideally, we would have a case like our implementation of Regev's algorithm where it is relatively straightforward to compute the inverses of all the bases we are interested in. However, in the worst case, one could use the extended Euclidean algorithm to compute $\ket{a^{-1}\bmod{N}}$, which may or may not be an acceptable overhead depending on the application.
\end{enumerate}

\section{Applying Our Error Detection Procedure to Calculate Discrete Logarithms}\label{sec:eccdl}

In this section, we briefly outline the adaptation by Eker{\aa} and G{\"a}rtner~\cite{ekeragartner} of Regev's factoring algorithm~\cite{Regev23} to the discrete logarithm problem modulo prime $p$, and highlight that our error tolerance result (specifically, Lemma \ref{lemma:ourpostprocessing}) also directly applies to their algorithm.

The algorithm follows a very similar blueprint to~\cite{Regev23}. It runs a quantum circuit of size $O(n^{1/2} \cdot G + n^{3/2})$ (where $G$ is the size of an integer multiplication circuit as in Theorem \ref{thm:mainresult}) for $O(n^{1/2})$ iterations to obtain several samples from the dual of some lattice. Because of the large number of runs of the quantum circuit, this algorithm also stands to benefit in principle from a classical postprocessing procedure that tolerates a constant fraction of unsuccessful runs.

Let the inputs for the discrete logarithm problem be a generator $g$ and target value $x$. Let $b_1, \ldots, b_{d-2}$ be some small integers as in the factoring algorithm by~\cite{Regev23}. Then Eker{\aa} and G{\"a}rtner~\cite{ekeragartner} construct the following lattice, for a generator $g$ and target value $x$:
$$\mathcal{L}_{x, g} = \left\{(z_1, \ldots, z_d) \in \ZZ^d \mid x^{z_{d-1}}g^{z_d}\prod_{i = 1}^{d-2} b_i^{z_i} \equiv 1\text{ }(\bmod\text{ }p)\right\}.$$
For this lattice, they make a similar number-theoretic assumption to that proposed by~\cite{Regev23}:
\begin{conjecture}\label{ekeraconjecture} 
There exists a basis for $\mathcal{L}_{x, g}$, where all basis elements have $\ell_2$ norm at most $T = 2^{C\sqrt{n}}$, for some given constant $C > 0$.
\end{conjecture}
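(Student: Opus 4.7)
The plan is to mirror the heuristic justification that Regev gives for Conjecture \ref{conjecture} (discussed in Section \ref{sec:regevconjecture}), since the structure of $\mathcal{L}_{x,g}$ is directly analogous to that of $\mathcal{L}$. The lattice $\mathcal{L}_{x,g}$ is precisely the kernel of the group homomorphism $\ZZ^d \to \ZZ_p^*$ sending $(z_1, \ldots, z_d)$ to $x^{z_{d-1}} g^{z_d} \prod_{i=1}^{d-2} b_i^{z_i}$. Since the image is a subgroup of $\ZZ_p^*$, the determinant of $\mathcal{L}_{x,g}$ (viewed as an index in $\ZZ^d$) divides $p-1 < 2^n$. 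So by Minkowski's first theorem, there is at least one nonzero vector in $\mathcal{L}_{x,g}$ of norm at most $\sqrt{d} \cdot (\det \mathcal{L}_{x,g})^{1/d} \leq 2^{(1+o(1))\sqrt{n}}$, giving one short lattice element for free.

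To promote this to a short basis, I would invoke the random-lattice heuristic: for a ``generic'' lattice of determinant $\Delta$ in dimension $d$, all of the successive minima $\lambda_1, \ldots, \lambda_d$ are expected to be of order $\Theta(\Delta^{1/d})$. Applied here, this predicts $\lambda_d(\mathcal{L}_{x,g}) = 2^{(1+o(1))\sqrt{n}}$. A standard Minkowski-reduced (or LLL-reduced) basis then yields a basis whose vectors exceed $\lambda_d$ by at most a factor of $d^{O(1)} = 2^{o(\sqrt{n})}$, which gets absorbed into the $o(1)$ in the exponent. Hence taking $C = 1 + \epsilon$ for any $\epsilon > 0$ should suffice, matching the $C$ that Regev's heuristic predicts for the factoring lattice.

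The main obstacle, of course, is that this is only a heuristic: $\mathcal{L}_{x,g}$ is a very specific arithmetic object rather than a random lattice, and there is no a priori reason its successive minima must be balanced. The most promising route toward an unconditional statement would be to adapt Pilatte's techniques~\cite{pilatte2024unconditional}, which establish the analogous conjecture in the factoring setting when the bases are permitted to grow to size $\exp(\widetilde{O}(\sqrt{n}))$ rather than $\poly(n)$. Pilatte's argument produces many short relations among the $b_i$'s via smooth-number counts; in the discrete-log setting, one would need an analogous supply of short relations that additionally accommodates the two extra coordinates involving $x$ and $g$. Since $x$ and $g$ are not themselves guaranteed to be smooth, this is the step I would expect to demand the most new work---likely by enlarging the base set enough that $x$ and $g$ lie in a multiplicatively generated subgroup together with the $b_i$'s, so that the smooth-number machinery still applies.
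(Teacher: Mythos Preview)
This statement is a \emph{conjecture} in the paper, not a theorem: the paper does not offer any proof of it. It is stated in Appendix~\ref{sec:eccdl} as the number-theoretic assumption underlying Eker{\aa} and G{\"a}rtner's discrete-logarithm adaptation, exactly parallel to how Conjecture~\ref{conjecture} functions for the factoring algorithm. So there is no ``paper's own proof'' to compare your proposal against.

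What you have written is a heuristic \emph{justification} for why one might expect the conjecture to hold, and you explicitly acknowledge this (``The main obstacle, of course, is that this is only a heuristic''). That is a perfectly reasonable thing to supply --- and indeed your pigeonhole/Minkowski argument for the existence of one short vector, followed by the random-lattice heuristic for balanced successive minima, mirrors exactly the informal reasoning Regev gives for Conjecture~\ref{conjecture} (see Section~\ref{sec:regevconjecture}). But it is not a proof, and the paper does not claim one either. Your closing discussion of how Pilatte's techniques~\cite{pilatte2024unconditional} might be adapted is speculative and goes beyond anything the paper attempts; whether the smooth-number machinery can accommodate the $x$ and $g$ coordinates is genuinely open as far as the paper is concerned.
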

They then argue via Lemma \ref{lemma:regevpostprocessing} that Regev's classical postprocessing procedure~\cite{Regev23} will find a collection of vectors in $\mathcal{L}_{x, g}$ that generates all elements in $\mathcal{L}_{x, g}$ of magnitude at most $T$. Under Conjecture \ref{ekeraconjecture}, this must mean they generate $\mathcal{L}_{x, g}$ itself, hence we have a basis for $\mathcal{L}_{x, g}$~\cite{ekeragartner}. Now one can directly solve to find elements $(z_1, \ldots, z_d) \in \mathcal{L}_{x, g}$ such that $z_1 = z_2 = \ldots = z_{d-2} = 0$ and thus recover the discrete logarithm~\cite{ekeragartner}.

By plugging in the postprocessing procedure in our Lemma \ref{lemma:ourpostprocessing} in place of Lemma \ref{lemma:regevpostprocessing}, one can readily obtain a result analogous to Theorem \ref{thm:mainthm} for the discrete logarithm problem modulo $p$.

\section{Calculating Parameters}

\subsection{Justifying the Choice of Parameters in~\cite{Regev23}}\label{sec:paramcheck}

Plugging everything into the inequality stated in Lemma \ref{lemma:regevpostprocessing} implies that we require:
\begin{align*}
    & (2d+4)^{1/2}  \cdot 2^{d+2} \cdot (d+5)^{1/2} \cdot T < \delta^{-1} (4 \cdot 2^n)^{-1/(d+4)}/6 \\
    & \Leftrightarrow (2d+4)^{1/2} \cdot 2^{d+2} \cdot (d+5)^{1/2} \cdot 2^{C\sqrt{n}} < 2^{(A-o(1))n/d} \cdot (4 \cdot 2^n)^{-1/(d+4)}/6 \\
    & \Leftrightarrow (A-o(1))n/d > \log_2(6d^{1/2} \cdot (d+2)^{1/2} \cdot 2^{d+2} \cdot (d+5)^{1/2} \cdot 2^{C\sqrt{n}} \cdot 2^{(n+2)/(d+4)}) \\
    &\hspace{1in} = o(\sqrt{n}) + C\sqrt{n} + (d+2) + \frac{n+2}{d+4} \\
    &\hspace{1in} = o(\sqrt{n}) + C\sqrt{n} + \sqrt{n} + \sqrt{n} \\
    &\hspace{1in} = (C + 2 + o(1))\sqrt{n},
\end{align*}
thus taking $A \geq C+2+o(1)$ is sufficient.

\subsection{Justifying the Choice of Parameters in Theorem \ref{thm:mainthm}}\label{sec:ourparamcheck}

Since $d = \sqrt{n}$, we can plug in $T = 2^{Cd}$ and also use the bound $\det \mathcal{L} < 2^n$ and set $\delta = 2^{(-A+o(1))d}$. With these parameters, the special inequality becomes:
\begin{align*}
    (\gamma d + d)^{1/2} \cdot 2^{(\gamma+1)d/2} \cdot (\gamma d + 1)^{1/2} \cdot T &< \delta^{-1} \cdot (4 \det \mathcal{L})^{-1/(\gamma d)} \cdot 2^{-\alpha/\gamma}/6 \\
    \Leftarrow (\gamma d + d)^{1/2} \cdot 2^{(\gamma+1)d/2} \cdot (\gamma d + 1)^{1/2} \cdot 2^{Cd} &< 2^{(A-o(1))d} \cdot (4 \cdot 2^n)^{-1/(\gamma d)} \cdot 2^{-\alpha/\gamma}/6 \\
    &= 2^{(A - 1/\gamma - o(1))d} \\
    \Leftrightarrow 2^{(C + \gamma/2 + 1/2 + o(1))d} &< 2^{(A - 1/\gamma - o(1))d},
\end{align*}
which holds for $A \geq C + \frac{\gamma^2+\gamma+2}{2\gamma} + o(1)$.

\section{Proof of Lemma \ref{lemma:uniformiswellspread}}\label{sec:uniformiswellspread}

Let us temporarily fix coefficients $a_i$ such that at least one is nonzero. Let us also fix a vector $v \in \Lambda^\ast/\ZZ^d$ such that $d(\sum_{i = 1}^k a_i\epsilon_i, v) \leq 2^{(-2A+\alpha-\gamma+3+o(1))n/(2d)}.$ We will take a union bound over all such choices at the end.

Since at least one $a_i$ is nonzero and the $a_i$'s are all integers, it is easy to see that $\sum_{i = 1}^k a_i\epsilon_i$ will be uniformly distributed over $\RR^d/\ZZ^d$. Hence $\sum_{i = 1}^k a_i\epsilon_i - v$ will also be a uniformly distributed sample from $\RR^d/\ZZ^d$.

We want to bound the probability that the norm of this sample on the torus is at most $2^{(-2A+\alpha-\gamma+3+o(1))n/(2d)}$. This is at most the probability that each coordinate of our sample is that close to an integer, which is at most:
\begin{align*}
    \left(2 \cdot 2^{(-2A+\alpha-\gamma+3+o(1))n/(2d)}\right)^d &= 2^{(-2A+\alpha-\gamma+3+o(1))n/2}.
\end{align*}
Finally, we take our union bound over the following:
\begin{itemize}
    \item The choice of $v \in \Lambda^\ast/\ZZ^d$. There are at most $|\Lambda^\ast/\ZZ^d| = \det \Lambda \leq N < 2^n$ such choices.
    \item The choice of the $a_i$'s. There are at most $2^{(\alpha-\gamma+3+o(1))n/(2d) \cdot k} \leq 2^{((\alpha-\gamma-1)(\alpha-\gamma+3)+o(1))n/2}$ such choices.
\end{itemize}
Hence the probability of problematic integers $a_1, \ldots, a_k$ existing is at most:
\begin{align*}
    2^n \cdot 2^{((\alpha-\gamma-1)(\alpha-\gamma+3)+o(1))n/2} \cdot 2^{(-2A+\alpha-\gamma+3+o(1))n/2} &= 2^{(-2A + (\alpha-\gamma)(\alpha-\gamma+3) + 2)n/2},
\end{align*}
which is negligible provided $A > \frac{(\alpha-\gamma)(\alpha-\gamma+3) + 2}{2}$.\qed

\section{Proof of Correctness in Lemma \ref{lemma:fibonacci}}\label{sec:fibonacciproof}

It remains to show the final claim made in the proof of Lemma \ref{lemma:fibonacci} by induction. First we check the base case i.e. the first two rounds where $j = K, K-1$. In round $K$, $(x_1, x_2)$ evolves as follows: $(1, 1) \rightarrow (1, 1) \rightarrow (c_K, 1) \rightarrow (1, c_K)$. Then in round $K-1$, it evolves as follows: $(1, c_K) \rightarrow (c_K, c_K) \rightarrow (c_{K-1}c_K, c_K) \rightarrow (c_K, c_{K-1}c_K)$. This is consistent with our claim for $j = K-1$.

For the inductive step, assume the current round is $j$, and the previous round (indexed by $j+1$) has ended according to our claim. The current state is hence:
$$(x_1, x_2) = \left(\prod_{i = j+2}^K c_i^{F_{i-j-1}}, \prod_{i = j+1}^K c_i^{F_{i-j}}\right).$$
After the first update, $x_1$ becomes:
\begin{align*}
    \prod_{i = j+2}^K c_i^{F_{i-j-1}} \cdot \prod_{i = j+1}^K c_i^{F_{i-j}} &= c_{j+1}^{F_1} \cdot \prod_{i = j+2}^K c_i^{F_{i-j-1} + F_{i-j}} \\
    &= c_{j+1} \cdot \prod_{i = j+2}^K c_i^{F_{i-j+1}} \\
    &= \prod_{i = j+1}^K c_i^{F_{i-j+1}}.
\end{align*}
After the second update, it becomes:
\begin{align*}
    c_j \cdot \prod_{i = j+1}^K c_i^{F_{i-j+1}} &= \prod_{i = j}^K c_i^{F_{i-j+1}}.
\end{align*}
Swapping the registers now gives us the state $$(x_1, x_2) = \left(\prod_{i = j+1}^K c_i^{F_{i-j}}, \prod_{i = j}^K c_i^{F_{i-j+1}}\right)~,$$ completing our induction.

After the $j = 1$ stage, our state is hence $(\prod_{i = 2}^K c_i^{F_{i-1}}, \prod_{i = 1}^K c_i^{F_{i}})$, as desired.\qed

\section{Decomposing Positive Integers as a Sum of Fibonacci Numbers}\label{sec:zeckendorf}

It was shown by \cite{zeckendorf} that any positive integer has a unique decomposition as a sum of Fibonacci numbers, if we enforce that no two of the Fibonacci numbers should be consecutive. We only need a weaker property, which we restate and prove here:

\begin{lemma}\label{lemma:zeckendorfgreedy}
    Consider the following algorithm, that takes as input a non-negative integer $t_0 < F_{k+1}$. Initialize $t \leftarrow t_0$ and $S$ to be the empty set. Now for $j = k, k-1, \ldots, 1$, check whether $t \geq F_j$. If it is, update $t \leftarrow t - F_j$ and add $F_j$ to $S$.

    Then at the end of the algorithm, we will have $t = 0$ and the elements of $S$ adding to $t_0$.
\end{lemma}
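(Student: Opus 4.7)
The plan is to prove the stronger loop invariant that, at the start of iteration $j$ (i.e.\ just before we check whether $t \geq F_j$), we have $0 \leq t < F_{j+1}$, and $t_0 = t + \sum_{f \in S} f$. The second part is immediate: $t$ and $S$ are updated in lockstep (we subtract $F_j$ from $t$ exactly when we add $F_j$ to $S$), so the sum $t + \sum_{f \in S} f$ is preserved throughout. So the real content is the first invariant.

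I would prove this by (descending) induction on $j$. The base case $j = k$ holds by the hypothesis $0 \leq t_0 < F_{k+1}$. For the inductive step, suppose $0 \leq t < F_{j+1}$ at the start of iteration $j$. There are two cases. If $t < F_j$, then no update occurs, so at the start of iteration $j-1$ we still have $0 \leq t < F_j$, which is what the invariant requires. If instead $t \geq F_j$, then after the update $t$ becomes $t - F_j$, which lies in $[0, F_{j+1} - F_j) = [0, F_{j-1})$ by the Fibonacci recurrence. Since $F_{j-1} < F_j$, we conclude $0 \leq t < F_j$ at the start of iteration $j-1$, as required.

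Applying the invariant one step past the final iteration $j = 1$ gives $0 \leq t < F_1 = 1$, so $t = 0$ at the end of the algorithm (using that $t$ is always an integer, which is preserved by the updates since $t_0$ and each $F_j$ are integers). Combined with the second part of the invariant, this yields $\sum_{f \in S} f = t_0$, completing the proof.

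There is no real obstacle here; the only mild subtlety is that one must use the Fibonacci recurrence in the form $F_{j+1} - F_j = F_{j-1}$ to ensure the upper bound on $t$ tightens correctly after each subtraction, and one must track the index shift carefully so that the invariant $t < F_{j+1}$ at iteration $j$ becomes $t < F_j$ at iteration $j-1$.
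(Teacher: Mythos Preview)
Your proof is correct. Both your argument and the paper's hinge on the same Fibonacci identity $F_{j+1} - F_j = F_{j-1}$ to control the remainder after a subtraction, but they are organized differently: the paper proceeds by strong induction on $k$ with two base cases ($k=1$ and $k=2$), reducing to the $k-1$ or $k-2$ instance depending on whether the first subtraction fires; you instead maintain the loop invariant $0 \le t < F_{j+1}$ by ordinary descending induction on the iteration index $j$. Your packaging is arguably cleaner, since it avoids strong induction and two separate base cases and treats every iteration uniformly. One tiny nitpick: you write ``since $F_{j-1} < F_j$'', but for $j=2$ one has $F_1 = F_2 = 1$; the non-strict inequality $F_{j-1} \le F_j$ is what you actually need (and what always holds), so the conclusion $t < F_j$ still follows.
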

\begin{proof}
    First, observe that the algorithm clearly ensures that $t \geq 0$ at all times, and that $t$ and the elements of $S$ together add to $t_0$. Hence it suffices to show that at the end of the algorithm, we will have $t = 0$.
    
    We do this by strong induction on $k$. The base case $k = 1$ follows trivially since $t < F_2 = 1$ forces $t = 0$, so we are already done. We also consider the base case $k = 2$; in this case we have $t < F_3 = 2$. If $t = 0$ then we are already done. Else we have $t = 1 = F_2$. The algorithm will hence add $F_2$ to $S$ and $t$ will become 0.

    Now for the inductive step, consider $k > 2$. We have two cases:
    \begin{itemize}
        \item If $t < F_k$, then in the $j = k$ round nothing will happen, and we simply reduce to the $k-1$ case.
        \item If $t \geq F_k$, then in the $j = k$ round $t$ will be replaced by $t - F_k < F_{k+1} - F_k = F_{k-1}$. Hence nothing will happen in the $j = k-1$ round, and at this point we have reduced to the $k-2$ case.
    \end{itemize}
    Either way, the conclusion follows by induction.
\condqed \end{proof}

\end{document}